\newcommand{\quantg}{\mathcal{Q}^g}
\newcommand{\cbg}{\mathcal{C}^g}
\newcommand{\paa}{C_{11}}
\newcommand{\pab}{C_{12}}
\newcommand{\lbb}{C_{13}}
\newcommand{\lbc}{C_{14}}
\newcommand{\lbd}{\alpha}
\newcommand{\lbe}{\lambda_{\mathrm{y}}}
\newcommand{\lbf}{\lambda_{\mathrm{xw}}}
\newcommand{\lbg}{\bar{\lambda}}
\newcommand{\lbh}{C_{15}}
\newcommand{\lbi}{C_{16}}
\newcommand{\lbimod}{C_{16,r,\ell,\mathbf{s},\hat{\mathbf{s}}}}
\newcommand{\pac}{C_{17}}
\newcommand{\taa}{C_{18}}
\newcommand{\tab}{C_{19}}
\newcommand{\tac}{C_{20}}
\newcommand{\tba}{C_{21}}
\newcommand{\tcc}{C_{22}}
\newcommand{\tcd}{C_{23}}
\newcommand{\tca}{C_{24}}
\newcommand{\tcb}{C_{25}}
\newcommand{\tce}{C_{26}}
\newtheorem{theorem}{Theorem}
\newtheorem{proposition}{Proposition}
\newtheorem{lemma}{Lemma}
\title{Distributed Beamforming in Wireless Multiuser Relay-Interference Networks with Quantized Feedback}
\begin{document}
\renewcommand{\qedsymbol}{$\blacksquare$}
\author{Erdem Koyuncu and Hamid Jafarkhani, \emph{Fellow}, IEEE}
\renewcommand{\thefootnote}{}
\footnotetext{
This work was presented in part at IEEE Global Communications Conference (GLOBECOM), Nov. 2009.}
\footnotetext{The authors are with the Center for Pervasive Communications and Computing, University of California, Irvine, Irvine CA 92697-2625 USA. Email: \{ekoyuncu, hamidj\}@uci.edu.}
\renewcommand{\thefootnote}{\arabic{footnote}}
\maketitle

\begin{abstract}
We study quantized beamforming in wireless amplify-and-forward relay-interference networks with any number of transmitters, relays, and receivers. We design the quantizer of the channel state information to minimize the probability that at least one receiver incorrectly decodes its desired symbol(s). Correspondingly, we introduce a generalized diversity measure that encapsulates the conventional one as the \textit{first-order} diversity. Additionally, it incorporates the \textit{second-order} diversity, which is concerned with the transmitter power dependent logarithmic terms that appear in the error rate expression. First, we show that, regardless of the quantizer and the amount of feedback that is used, the relay-interference network suffers a second-order diversity loss compared to interference-free networks. Then, two different quantization schemes are studied: First, using a global quantizer, we show that a simple relay selection scheme can achieve maximal diversity. Then, using the localization method, we construct both fixed-length and variable-length local (distributed) quantizers (fLQs and vLQs). Our fLQs achieve maximal first-order diversity, whereas our vLQs achieve maximal diversity. Moreover, we show that all the promised diversity and array gains can be obtained with arbitrarily low feedback rates when the transmitter powers are sufficiently large. Finally, we confirm our analytical findings through simulations.
\end{abstract}
\begin{keywords}
Wireless relay network, beamforming, interference, distributed vector quantization, symbol error probability, diversity gain, array gain.
\end{keywords}
\section{Introduction}
While it has been demonstrated in several studies that cooperation can greatly improve the performance and reliability of wireless network communications\cite{nabar1, bolcksei1, laneman2, laneman1, sendonaris1}, interference still remains to be a fundamental issue in cooperative network design. Most of the previous work on cooperative networks relies on orthogonal channel allocation so that different transmitters do not interfere with each other. However, allocating orthogonal channels for each user may not be desirable due to time and bandwidth limitations\cite{oyman1, li2}. In such cases, one should explore effective ways to deal with interference while preserving cooperative diversity gains.

Multiple antenna interference cancelation techniques are very effective when dealing with interference in cooperative networks\cite{jing8}. They offer reasonable performance with low decoding complexity. In this work, we consider a different approach. To be able to study the ultimate performance limits, we do not put any restrictions on our decoders. We would like to design a cooperation scheme that achieves maximal diversity benefits, and thus provides high reliability, even in the presence of multiuser interference.

For networks with a single transmitter-receiver pair and no interference, network beamforming using amplify-and-forward (AF) relays has shown to achieve the maximal spatial diversity\cite{larsson1, jing1}. However, the optimal beamforming policy requires one or two real numbers to be broadcasted from the receiver to the relays. Using distributed beamforming with quantized instantaneous channel state information (CSI), it is possible to obtain both maximal diversity, as well as high array gain with only a few feedback bits from the receiver\cite{koyuncu1, zhao2, ahmed1}. A special case of quantized feedback for cooperative networks is the relay selection scheme\cite{anghel1,hasna1,jing6}. It has been formally shown in \cite{koyuncu1} that, for a network with $R$ parallel relays, the relay selection scheme provides the maximum diversity $R$.

Quantized feedback schemes have also been studied for non-cooperative multiuser interference networks. In \cite{jindal1}, the author considers zero-forcing beamforming with finite rate feedback in multiple-input multiple-output (MIMO) broadcast channels. Interference alignment for multiuser interference networks with limited feedback has been studied in \cite{thukral1}. Unlike what we shall study in this work, where we seek to optimize the reliability of the system in terms of the diversity gain, the goal of the above two papers was to optimize the data transmission rate in terms of the multiplexing gain. A common conclusion that we can infer from both studies is that, in order to achieve the same multiplexing gain as a system with perfect CSI, the feedback rate should be increased at least logarithmically with the transmitter power; any constant feedback rate results in a complete loss of multiplexing gain. This is unlike point-to-point systems where feedback is not even necessary to achieve the maximal multiplexing gain\cite{jindal1}, and a few feedback bits is usually sufficient to transmit with rates that are close to the one with perfect CSI\cite{roh1}. The feedback requirements of interference networks appears to be considerably higher than that of interference-free networks.

What are the feedback requirements if instead we would like to ensure maximal reliability in the presence of interference? One goal of this paper is to answer this question for cooperative networks with $K$ transmitters, $L$ receivers, and $R$ parallel AF relays. We assume that each transmitter and each relay has its own short term power constraint. The transmitters do not have any CSI. Each receiver knows its own receiving channels and the channels from the transmitters to the relays. Each relay only knows the magnitudes of its own receiving channels. Each relay and each receiver also has partial CSI provided by feedback. The feedback information represents a quantized beamforming vector. In that sense, this paper is also a generalization of single-user quantized network beamforming \cite{koyuncu1} to multiuser interference networks. On the other hand, such a generalization is quite challenging because of the distributed nature of the network. Let us now describe some of these challenges and our approaches to address them.

In interference networks, the relays amplify both noise and interference, which results in completely different problem formulations and solutions. Second, there are multiple receivers that have different optimal beam directions. As a result, it is difficult to design a scheme that can provide a reasonable performance to all the users.

Another difficulty is related to acquiring feedback information from several separated receivers. The optimal beamforming policy requires the full CSI of the interference network. In practice however, none of the receivers can obtain such information via training methods. We thus consider two different quantization schemes: In the first scheme, the feedback information is provided by a global quantizer (GQ) that knows the entire CSI. We use this hypothetical quantizer to analyze the performance limits of network beamforming in the presence of interference. In the more practical second scheme, we use distributed local quantizer (LQ) encoders at each receiver. Each receiver can access only a part of the CSI, and provides its own feedback information for the relays and the other receivers.

In \cite{koyuncu2}, we introduced a general systematic LQ design method, called localization, in which one synthesizes an LQ out of an existing GQ using high-rate scalar quantization combined with entropy coding. In the same work, we described an application of the method to MIMO broadcast channels. In this work, we apply it to design LQs for our network model. Therefore, our GQ has another important purpose other than the one we have previously mentioned: It  will also serve as the basis of our LQs.

We would also like to note that the LQ design in this paper distinguishes itself from the one in \cite{koyuncu2} in several ways, even though the underlying localization method will be the same. First, we need to consider a totally different and much more complicated distortion function. Second, the high-rate scalar quantizers, that form the crucial part of the method, should be designed accordingly. Third, the performance analysis of the resulting LQs is thus different and more complicated. As a result, in this work, we will only analyze the performance of localization for a particular class of GQs that are based on relay selection.

Our performance measure is what we call the \textit{network error rate} (NER). Given a fixed channel state, it is the probability that at least one user incorrectly decodes its desired symbol(s). In that sense, any receiver can be interested in the symbols transmitted by any subset of transmitters.

We use a generalized diversity measure to characterize the asymptotic behavior of the NER as the transmitter powers grow to infinity. In what follows, we describe this measure together with its motivations: Suppose that a wireless communication system achieves an error rate of $C(P^{\alpha} \log^{\beta}P)^{-1}$, where $P$ is the transmitter power constraint and $C$ is a constant that is independent of $P$. Then, we call $\alpha$ and $\beta$, the first-order and the second-order diversity gains, respectively, and say that the scheme achieves diversity $(\alpha, \beta)$. Such a definition of diversity is more precise than the traditional one as we demonstrate by an example: For two hypothetical communication systems with diversity gains $(\alpha, \beta_1)$, and $(\alpha, \beta_2)$, where $\alpha\geq 1$ and $\infty>\beta_1 > \beta_2>-\infty$, the former always outperforms the latter for all $P$ sufficiently large. On the other hand, the traditional definition, according to which the diversity gain is $\alpha$ for both systems, fails to distinguish between the asymptotic performance of the two.

The main contributions of this paper can be summarized as follows: First, we show that, regardless of the quantizer and the amount of feedback that is used, the maximal achievable diversity of our network model is $(R, -R)$ when $K>1$, whereas it is $(R, 0)$ when $K=1$.\footnote{The case $K=1$ corresponds to a relay-broadcast network that does not suffer any multiuser interference. Even though our main goal in this paper is to analyze interference networks, we present the extension of our results to broadcast networks, so as to demonstrate the detrimental effects of interference in a comparative manner.} In other words, the relay-interference network suffers from a second-order diversity loss compared to an interference-free network that can achieve diversity $(R, 0)$ with $K=L=1$\cite{koyuncu1}. Then, we construct a relay-selection based fixed-length GQ (fGQ) that can achieve maximal diversity for any $K$. Next, using our fGQ and the localization method, we design both fixed-length and variable-length LQs (fLQs and vLQs). Our fLQs can achieve diversity $(R, -2R)$ when $K>1$, and diversity $(R, -R)$ when $K=1$, using $R$ feedback bits per receiver. They show that it is possible to achieve very high reliability using a fixed number of feedback bits.
On the other hand, our vLQs can achieve maximal diversity gain for any $K$. Moreover, the feedback rate they require decays to zero as the transmitter powers grow to infinity. Therefore, they provide a very fortunate answer to the question that we have posed earlier: In a relay-interference network, it is possible to achieve maximal reliability using arbitrarily low feedback rates per receiver, when the transmitter powers are sufficiently large. Another desirable property of our vLQs is the fact that the array gain they provide can be made arbitrarily close to the one provided by the fGQ.

The rest of the paper is organized as follows: In Section \ref{secIntro}, we introduce our network model, performance and diversity measures, and problem definition. In Section \ref{alowbound}, we show that the maximal diversity of our network model is $(R, -R)$. In Sections \ref{secglobal} and \ref{seclocaldddd}, we introduce our GQ and LQ designs, respectively. Numerical results are provided in Section \ref{secsims}. In Section \ref{secconcs}, we draw our major conclusions. An upper bound on the probability density function (PDF) and the cumulative distribution function (CDF) of a frequently used random variable (RV) is provided in Appendix \ref{dendededen}. Some other technical proofs are provided in Appendices \ref{proofoftheorem1} through \ref{proofoftheorem4}.

\textit{Notation: }For a logical statement $\mathtt{S}$, ``$\mathtt{S}$ is true for $x$ sufficiently large'' means that there exists $x_0 <\infty$ such that for all $x\geq x_0$, $\mathtt{S}$ is true. $\|\cdot\|$ indicates the 2-norm, $\|\cdot\|_{\infty}$ is the infinite norm, $\langle\cdot|\cdot\rangle$ is the inner product. $\mathbb{C}$, $\mathbb{R}$ and $\mathbb{Z}^+$ represent the sets of complex numbers, real numbers, and positive integers, respectively. $\det(\mathbf{A})$ is the determinant of a square matrix $\mathbf{A}$. $\mathbf{A}^T$, $\mathbf{A}^H$ denote the transpose and the Hermitian transpose of $\mathbf{A}$, respectively. $\mathtt{P}$ represents the probability. $f_X(\cdot)$ is the PDF, and $F_X(\cdot)$ is the CDF of an RV $X$. $\mathtt{E}[X]$ is the expected value of $X$. $X\sim
\Gamma(k,\theta)$ means that $X$ is a Gamma RV with $f_X(x)= \frac{x^{k-1}e^{-x/\theta}}{\theta^k \Gamma(k)}$ for $x> 0$ and $f_X(x)=0$ for $x \leq 0$, $k,\theta > 0$. For any sets $\mathcal{A}$ and $\mathcal{B}$, $\mathcal{A}-\mathcal{B}$ is the set of elements in $\mathcal{A}$, but not in $\mathcal{B}$. $|\mathcal{A}|$ is the cardinality of $\mathcal{A}$. $\mathcal{A}^r = \left\{(a_1,\ldots,a_r):a_1,\ldots, a_r\in\mathcal{A}\right\}$, $r\in\mathbb{Z}^+$, is the cartesian power. $\gamma_e = 0.577...$ is the Euler-Mascheroni constant, $e = \exp(1)$, and $\emptyset$ is the empty set. For a real-valued function $f:\mathcal{C} \rightarrow \mathbb{R}$ with $\mathcal{C}\subset\mathbb{C}^K$, let
$\mathcal{M} \triangleq \{\mathbf{x}:\mathbf{x}\in\mathcal{C},\,f(\mathbf{x}) = \max_{\mathbf{x}'\in\mathcal{C}} f(\mathbf{x'})\}$. Then, $\arg\max_{\mathbf{x}\in\mathcal{C}}f(\mathbf{x})$ is the unique vector $\mathbf{x}^*$ with the property that $\mathbf{x}^* \prec \mathbf{x},\,\forall \mathbf{x}\in\mathcal{M}$, and ``$\prec$'' represents some partial ordering (e.g. lexicographical ordering) of complex vectors. We define $\arg\min(\cdot)$ in a similar manner. Finally, $\log(\cdot)$ is the natural logarithm, $\log_2(\cdot)$ is the logarithm to base $2$, $\cosh(\cdot)$ is the hyperbolic cosine, $\mathrm{Q}(\cdot)$ is the Gaussian tail function, $\Gamma(\cdot)$ is the gamma function, $E_1(x) \triangleq \int_1^{\infty} e^{-1}e^{-xt}\mathrm{d}t$ is the exponential integral, and $K_\nu(\cdot)$ is the modified Bessel function of the second kind of order $\nu$.
\section{Network Model and Problem Statement}
\label{secIntro}
\subsection{System Model}
The block diagram of the system is shown in Fig. \ref{blockdiagram}. We
have a relay network with $K$ transmitters, $L$ receivers, and $R$ parallel relays. The cases $K=1$ and $K>1$ correspond to a relay-broadcast network and a relay-interference network, respectively. We assume that there is no direct link between the transmitters and the receivers.
\begin{figure}[h]
\centering
\scalebox{1.5}{\epsfig{file = 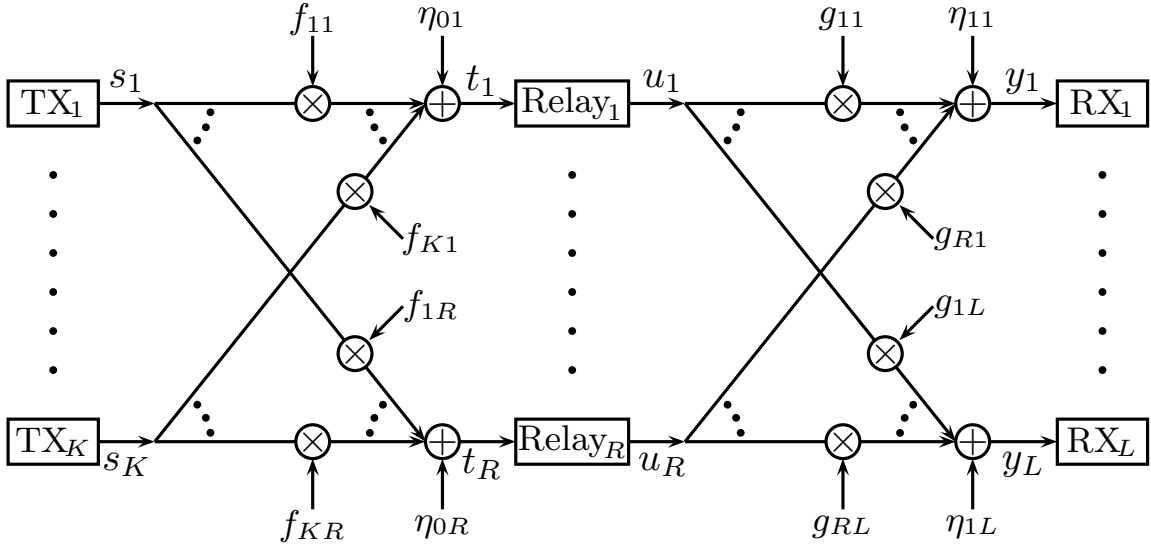, bbllx = 111pt, bblly = 563pt, bburx = 399pt, bbury = 700pt}}
\caption{System block diagram. In the figure, $\mathrm{TX}_k$, $\mathrm{Relay}_r$, and $\mathrm{RX}_{\ell}$ stand for the $k$th transmitter, $r$th relay, and the $\ell$th receiver, respectively.}
\label{blockdiagram}
\end{figure}

Denote the channel from the $k$th transmitter to the $r$th relay by
$f_{kr}$ and the channel from the $r$th relay to the $\ell$th receiver by $g_{r\ell}$.
Let $\mathbf{h} = (f_{11},\ldots,f_{KR},g_{11},\ldots,g_{RL})$ denote the channel state of the entire network.
We assume that the entries of $\mathbf{h}$ are independent and distributed as
$f_{kr} \sim \mathcal{CN}(0,\sigma_{f_{kr}}^2)$, $g_{r\ell} \sim \mathcal{CN}(0,\sigma_{g_{r\ell}}^2)$ with finite variances $\sigma_{f_{kr}},\sigma_{g_{r\ell}} < \infty,\,\forall r,k,\ell$.
For brevity, let $\mathbf{g}_\ell \triangleq (g_{1\ell},\ldots,g_{R\ell})$, which denotes all the channels from the relays to the $\ell$th receiver.

Only the short-term power constraint is considered, which means
that for every symbol transmission, the average power levels used at the $k$th transmitter
and the $r$th relay are no larger than $P_{S_k}$ and $P_{R_r}$, respectively.

We assume a quasi-static channel model; the channel
realizations vary independently from one channel state to another, while within each channel state the channels remain constant. We assume that
the $\ell$th receiver knows $\mathbf{g}_{\ell}$ and each relay knows the magnitudes of its own receiving channels, i.e. the $r$th relay knows $|f_{kr}|,\,k=1,\ldots,K$. Some possible procedures to reveal the channel states to the receivers can be found in \cite{ahmed1, koyuncu1}.
For completeness, we give an outline of one possible way: The $\ell$th destination can acquire the knowledge of $g_{r\ell}$ by training from the $r$th relay. The $r$th relay can acquire the knowledge of $|f_{kr}|$ using training sequences from the $k$th source. It can also amplify and forward its received training signal from the source to the destination, so that the destination can estimate the product of $f_{kr}$ and $g_{r\ell}$. As $g_{r\ell}$ is known by the destination, $f_{kr}$ can be estimated.

Each relay and each receiver also has partial CSI provided by feedback. In this paper, we consider two different feedback schemes, namely the global and local quantization schemes.
\subsection{Global Quantization}
\label{gquantizer}
Our global quantizer $\mathtt{GQ}$ is defined by a global encoder and a global decoder, as described in Fig. \ref{globop}. The global encoder consists of two parts. For each channel state, first, a GQ encoder $\mathtt{QGE}:\mathbb{C}^{R(K+L)}\rightarrow\mathcal{I}^\mathtt{G}$ maps the channel realization $\mathbf{h}$ to an index in $\mathcal{I}^\mathtt{G}\triangleq\{1,\ldots,|\mathcal{I}^\mathtt{G}|\}$, the index set of the codebook elements. Then, a lossless global compressor $\mathtt{GQC}:\mathcal{I}^\mathtt{G}\rightarrow \mathcal{J}^\mathtt{G}$ maps this index to a binary description.
\begin{figure}[h]
\centering
\scalebox{1}{\epsfig{file = 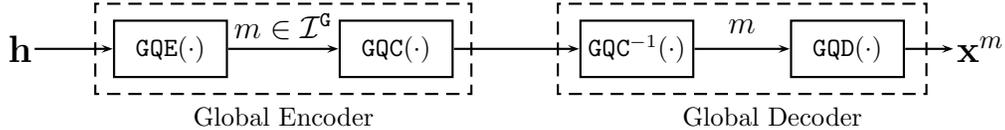, bbllx = 144pt, bblly = 432pt, bburx = 519pt, bbury = 480pt}}
\caption{Global quantizer operation.}
\label{globop}
\end{figure}

Let $\mathfrak{l}(j)$ denote the length of a binary description $j$. We call $\mathtt{GQ}$ a \textit{fixed-length} GQ (fGQ) if $\mathfrak{l}(j) = \lceil \log_2 |\mathcal{I}^\mathtt{G}| \rceil,\,\forall j\in\mathcal{J}^\mathtt{G}$. Otherwise, we call $\mathtt{GQ}$ a \textit{variable-length} GQ (vGQ).

In either case, the global encoder feeds back $\mathtt{GQC}(\mathtt{GQE}(\mathbf{h}))$, using $\mathfrak{l}(\mathtt{GQC}(\mathtt{GQE}(\mathbf{h})))$ bits. The feedback bits are received by the global decoders without any errors or delays.

There is a unique global decoder at each relay and each receiver, which comprises of the complementary parts to the global encoder: A lossless decompressor and a quantizer decoder. First the decompressor $\mathtt{GQC}^{-1}:\mathcal{J}^\mathtt{G}\rightarrow \mathcal{I}^\mathtt{G}$ reconstructs the quantization index from the received binary description. It is followed by the quantizer decoder $\mathtt{GQD}:\mathcal{I}^\mathtt{G}\rightarrow\mathcal{C}^\mathtt{G}$ which maps the quantization index to a codebook element. The codebook $\mathcal{C}^\mathtt{G}$ has $|\mathcal{I}^\mathtt{G}|$ elements,  $\mathcal{C}^\mathtt{G} = \{\mathbf{x}_1, \ldots, \mathbf{x}_{|\mathcal{I}^\mathtt{G}|}\}$. Without loss of generality, for
$\mathtt{GQE}(\mathbf{h})=m$, we set $\mathtt{GQD}(m)=\mathbf{x}_m\in\cbg$. For the rest of this paper, we will use the well-known notation $\quantg(\mathbf{h})\triangleq(\mathtt{GQD}\circ\mathtt{GQC}^{-1}\circ \mathtt{GQC} \circ \mathtt{GQE})(\mathbf{h})=(\mathtt{GQD} \circ \mathtt{GQE})(\mathbf{h})$. Therefore, $\mathtt{GQ}:\mathbb{C}^{R(K+L)}\rightarrow\mathcal{C}^{\mathtt{G}}$, and $\mathtt{GQ}(\mathbf{h})=\mathbf{x}$, for some $\mathbf{x}\in\mathcal{C}^\mathtt{G}$.

In the most general case, the $r$th relay may make use of the side information $|f_{kr}|$ in the process of decoding the feedback information. However, in order to keep the relay operation as simple as possible, we do not consider such a scenario in this paper.
\subsection{Local Quantization}
\label{locquantexplanation}
We define our local quantizer $\mathtt{LQ}$ by $L$ local encoders, with the $\ell$th encoder at the $\ell$th receiver, and a unique local decoder at each receiver and relay, as described in Fig. \ref{locop}. The $\ell$th local encoder comprises of two parts: An LQ encoder $\mathtt{LQE}_{\ell}:\mathbb{C}^{R(K+1)}\rightarrow
\mathcal{I}_\ell^\mathtt{L}$ and a lossless local compressor $\mathtt{LQC}_{\ell}:\mathcal{I}_\ell^\mathtt{L}\rightarrow \mathcal{J}_\ell^\mathtt{L}$. Note that the domain of each LQ encoder is different from the domain of the GQ encoder. For the $\ell$th encoder, the domain corresponds to the channel states from the transmitters to the relays and from the relays to the $\ell$th receiver, represented by the concatenation vector $[\mathbf{f},\mathbf{g}_\ell]$.
\begin{figure}[h]
\centering
\scalebox{1}{\epsfig{file = 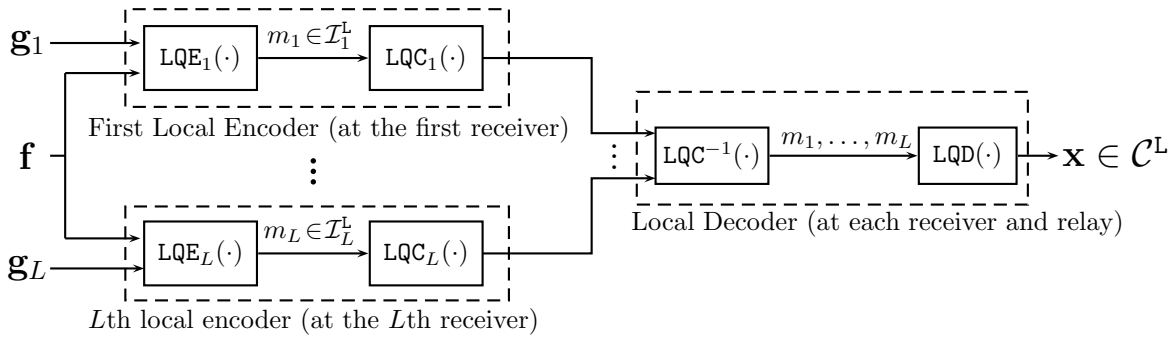, bbllx = 131pt, bblly = 469pt, bburx = 570pt, bbury = 595pt}}
\caption{Local quantizer operation.}
\label{locop}
\end{figure}

The $\ell$th receiver feeds back $\mathtt{LQC}_{\ell}(\mathtt{LQE}_{\ell}([\mathbf{f},\mathbf{g}_\ell]))$, using $\mathfrak{l}(\mathtt{LQC}_{\ell}(\mathtt{LQE}_{\ell}([\mathbf{f},\mathbf{g}_\ell])))$ bits. We call $\mathtt{LQ}$ an fLQ if, $\mathfrak{l}(j) = \lceil \log_2 |\mathcal{J}_\ell^\mathtt{L}| \rceil,\,\forall j\in\mathcal{J}_\ell^\mathtt{L},\,\forall \ell$. Otherwise, we call it a vLQ. For the latter case, the feedback rate of the $\ell$th receiver can be expressed as $\mathtt{R}_{\ell}(\mathtt{LQ}) \triangleq \mathtt{E}[\mathfrak{l}(\mathtt{LQC}_{\ell}(\mathtt{LQE}_{\ell}([\mathbf{f},\mathbf{g}_\ell])))]$.

After all the $L$ feedback messages are exchanged between the receivers and the relays, each of them decodes the feedback bits using the local decoder. The local decoder is the composition of a decompressor $\mathtt{LQC}^{-1}:\prod_{\ell}\mathcal{J}_\ell^\mathtt{L} \rightarrow \prod_{\ell}\mathcal{I}_\ell^\mathtt{L}$ and a quantizer decoder $\mathtt{LQD}:\prod_{\ell}\mathcal{I}_\ell^\mathtt{L} \rightarrow \mathcal{C}^\mathtt{L}$. Overall, $\mathtt{LQ}(\mathbf{h}) \triangleq \mathtt{LQD}( \mathtt{LQE}_1([\mathbf{f},\mathbf{g}_1]),\ldots, \mathtt{LQE}_L([\mathbf{f},\mathbf{g}_L]))$. Thus, $\mathtt{LQ}:\mathbb{C}^{R(K+L)}\rightarrow\mathcal{C}^\mathtt{L}$, and $\mathtt{LQ}(\mathbf{h})=\mathbf{x}$, for some $\mathbf{x}\in\mathcal{C}^\mathtt{L}$.
\subsection{Transmission Scheme}
We use a two-step AF protocol\cite{jing1, koyuncu1}. In the first step, the $k$th transmitter selects a symbol $s_k$ from a constellation $\mathcal{S}_k$, where $|\mathcal{S}_k|<\infty$, $\mathtt{P}(s_k) = |\mathcal{S}_k|^{-1},\,\forall s_k\in\mathcal{S}_k$, and sends $\sqrt{P_{S_k}}s_k$. We normalize $s_k$ as $\mathtt{E}[|s_k|^2]=1$. Thus, the average power used at the $k$th transmitter is $P_{S_k}$. During the first step, there is no reception at the receivers, but the $r$th relay receives
\begin{align}
t_r = \sum_{k=1}^K f_{kr}s_k\sqrt{P_{S_k}} + \eta_{0r},
\end{align}
where $\eta_{0r}\sim\mathtt{CN}(0,1)$.
%

Suppose that a quantizer $\mathtt{Q}:\mathbb{C}^{R(K+L)} \rightarrow \mathcal{C}$, global or local, is employed in the network, and $\mathtt{Q}(\mathbf{h}) = \mathbf{x}$, for some $\mathbf{x}\in\mathcal{C}$. Then, the relays use the beamforming vector $\mathbf{x}$ to adjust their transmit power and transmit phase. During the second step, the transmitters remain silent, but the $r$th relay transmits
\begin{align}
u_r = x_r \sqrt{\rho_r} t_r,
\end{align}
where the relay normalization factor $\rho_r$ is given by
\begin{align}
\rho_r \triangleq \frac{P_{R_r}}{1 + \sum_{i=1}^K |f_{ir}|^2P_{S_i} }.
\end{align}

The average power used at the $r$th relay can be calculated to be $\mathtt{E}_{s_1,\ldots,s_K,\eta_0r}[|u_r|^2] = |x_r|^2P_{R_r},\,\forall \mathbf{h}$. We require $0\leq|x_r|\leq 1$ as a result of the short term power constraint. The channel state dependent normalization factors $\rho_r$ ensure that the instantaneous transmit power of each relay remains within its power constraint with high probability.\footnote{Because of the noise at its received signal, a relay can exceed its transmit power constraint at some instants. The phrase ``short-term'' comes from the observation that, regardless of the channel states, the relay always obeys its power constraint when its transmit power is averaged over the transmitted symbols and the noise.}

Also, note that within the restriction of $0 \leq |x_r| \leq 1$, $\rho_r$ is the maximal normalization factor that we can use. In other words, if a factor $\rho_r''$ satisfies $\rho_r'' > \rho_r$ for some $\mathbf{h}$, then it violates the short term power constraint. Still, one can employ another factor $\rho_r'$ with $\rho_r' \leq \rho_r,\,\forall\mathbf{h}$ (e.g. $\rho_r' = P_{R_r}/(1+\sum_{k=1}^K (1+|f_{kr}|^4)P_{S_k})$). We shall discuss later in Section \ref{alowbound} whether or not such a different choice of the normalization factor can improve the network performance.

After the two steps of transmission that has been described above, the received signal at the $\ell$th receiver can be expressed as:
\begin{align}
  \label{receivedk} y_{\ell} =  \sum_{k=1}^K  \sum_{r=1}^R x_r\sqrt{\rho_r}f_{kr}g_{r\ell}\sqrt{P_{S_i}}s_i + \sum_{r=1}^R x_r g_{r\ell}\sqrt{\rho_r} \eta_{0r}  + \eta_{1\ell},
\end{align}
where $\eta_{1\ell}\sim\mathtt{CN}(0,1)$ is the noise at the $\ell$th receiver. We assume that the noises $\eta_{0r},\,r=1,\ldots,R$, and $\eta_{1\ell},\,\ell=1,\ldots,L$ are independent.
\subsection{Performance Measure}
\label{performancemeasure}
The $\ell$th receiver attempts to decode the symbols of the transmitters with indices given by an arbitrary but fixed set $\mathcal{D}_{\ell}\subset\{1,\ldots,K\},\,\mathcal{D}_{\ell} \neq\emptyset$. As an example, for a network with $K=3$ and $L=2$, let $\mathcal{D}_1 = \{1,2\}$ and $\mathcal{D}_2 = \{2,3\}$. Then, the first receiver is interested only in the symbols of the first and the second transmitters, while the second receiver is interested only in the symbols of the second and the third transmitters. In general, we assume that $\bigcup_{\ell} \mathcal{D}_{\ell} = \{1,\ldots,K\}$. This guarantees that at least one receiver is interested in the symbols of the $k$th transmitter. In particular, for $K=1$, we have $\mathcal{D}_{\ell} = \{1\},\,\forall\ell$.

Let us call the vector of transmitted symbols $\mathbf{s}_{\ell} = [s_k]_{k\in\mathcal{D}_{\ell}}$ as the super-symbol relevant to the $\ell$th receiver, and $\widetilde{\mathbf{s}}_{\ell}$ be its decoded version. We say that an error event occurs at a receiver if it incorrectly decodes its desired super-symbol. In this case, the optimal decoder at the $\ell$th receiver is an individual maximum likelihood (ML) decoder\footnote{In the literature, the phrase ``individual'' usually refers to the cases in which the a posteriori probability is maximized over a single transmitter alphabet. Note that, in our case, the maximization is over the product alphabet $\mathcal{S}_{\ell}$ that represents the set of all super-symbols that the $\ell$th receiver is interested in.} given by $\widetilde{\mathbf{s}}_{\ell} = \arg\max_{\mathbf{s}_{\ell}'\in\mathscr{S}_{\ell}} \mathtt{P}(\mathbf{s}_{\ell}' | y_{\ell},\mathbf{x},\mathbf{h})$, where $\mathscr{S}_{\ell} = \prod_{k\in\mathcal{D}_{\ell}} \mathcal{S}_k$ is the relevant super-symbol alphabet. For a fixed channel state $\mathbf{h}$, and beamforming vector $\mathbf{x}$, let $\mathtt{SER}_{\ell}^{\mathtt{IML}}(\mathbf{x}, \mathbf{h}) \triangleq \mathtt{P}(\widetilde{\mathbf{s}}_{\ell} \neq \mathbf{s}_{\ell})$ denote the conditional super-symbol error rate (SER) of the $\ell$th receiver with the individual ML decoder.

Let us now define a single quantity that represents the SER performance of all the receivers. We define the conditional \textit{network error rate} (conditional NER, or CNER), denoted by $\mathtt{CNER}(\mathbf{x},\mathbf{h})$, as the probability that at least one receiver incorrectly decodes its desired super-symbol.

Our performance measure, the NER, is the expected value of the CNER. Given a quantizer $\mathtt{Q}$ global or local, the NER can thus be expressed as
\begin{align}
\label{quantperrf}
\mathtt{NER}(\mathtt{Q}) \triangleq \mathtt{E}_{\mathbf{h}}[\mathtt{CNER}(\mathtt{Q}(\mathbf{h}),\mathbf{h})].
\end{align}
\subsection{Diversity Measure}
Let us also define a unique diversity measure for our network. Let $P_{R_r} = p_{R_r} P,\,r=1,\ldots,R$, $P_{S_k} = p_{S_k} P,\,k=1,\ldots,K$, where $p_{S_k},p_{R,r} <\infty$. In other words, we allow the power constraint of each transmitting terminal to grow linearly with $P$. Then, the \textit{first-order} diversity achieved by a quantizer $\mathtt{Q}$ is given by
\begin{align}
\label{firstorderdiversity}
d_1(\mathtt{Q}) \triangleq  \lim_{P\rightarrow\infty} -\frac{\log \mathtt{NER}(\mathtt{Q})}{\log P}.
\end{align}
One problem with this conventional definition of diversity is that it fails to characterize the asymptotic effect of possible sub-linear $P$-dependent terms (e.g. logarithmic terms) in the error rate expression. In order to properly handle such cases, we define the \textit{second-order} diversity as
\begin{align}
\label{secondorderdiversity}
d_2(\mathtt{Q}) \triangleq \lim_{P\rightarrow\infty} -\frac{\log \mathtt{NER}(\mathtt{Q}) + d_1(\mathtt{Q})\log P}{\log\log P}.
\end{align}
Note that the first-order diversity is always positive, while the second-order diversity can be negative.

Now, the \textit{diversity} (gain) achieved by a quantizer $\mathtt{Q}$ is given by $d(\mathtt{Q}) \triangleq (d_1(\mathtt{Q}), d_2(\mathtt{Q}))$.

With these definitions, the asymptotic performance with a quantizer $\mathtt{Q}$, as $P$ grows to infinity, can be expressed as
\begin{align}
\mathtt{NER}(\mathtt{Q}) \cong \mathtt{G}_\mathtt{A}(P) (\log P)^{-d_2(\mathtt{Q})} P^{-d_1(\mathtt{Q})},
\end{align}
where the factor $\mathtt{G}_\mathtt{A}(P)$ is the \textit{array gain}. It is sublogarithmic in the sense that $\lim_{P\rightarrow\infty} \frac{\mathtt{G}_\mathtt{A}(P)}{\log P} = 0$. Also, we use it only when we compare the performance of two quantizers that provide the same diversity gain.

Finally, for two diversity gains $d = (d_1, d_2)$, and $d' = (d_1', d_2')$, we say that $d$ is higher than $d'$ (or $d>d'$) if either $d_1 > d_1'$ or $d_1=d_1',\,d_2>d_2'$.
%
\subsection{Problem Statement}
Our goal is to design the quantizer $\mathtt{Q}$, given a limited feedback rate, such that the NER is minimized. We consider this problem for both GQs and LQs.

To achieve our goal, we first determine the maximal possible diversity with our network model. Then, we design structured fGQs that can achieve this diversity. Finally, we use our observations on fGQs to systematically design fLQs that achieve maximal first order diversity, and then, vLQs that achieve maximal diversity.

We would like to note that, as demonstrated in \cite{koyuncu1}, the numerical optimization of our quantizers is always possible by using algorithms such as the Generalized Lloyd Algorithm \cite{linde1, fleming1}. These algorithms can be used to improve the array gain performance, or in some particular cases, the second-order diversity performance of our structured codebook designs. We will not consider such optimizations in this paper since they are straightforward.
\section{Lower Bounds on Quantizer Performance}
\label{alowbound}
Before we attempt to design a high-performance low-rate quantizer, it is natural to determine the best possible performance we can expect with \textit{any} quantizer. In this section, we find lower bounds on the NER for both relay-interference and relay-broadcast networks that hold for any quantizer $\mathtt{Q}$, global or local.

Let $\mathcal{X} = \{\mathbf{x}\in\mathbb{C}^R:\|\mathbf{x}\|_{\infty} \leq 1\}$ represent the set of all beamforming vectors. Then, we have
\begin{theorem}
\label{theorem1}
Let $\mathtt{Q}:\mathbb{C}^{R(K+L)}\rightarrow\mathcal{C}$ with $\mathcal{C}\subset\mathcal{X}$. Then, there are constants $0<C_1,C_2< \infty$ that are independent of both $P$ and $\mathtt{Q}$, such that for all $\mathtt{Q}$, and for all $P$ sufficiently large,
\begin{align}
\label{bestpossibleperformance}
\begin{array}{ll}
\displaystyle\vphantom{\sum_x} \mathtt{NER}(\mathtt{Q})  \geq C_1\frac{1}{P^R}, & K = 1, \\
\displaystyle\vphantom{\sum_x}\mathtt{NER}(\mathtt{Q}) \geq C_2\frac{\log^R P}{P^{R}}, & K > 1.
\end{array}
\end{align}
Moreover, the bounds in (\ref{bestpossibleperformance}) hold for any relay normalization factor $\rho_r'$ that satisfies $\rho_r' \leq \rho_r,\,\forall \mathbf{h}$.
\end{theorem}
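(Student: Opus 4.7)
The plan is to reduce to a genie-aided lower bound that is independent of $\mathtt{Q}$, and then exhibit a channel event of the advertised probability on which no $\mathbf{x}\in\mathcal{X}$ can prevent a constant error probability. Since $\mathtt{Q}(\mathbf{h})\in\mathcal{C}\subseteq\mathcal{X}$, one immediately has
\[
\mathtt{NER}(\mathtt{Q}) \;\geq\; \mathtt{E}_{\mathbf{h}}\Bigl[\min_{\mathbf{x}\in\mathcal{X}}\mathtt{CNER}(\mathbf{x},\mathbf{h})\Bigr],
\]
whose right-hand side is $\mathtt{Q}$-free. Restricting to one receiver $\ell^\ast$ gives $\mathtt{CNER}(\mathbf{x},\mathbf{h})\geq \mathtt{SER}_{\ell^\ast}^{\mathtt{IML}}(\mathbf{x},\mathbf{h})$, and I would further bound the latter from below by a pairwise-error probability for a distinguished pair of super-symbols (the transmitters outside $\mathcal{D}_{\ell^\ast}$ appear as deterministic additive shifts once conditioned on). A standard Gaussian-tail calculation then produces a bound of the form
\[
\mathtt{SER}_{\ell^\ast}^{\mathtt{IML}}(\mathbf{x},\mathbf{h}) \;\geq\; c_0\,\mathrm{Q}\!\left(\tfrac{1}{2}\sqrt{\mathrm{SNR}_\mathrm{eff}(\mathbf{x},\mathbf{h})}\right),
\]
in terms of the effective coefficients $H_k^{(\ell^\ast)}=\sum_r x_r\sqrt{\rho_r}f_{kr}g_{r\ell^\ast}$ and the effective noise variance $\sigma_\mathrm{eff}^2 = 1+\sum_r|x_r|^2|g_{r\ell^\ast}|^2\rho_r$.

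For the case $K=1$, take the bad event $\mathcal{E}_1\triangleq\{|f_{1r}|^2\le \delta/P:\,r=1,\ldots,R\}$ for a small constant $\delta>0$. Cauchy--Schwarz gives $|\sum_r x_r\sqrt{\rho_r}f_{1r}g_{r\ell^\ast}|^2 \le (\sum_r|f_{1r}|^2)(\sum_r|x_r|^2|g_{r\ell^\ast}|^2\rho_r)$, so that $\mathrm{SNR}_\mathrm{eff}(\mathbf{x},\mathbf{h}) \le (\sum_r|f_{1r}|^2)\,P_{S_1}\le R\delta\,p_{S_1}$ uniformly in $\mathbf{x}\in\mathcal{X}$ on $\mathcal{E}_1$. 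The $\mathrm{Q}$-function factor is therefore bounded below by a positive constant $c$ on $\mathcal{E}_1$ for every $\mathbf{x}$. Since the $|f_{1r}|^2$ are independent exponentials with strictly positive density at the origin, $\mathtt{P}(\mathcal{E}_1)\sim C\,P^{-R}$ for $P$ large, and hence $\mathtt{NER}(\mathtt{Q})\ge c\,\mathtt{P}(\mathcal{E}_1)\ge C_1/P^R$. The same argument works verbatim with $\rho_r$ replaced by any $\rho_r'\le \rho_r$, because the Cauchy--Schwarz bound above never used the specific form of $\rho_r$.

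For $K>1$ this event alone only recovers the $1/P^R$ bound, so the extra $\log^R P$ factor must come from a refined event capturing the multi-transmitter penalty. The guiding picture is that the minimum distance of the received super-constellation at $\ell^\ast$,
\[
d_\mathrm{min}(\mathbf{x},\mathbf{h}) \;\triangleq\; \min_{\mathbf{s}_{\ell^\ast}\ne\mathbf{s}'_{\ell^\ast}}\Bigl|\textstyle\sum_k H_k^{(\ell^\ast)}(s_k-s_k')\Bigr|,
\]
can be small for \emph{every} $\mathbf{x}\in\mathcal{X}$ even when all $|f_{kr}|$ and $|g_{r\ell^\ast}|$ are of typical size, because the multiple effective channels $\{H_k^{(\ell^\ast)}\}_k$ are coupled through a single beamforming vector and cannot be independently tuned. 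I would look for a per-relay event of probability $\sim \log P/P$ asserting ``relay $r$ cannot be used to separate the desired super-symbols,'' so that independence across the $R$ relays multiplies to a joint probability $\sim(\log P/P)^R$. The main obstacle is making this per-relay $\log P/P$ scaling precise; it should follow by combining the $\mathrm{Q}$-function lower bound with sharp asymptotics on $\mathtt{P}(d_\mathrm{min}^2/\sigma_\mathrm{eff}^2\le c)$, exploiting the logarithmic behaviour near zero of the densities of the harmonic-mean-type quantities $\rho_r|f_{kr}|^2|g_{r\ell^\ast}|^2$ that enter $H_k^{(\ell^\ast)}$---precisely the sort of PDF/CDF estimate that the paper's auxiliary appendix is set up to supply. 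The robustness to $\rho_r'\le\rho_r$ again follows from the monotonicity of $\rho\mapsto\rho/(1+\rho\cdot\text{const})$, which guarantees that shrinking the normalization can only shrink $\mathrm{SNR}_\mathrm{eff}$ and $d_\mathrm{min}^2$ further.
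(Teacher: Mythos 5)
Your $K=1$ argument is sound and close in spirit to the paper's: after the genie-aided reduction and Cauchy--Schwarz over the shared beamforming vector, the effective SNR is dominated by $\sum_r|f_{1r}|^2 P_{S_1}$, and either your bad-event calculation or the paper's appeal to the known MRC lower bound yields $\Omega(P^{-R})$. The robustness to $\rho_r'\leq\rho_r$ is also handled correctly by the monotonicity observation in both treatments.

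The $K>1$ half has a genuine gap, and you have also misidentified the mechanism that produces the extra $\log^R P$. You frame it as a coupled-beamforming / super-constellation minimum-distance phenomenon: ``the multiple effective channels $\{H_k^{(\ell^\ast)}\}_k$ are coupled through a single beamforming vector and cannot be independently tuned.'' That is not where the log comes from, and indeed you admit you cannot make the per-relay $\log P/P$ event precise along that line. The paper's route is simpler and bypasses the super-constellation entirely: it stays with the genie-aided single-symbol detection (only $s_1$ unknown), so numerator-side symbol separation plays no further role. The $\log P$ factor instead comes from the \emph{denominator}: the relay amplification factor $\rho_r = P_{R_r}/(1+\sum_k|f_{kr}|^2 P_{S_k})$ contains the interferers' channels $|f_{kr}|^2 P_{S_k}$, $k\ge 2$, regardless of what $\mathbf{x}$ is chosen. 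After Cauchy--Schwarz removes the $\mathbf{x}$-dependence, the effective SNR upper bound becomes (up to constants) $\max_r Z_r^{\mathtt{U}}$ with $Z_r^{\mathtt{U}} = X_r Y_r P^2/(1+X_r P + W_r P + Y_r P)$, where $W_r$ is the interferer-channel term injected by $\rho_r$. It is the extra $W_r P$ in this denominator that makes the density of $Z_r^{\mathtt{U}}$ scale like $(\log P)/P$ near the origin (via a Bessel-$K_0$ / exponential-integral computation), and independence over $r$ then delivers $(\log P/P)^R$. Your closing sentence does gesture at a ``harmonic-mean-type'' density estimate, but you attach it to the wrong random quantity and the wrong decoding event; to repair the proof you should drop the $d_{\min}$ picture and instead observe that the AF relays amplify interference power through $\rho_r$ even in the genie-aided single-user subproblem.
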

\begin{proof}
Please see Appendix \ref{proofoftheorem1}.
\end{proof}
In other words, for relay-broadcast networks, the maximal diversity gain is $(R,0)$. Indeed, for a network with $K=L=1$, it was shown in \cite{koyuncu1} that diversity $(R,0)$ is achievable.

On the other hand, for relay-interference networks, the maximal diversity gain is $(R,-R)$. Since $(R,0)>(R,-R)$, interference results in a second order diversity loss in our network model.

Theorem \ref{theorem1} also shows that a different relay normalization factor $\rho_r'$ \textit{cannot} improve the diversity upper bounds, provided that it satisfies the short-term power constraint, and a codebook $\mathcal{C}\subset\mathcal{X}$ is employed. Thus, for the rest of this paper, we will only consider $\rho_r$ as our relay normalization factor.

An immediate question that stems from Theorem \ref{theorem1} is whether there exists finite rate quantizers that can achieve maximal diversity. In the next section, we construct an fGQ that provides an affirmative answer.
\section{Maximal Diversity with an fGQ}
\label{secglobal}
In order to determine an fGQ that can achieve maximal diversity, let us first determine, for any $K$, the optimal GQ given a fixed codebook with finite cardinality.
\begin{proposition}
\label{prop1}
Given a fixed codebook $\mathcal{C}$ with $|\mathcal{C}|<\infty$, the optimal GQ is given by $\mathtt{GQ}_\mathcal{C}^{\star}(\mathbf{h}) \triangleq \arg\min_{\mathbf{x}\in\mathcal{C}} \mathtt{CNER}(\mathbf{x}, \mathbf{h})$.
\end{proposition}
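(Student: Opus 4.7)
The plan is a standard pointwise-optimization argument. Since (\ref{quantperrf}) expresses $\mathtt{NER}(\mathtt{Q})$ as the expectation of $\mathtt{CNER}(\mathtt{Q}(\mathbf{h}),\mathbf{h})$ over $\mathbf{h}$, I would reduce the problem of minimizing $\mathtt{NER}$ over all GQs with codebook $\mathcal{C}$ to minimizing $\mathtt{CNER}(\cdot,\mathbf{h})$ separately for each realization of $\mathbf{h}$, which is exactly what $\mathtt{GQ}_\mathcal{C}^{\star}$ does by definition.

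First I would unpack the GQ definition of Section \ref{gquantizer}: because $\mathtt{GQC}$ is lossless, the overall map $\mathtt{GQ} = \mathtt{GQD} \circ \mathtt{GQC}^{-1} \circ \mathtt{GQC} \circ \mathtt{GQE} = \mathtt{GQD} \circ \mathtt{GQE}$ is completely determined by the encoder's assignment of each $\mathbf{h}$ to some index in $\mathcal{I}^\mathtt{G}$, which the decoder then maps to the corresponding codeword in $\mathcal{C}$. Hence any GQ with codebook $\mathcal{C}$ can be identified with an arbitrary function $\mathbb{C}^{R(K+L)}\to\mathcal{C}$, and conversely any such function is realizable as a GQ by setting $\mathtt{GQE}(\mathbf{h}) = m$ whenever $\mathbf{x}_m$ is the desired output and choosing any (for instance, fixed-length) lossless $\mathtt{GQC}$. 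In particular $\mathtt{GQ}_\mathcal{C}^{\star}$ is itself a valid GQ with codebook $\mathcal{C}$.

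Next, by construction of the $\arg\min$, for every $\mathbf{h}$ and every GQ with codebook $\mathcal{C}$,
\[
\mathtt{CNER}(\mathtt{GQ}(\mathbf{h}),\mathbf{h}) \,\geq\, \min_{\mathbf{x}\in\mathcal{C}} \mathtt{CNER}(\mathbf{x},\mathbf{h}) \,=\, \mathtt{CNER}(\mathtt{GQ}_\mathcal{C}^{\star}(\mathbf{h}),\mathbf{h}),
\]
where the minimum is attained because $|\mathcal{C}|<\infty$, and ties (if any) are resolved by the partial-ordering convention fixed in the notation section, making $\mathtt{GQ}_\mathcal{C}^{\star}(\mathbf{h})$ unambiguous. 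Taking expectation over $\mathbf{h}$ preserves the inequality and yields $\mathtt{NER}(\mathtt{GQ}) \geq \mathtt{NER}(\mathtt{GQ}_\mathcal{C}^{\star})$, proving optimality.

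There is no real obstacle: the statement is essentially a tautology once one accepts that the NER is separable over channel states and that every pointwise assignment into $\mathcal{C}$ is realizable by some encoder. The only subtlety worth flagging is that the proposition concerns minimizing \emph{reliability} for a given finite codebook; it is silent about the choice of compressor or feedback rate, which are orthogonal design issues taken up later in the paper.
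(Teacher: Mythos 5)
Your argument is correct and is essentially the same pointwise-optimization argument the paper uses: since $\mathtt{GQ}_{\mathcal{C}}^{\star}$ minimizes $\mathtt{CNER}(\cdot,\mathbf{h})$ for every $\mathbf{h}$, taking the expectation in (\ref{quantperrf}) gives $\mathtt{NER}(\mathtt{GQ}_{\mathcal{C}}^{\star}) \leq \mathtt{NER}(\mathtt{Q}')$ for every GQ $\mathtt{Q}'$ with codebook $\mathcal{C}$. The additional remarks you include (realizability of any map into $\mathcal{C}$ as a GQ, attainment of the minimum, and tie-breaking) are sound but not strictly necessary beyond what the paper states.
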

\begin{proof}
Let $\mathtt{Q}':\mathbb{C}^{R(K+L)} \rightarrow \mathcal{C}$. We have
\begin{align}
\mathtt{CNER}(\mathtt{GQ}_{\mathcal{C}}^{\star}(\mathbf{h}), \mathbf{h}) \leq \mathtt{CNER}(\mathtt{Q}'(\mathbf{h}), \mathbf{h})\implies \mathtt{NER}(\mathtt{GQ}_{\mathcal{C}}^{\star}) \leq \mathtt{NER}(\mathtt{Q}'),\,\forall\mathtt{Q}'.
\end{align}
Thus, $\mathtt{GQ}_{\mathcal{C}}^{\star}$ performs at least as good as any quantizer with codebook $\mathcal{C}$.
\end{proof}

Therefore, given that we employ an optimal GQ encoder given by Proposition \ref{prop1}, the GQ codebook uniquely determines the system performance. But, there is one complication: If we ever want to implement the optimal GQ encoder, we should be able to evaluate $\mathtt{CNER}(\mathbf{x},\mathbf{h})$, for any given $\mathbf{x}$ and $\mathbf{h}$. Unfortunately, a closed form characterization of the CNER is very difficult, if not impossible. For that reason, we design a suboptimal quantizer that, instead of the actual CNER, uses an upper bound on the CNER. Fortunately, this suboptimal quantizer will be powerful enough to achieve maximal diversity for any $K$.
\subsection{An Upper Bound on the CNER}
For the $\ell$th receiver, instead of the individual ML decoder described in Section \ref{performancemeasure}, suppose that we employ a joint ML decoder
$\hat{\mathbf{s}}_{\ell} \triangleq \arg\max_{\mathbf{s}'\in\mathscr{S}} \mathtt{P}(\mathbf{s}'|y_{\ell},\mathbf{x},\mathbf{h})$, where $\mathscr{S} = \prod_k \mathcal{S}_k$. Recall that, for the individual ML decoder at the $\ell$th receiver, the a posteriori probability was maximized over  $\prod_{i\in\mathcal{D}_{\ell}} \mathcal{S}_i$. For the joint ML decoder, the maximization is over $\prod_k \mathcal{S}_k$ at all the receivers.

Let $\mathtt{SER}_{\ell}^\mathtt{JML}(\mathbf{x},\mathbf{h}) \triangleq \mathtt{P}(\hat{\mathbf{s}}_{\ell} \neq \mathbf{s})$ denote the error rate of the joint ML decoder. Then, we have $\mathtt{SER}_{\ell}^{\mathtt{IML}}(\mathbf{x},\mathbf{h})  \leq \mathtt{SER}_{\ell}^\mathtt{JML}(\mathbf{x},\mathbf{h}),\,\forall\ell$. Also,
from (\ref{receivedk}),\footnote{Note that, in order to be able to perform ML decoding, the receivers should know which beamforming vector is used by the relays. In other words, for each $\mathbf{h}$, the receivers should know $\mathtt{Q}(\mathbf{h})$. This explains why we need to have a quantizer decoder at each receiver as well as each relay.}
\begin{align}
\label{desterrbound}
\textstyle \mathtt{SER}_{\ell}^\mathtt{JML}(\mathbf{x}, \mathbf{h})  & \leq \frac{1}{|\mathscr{S}|} \sum_{\substack{\mathbf{s}, \hat{\mathbf{s}}\in\mathscr{S} \\ \mathbf{s} \neq \hat{\mathbf{s}} }} \mathrm{Q} \bigl(\sqrt{2\smash[t]{\gamma_{\ell,\mathbf{s},\hat{\mathbf{s}}}(\mathbf{x}, \mathbf{h}) }}\, \bigr),
\end{align}
where
\begin{align}
\label{gammakssh}
 \gamma_{\ell,\mathbf{s},\hat{\mathbf{s}}}(\mathbf{x}, \mathbf{h}) & = \frac{| \sum_{k=1}^K(s_k - \hat{s}_k) \sqrt{P_{S_k}}\sum_{r=1}^R f_{k r}\sqrt{\rho_r}g_{r\ell}x_r |^2}{4(1 + \sum_{r=1}^R \rho_r |g_{r\ell}|^2 |x_r|^2)}.
\end{align}

In (\ref{desterrbound}) and (\ref{gammakssh}), the decoded symbol vector for each receiver is obviously different, i.e.  $\hat{\mathbf{s}}_\ell$, though we have omitted the dependence on $\ell$ for brevity. Furthermore, from now on, we shall omit the condition $\mathbf{s}, \hat{\mathbf{s}}\in\mathscr{S}$ in the summations as it is clear from the context.

Now, using a union bound over all the receivers, it follows for the CNER that
\begin{align}
   \mathtt{CNER}(\mathbf{x},\mathbf{h}) &  \leq \sum_{\ell=1}^{L} \mathtt{SER}_{\ell}^{\mathtt{IML}}(\mathbf{x},\mathbf{h}) \\
  &  \leq \sum_{\ell=1}^{L} \mathtt{SER}_{\ell}^{\mathtt{JML}}(\mathbf{x},\mathbf{h}) \\
  &\label{zzxxss}  \leq  \frac{1}{|\mathscr{S}|}  \sum_{\ell=1}^{L}  \sum_{\mathbf{s}\neq\hat{\mathbf{s}}} \mathrm{Q} \bigl(\sqrt{2\smash[t]{\gamma_{\ell,\mathbf{s},\hat{\mathbf{s}}}(\mathbf{x}, \mathbf{h}) }}\, \bigr).
\end{align}
This upper bound can easily be evaluated for any constellation and thus, it is good enough for our purposes. However, for clarity of exposition in the rest of the paper, we seek a much simpler bound. First, let us define
\begin{align}
 \gamma_\ell(\mathbf{x}, \mathbf{h}) \triangleq
\min_{\mathbf{s} \neq \hat{\mathbf{s}} }
\gamma_{\ell,\mathbf{s},\hat{\mathbf{s}}}(\mathbf{x}, \mathbf{h}),
\end{align}
and
\begin{align}
 \gamma^{\mathtt{L}}(\mathbf{x}, \mathbf{h}) &  \triangleq \min_\ell\gamma_\ell(\mathbf{x}, \mathbf{h}) \\ &  = \min_\ell\min_{ \mathbf{s} \neq \hat{\mathbf{s}} }
\gamma_{\ell,\mathbf{s},\hat{\mathbf{s}}}
(\mathbf{x}, \mathbf{h}).
\end{align}
Then, (\ref{zzxxss}) can be further bounded as
\begin{align}
\label{combstep1}  \mathtt{CNER}(\mathbf{x},\mathbf{h}) & \vphantom{e^{\bigl()}}\leq \frac{|\mathscr{S}|-1}{2}  \sum_{\ell=1}^{L} \mathrm{Q}\bigl(\sqrt{2
\gamma_{\ell}(\mathbf{x}, \mathbf{h})}\bigr) \\
& \leq \frac{L(|\mathscr{S}|-1)}{2}  \max_{\ell} \mathrm{Q}\bigl(\sqrt{2
\gamma_{\ell}(\mathbf{x}, \mathbf{h})}\bigr) \\
\label{combstep3} &  = 2C_0 \mathrm{Q}\bigl(\sqrt{2
\gamma^{\mathtt{L}}(\mathbf{x}, \mathbf{h})}\bigr) \\
\label{combstep4} & \leq C_0 \exp(-\gamma^{\mathtt{L}}(\mathbf{x}, \mathbf{h})),
\end{align}
where $C_0 \triangleq L(|\mathscr{S}|-1)/4$. In the derivation above, (\ref{combstep1}) follows since there are $|\mathscr{S}|(|\mathscr{S}|-1)/2$ distinct terms with $\mathbf{s}\neq\hat{\mathbf{s}}$. For (\ref{combstep4}), we have used the fact that $\mathrm{Q}(x) \leq \frac{1}{2}\exp(-\frac{x^2}{2})$.

We would like to note the similarity of (\ref{combstep3}) and (\ref{combstep4}) to the conventional error rate expressions for single user wireless communication systems. Actually, the term $\gamma^{\mathtt{L}}(\mathbf{x},\mathbf{h})$ can be interpreted as a network signal-to-noise ratio (NSNR) measure that characterizes the overall performance of the network.
\subsection{Diversity Analysis of the Relay Selection Scheme}
%
%

For $K=L=1$, we have shown in \cite{koyuncu1} that a feedback scheme based on relay selection can achieve diversity $(R,0)$. Here, we generalize this result to any $L$.

For $K > 1$, due to both multiuser interference and its manifestation in Theorem \ref{theorem1}, it is not clear whether diversity $(R,-R)$ would be achievable. The main goal of this section is to show that it is indeed achievable with a GQ that maximizes the NSNR, and surprisingly, again using a simple relay selection codebook.

The relay selection codebook can be defined as $\mathcal{C}_{\mathtt{S}} = \{\mathbf{e}_r:r=1,\ldots,R\}$, where $e_{rq} = 1$ for $q=r$, and $e_{rq} = 0$ for $q\neq r$. Then, for any $K$ and $L$, We define our fGQ as
\begin{align}
\label{globqR}
\mathtt{GQ}_{\mathcal{C}_{\mathtt{S}}}(\mathbf{h}) = \arg\max_{\mathbf{e}_r\in\mathcal{C}_{\mathtt{S}}}\gamma^{\mathtt{L}}(\mathbf{e}_r,\mathbf{h}),
\end{align}
where, for any relay selection vector $\mathbf{e}^r$, we have from (\ref{gammakssh}) that
\begin{align}
\label{relayselectionfunc}
\gamma^{\mathtt{L}}(\mathbf{e}_r,\mathbf{h}) = \frac{1}{4}\min_{\mathbf{s} \neq \hat{\mathbf{s}} }\min_{\ell}\frac{\left| \sum_{k=1}^K(s_k - \hat{s}_k) \sqrt{P_{S_k}}f_{k r} \right|^2|g_{r\ell}|^2P_{R_r}}{1 + \sum_{k=1}^K |f_{k r}|^2 P_{S_k} + |g_{r\ell}|^2 P_{R_r} }.
\end{align}
Note that $\mathtt{GQ}_{\mathcal{C}_{\mathtt{S}}}$ chooses the relay selection vector that maximizes the NSNR.

 In the following theorem, we show that, for both relay-broadcast and relay-interference networks, $\mathtt{GQ}_{\mathcal{C}_{\mathtt{S}}}$ achieves maximal diversity by finding an upper bound on the NER:
\begin{theorem}
\label{theorem2}
There are constants $0<C_3,C_4<\infty$ that are independent of $P$ such that for all $P$ sufficiently large,
\begin{align}
\begin{array}{ll}
\displaystyle\vphantom{\sum_x} \mathtt{NER}(\mathtt{GQ}_{\mathcal{C}_{\mathtt{S}}}) \leq C_3 \frac{1}{P^{R}}, & K = 1, \\
\displaystyle\vphantom{\sum_x} \mathtt{NER}(\mathtt{GQ}_{\mathcal{C}_{\mathtt{S}}}) \leq C_4 \frac{\log^R P}{P^R},& K > 1.
\end{array}
\end{align}
\end{theorem}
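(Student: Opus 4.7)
The plan is to chain the inequalities $\mathtt{NER}(\mathtt{GQ}_{\mathcal{C}_{\mathtt{S}}}) = \mathtt{E}_{\mathbf{h}}[\mathtt{CNER}(\mathtt{GQ}_{\mathcal{C}_{\mathtt{S}}}(\mathbf{h}),\mathbf{h})] \leq C_0\, \mathtt{E}_{\mathbf{h}}[\exp(-M)]$, where $M \triangleq \max_r \gamma^{\mathtt{L}}(\mathbf{e}_r,\mathbf{h})$, and then turn a sharp tail bound on $M$ into the claimed $1/P^R$ and $\log^R P / P^R$ scalings. The first inequality is immediate by combining the exponential CNER bound (\ref{combstep4}) with the definition of $\mathtt{GQ}_{\mathcal{C}_{\mathtt{S}}}$ in (\ref{globqR}), which by construction selects the relay-selection codeword that maximises $\gamma^{\mathtt{L}}$.

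The key structural step is to exploit independence across relays. From (\ref{relayselectionfunc}), each $\gamma^{\mathtt{L}}(\mathbf{e}_r,\mathbf{h})$ depends only on $\{f_{kr}\}_{k}$ and $\{g_{r\ell}\}_{\ell}$, and these sub-families are mutually independent across $r$. Hence $\gamma^{\mathtt{L}}(\mathbf{e}_1,\mathbf{h}),\ldots,\gamma^{\mathtt{L}}(\mathbf{e}_R,\mathbf{h})$ are independent RVs, so if $F_r$ denotes the CDF of $\gamma^{\mathtt{L}}(\mathbf{e}_r,\mathbf{h})$ then $F_M(x) = \prod_{r=1}^R F_r(x)$. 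Coupled with the identity $\mathtt{E}[\exp(-M)] = \int_0^\infty F_M(x)\, e^{-x}\, \mathrm{d}x$, the proof reduces to a sharp bound on each $F_r(x)$ near the origin.

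The heart of the argument is controlling $F_r(x)$. Starting from (\ref{relayselectionfunc}) and applying a union bound over the finitely many triples $(\ell,\mathbf{s},\hat{\mathbf{s}})$ with $\mathbf{s}\neq\hat{\mathbf{s}}$, it suffices to bound the CDF of
\begin{align}
W_{\ell,\mathbf{s},\hat{\mathbf{s}},r} \triangleq \frac{\bigl|\sum_{k=1}^K(s_k-\hat{s}_k)\sqrt{P_{S_k}}\,f_{kr}\bigr|^2\,|g_{r\ell}|^2\,P_{R_r}}{1+\sum_{k=1}^K|f_{kr}|^2 P_{S_k} + |g_{r\ell}|^2 P_{R_r}},
\end{align}
which is precisely the ``frequently used RV'' analysed in Appendix~\ref{dendededen}. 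I expect the bound there to take the form $\mathtt{P}(W_{\ell,\mathbf{s},\hat{\mathbf{s}},r}\leq x)\leq C\, x/P$ when $K=1$ and $\mathtt{P}(W_{\ell,\mathbf{s},\hat{\mathbf{s}},r}\leq x)\leq C\,(x/P)\,\log(P/x)$ when $K>1$, valid for $x$ not exceeding a small positive power of $P$. The extra logarithm for $K>1$ arises because the Gaussian linear combination in the numerator can vanish by cancellation among the independent components $f_{1r},\ldots,f_{Kr}$ without the denominator term $\sum_k|f_{kr}|^2 P_{S_k}$ vanishing, so $W_{\ell,\mathbf{s},\hat{\mathbf{s}},r}$ behaves like a product of two nearly independent non-negative quantities each with a CDF linear at zero, which is the classical source of an $x\log(1/x)$ tail at the origin. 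For $K=1$ the numerator and denominator are perfectly coupled and no logarithm appears.

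With the per-relay CDF bound in hand, the final computation is routine. Splitting $\int_0^\infty F_M(x)\,e^{-x}\,\mathrm{d}x$ at a threshold $x_0 = P^\alpha$ with $\alpha\in(0,1)$ small, the tail $\int_{x_0}^\infty e^{-x}\,\mathrm{d}x = e^{-x_0}$ is $o(P^{-R})$, and the main term for $K>1$ is bounded by $C^R P^{-R}\int_0^{x_0} x^R \log^R(P/x)\,e^{-x}\,\mathrm{d}x$. Expanding $\log^R(P/x)=(\log P+\log(1/x))^R$ via the binomial theorem and using that $\int_0^\infty x^R\log^j(1/x)\,e^{-x}\,\mathrm{d}x$ is finite for every $j\geq 0$, the main term is $O(\log^R P / P^R)$; the $K=1$ case proceeds identically without the logarithmic factor. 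The dominant obstacle I anticipate is the per-relay CDF estimate, especially extracting the $\log(P/x)$ factor that appears only when $K>1$: handling the correlation between the numerator and the denominator of $W_{\ell,\mathbf{s},\hat{\mathbf{s}},r}$, and carrying the minimisations over $(\mathbf{s},\hat{\mathbf{s}})$ and $\ell$ cleanly through the argument, is where all the real work sits; everything downstream (independence across relays, the single outer integral, and the threshold split) is essentially mechanical.
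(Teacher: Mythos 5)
Your proposal takes essentially the same route as the paper's Appendix C proof: both begin with the exponential CNER bound $\mathtt{CNER} \leq C_0 \exp(-\gamma^{\mathtt{L}})$, exploit independence of $\gamma^{\mathtt{L}}(\mathbf{e}_r,\mathbf{h})$ across $r$, and reduce to the per-relay CDF/PDF bounds established in Appendix~\ref{dendededen} (Proposition~\ref{besselpdfcdfboundlemma}). The only cosmetic difference is that the paper integrates against the density of the maximum via the order-statistics formula while you integrate the CDF product directly via $\mathtt{E}[e^{-M}] = \int_0^\infty F_M(x)e^{-x}\,\mathrm{d}x$, and the per-relay bound form you anticipate captures the same $\omega\log P/P$ behaviour as the paper's $\lbb\,\omega\,\psi(\omega)$ and yields the identical final scaling.
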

\begin{proof}
 Please see Appendix \ref{proofoftheorem2}.
\end{proof}

In other words, the relay selection scheme with an fGQ achieves maximal diversity for any $K$. It is remarkable that full diversity is achieved regardless of the number of transmitters and receivers.

Note that our selection scheme requires $\lceil \log_2 R \rceil$ feedback bits. With $\lceil \log_2 R_0 \rceil$ feedback bits, where $R_0\in\{1,\ldots,R-1\}$, diversity orders $(R_0,0)$ and $(R_0,-R_0)$ are achievable for $K=1$, and $K>1$, respectively, simply by considering the selection scheme for any fixed $R_0$ of the relays and disregarding the others.

In practical networks, we may not have a GQ that knows the entire CSI of the network. In such situations, we would like to characterize the achievable performance using LQ encoders that know only a part of the CSI.

\section{Diversity with LQs}
\label{seclocaldddd}
In the previous section, we showed that a GQ using relay selection can achieve full diversity. Motivated by this result, we expect that a relay selection based LQ will achieve high diversity orders. In this section, we design two such LQs: An fLQ that achieves maximal first-order diversity, and a vLQ that achieves maximal diversity. Both quantizers will have similar structures. We construct them using the \textit{localization} method\cite{koyuncu2}, in which we synthesize an LQ out of an existing GQ. The synthesized LQ and the GQ share the same codebook. For our particular quantization scheme, we use the GQ $\mathtt{GQ}_{\mathcal{C}_\mathtt{S}}$ in (\ref{globqR}) as the basis of our LQs. Since $\mathtt{GQ}_{\mathcal{C}_\mathtt{S}}$ is based on relay selection, all of our LQs will be based on relay selection as well\footnote{In principle, the localization method itself is applicable to any GQ with any codebook; it is not limited to relay selection based GQs. However, for a general GQ, it is very difficult to analytically determine the performance of the synthesized LQ. Therefore, we focus only on the localization of relay selection based GQs.}.
\subsection{Localization}
Let $\mathtt{LQ}_{\xi,N}^{(\mathtt{f}|\mathtt{v})}$ denote a generic localization of $\mathtt{GQ}_{\mathcal{C}_\mathtt{S}}$. For the synthesized quantizer $\mathtt{LQ}_{\xi,N}^{(\mathtt{f}|\mathtt{v})}$, the superscript indicates whether it is fixed-length ($\mathtt{f}$) or variable-length($\mathtt{v}$); and $\xi,\,N$ are design parameters that we shall specify later on. For a particular channel state $\mathbf{h}$, the components of the synthesized quantizer operate as follows:
\subsubsection{LQ Encoders}
For notational convenience, $\omega_{r\ell} =  \gamma_{\ell}^{\mathtt{L}}(\mathbf{e}_r, \mathbf{h})$. The $\ell$th LQ encoder calculates $\omega_{r\ell},\,r=1,\ldots,R$. In other words, it calculates its own contribution to the NSNR for all possible relay selection vectors. Then, it quantizes each of the possible contributions using a scalar quantizer
\begin{align}
\mathcal{N}(x) = \left\{ \begin{array}{rl} n,& \exists n\in\{0,\ldots,N-2\}\mbox{ such that }x\in[n\xi, (n+1)\xi), \\
N,& \mbox{otherwise.}\end{array}\right.,\,x\in\mathbb{R}.
\end{align}
Its output message is the concatenation of $R$ sub-messages $\mathcal{N}(\omega_{r\ell}),\,r=1,\ldots,R$.
\subsubsection{An Illustration of the LQ Encoders}
\label{lqencillust}
Let us now illustrate the operation of the LQ encoders with a simple example with $R=3$, and $L=2$, as shown in Fig. \ref{lencex}. For some fixed channel variances, power constraints, and channel state $\mathbf{h}'$, suppose that $w_{11} = 1.7$, $\omega_{21} =  0.8$, $\omega_{31} =  1.2$, $\omega_{12} = 0.28$, $\omega_{22} = 0.67$, and $\omega_{32} = 2.3$. In the figure, each of these local NSNR values are represented by a disk ($\bullet$) on the real axis. Since we are using an LQ, $\omega_{r1},\,r = 1,2,3$ can be calculated only by the first receiver, and similarly, $\omega_{r2},\,r = 1,2,3$ can be calculated only by the second receiver. Note that the GQ has access to all the local SNRS and in this example, selects the relay with index $\arg\max_{r\in\{1,2,3\}} \min_{\ell} \omega_{r\ell} = 3$.
\begin{figure}[h]
\centering
\scalebox{0.85}{\epsfig{file = 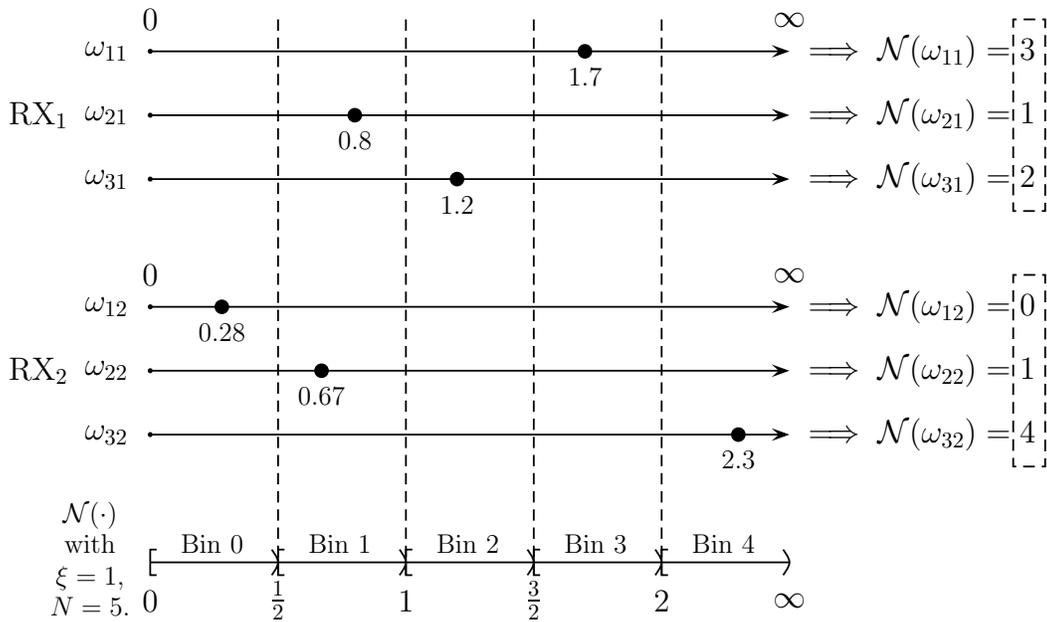, bbllx = 86pt, bblly = 414pt, bburx = 546pt, bbury = 686pt}}
\caption{An illustration of the LQ encoders.}
\label{lencex}
\end{figure}

After the LQ encoder calculates its local NSNR values, it quantizes them using a scalar quantizer $\mathcal{N}$ that is uniquely determined by the parameters $\xi$ and $N$. In our example, we use $N = 5$ bins and set $\xi = \frac{1}{2}$. Each bin is represented by a half open interval (\,\epsfig{file = 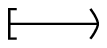, bbllx = 149pt, bblly = 389pt, bburx = 174pt, bbury = 395pt}\,) on the real axis. The output message of the LQ encoder is the concatenation of its quantized local NSNR values (submessages), shown as frames with a dashed outline, on the right hand side of the figure.
\subsubsection{Compressors}
\label{compressordef}
In general, there are $R$ sub-messages, each with $N$ possible values. Therefore, for a fixed-length synthesis $\mathtt{LQ}_{\xi,N}^{\mathtt{f}}$, at each channel state, each receiver feeds back $\lceil R\log_2 N \rceil$ bits without any compression.

For a variable-length synthesis $\mathtt{LQ}_{\xi,N}^{\mathtt{v}}$, we use a lossless compressor that produces an empty codeword (of length $0$) whenever $\mathcal{N}(\Omega_{r\ell})=N,\,\forall r$, and otherwise a codeword of length $\lceil \log_2 (N^R - 1) \rceil$ bits that can uniquely represent each $\mathcal{N}(\Omega_{r\ell})$. In other words, for a given channel state, the number of feedback bits produced by any receiver is either $0$ bits or $\lceil \log_2 (N^R - 1) \rceil$ bits\footnote{If the empty codeword is not allowed, one can use a ``$0$'' (a codeword of length $1$ bit) instead of the empty codeword, and append a ``$1$'' to each remaining codeword of length $\lceil \log_2 (N^R - 1) \rceil$ bits. The resulting codewords are uniquely decodable as well. Then, all of the results in this paper will hold for the case where the empty codeword is forbidden, given that the required feedback rates are increased by $1$ bit.

Also, note that one can achieve a better compression by using entropy encoders instead of the \textit{suboptimal} compressors that we employ. Even though the localization method was introduced originally with entropy encoders, the compressors that we use in this paper will be good enough for our purposes.}.

After all the $L$ feedback messages of the receivers are exchanged between the receivers and the relays, each of them decodes the feedback bits using the local decoder. The decoder operation is the same for each receiver and relay.
\subsubsection{Decompressor}
First, a decompressor perfectly recovers all the submessages from all the receivers, $\mathcal{N}(\omega_{r\ell}),\,r=1,\ldots,R,\,\ell = 1,\ldots,L$. All of these submessages are passed to the LQ decoder.
\subsubsection{An Illustration of the LQ Decoder}
For clarity of exposition, let us first present the LQ decoder for the example scenario in Section \ref{lqencillust} and the same channel state $\mathbf{h}'$. A more formal description of the general LQ decoder operation will be presented afterwards.

In general, the main goal of the LQ decoder is to imitate the GQ as good as possible. For our particular example, the GQ selects the relay with index $\arg\max_{r\in\{1,2,3\}} \omega_r$, where $\omega_r = \min\{\omega_{r1}, \omega_{r2}\}$. Then, the first goal of the LQ decoder should be to determine $\omega_r$. However, the LQ decoder only knows the quantized local NSNR values, $\mathcal{N}(\omega_{r\ell}),\,r=1,2,3,\,\ell = 1,2$, as shown in Fig. \ref{lencex}. Therefore, it cannot determine the exact value of $\omega_r$. However, as we shall describe in what follows, it can perfectly determine a subset of $\mathbb{R}$ where $\omega_r$ resides.

For any $\omega \in \mathbb{R}$, $\mathcal{N}(\omega) = n \implies \omega \in [\frac{n}{2}, \frac{n+1}{2}),\,n=0,\ldots,3$, and $\mathcal{N}(\omega) = 4 \implies \omega \in [2, \infty)$. We can use these facts to determine the possible locations of the local NSNR values, as represented in Fig. \ref{ldecex1} by half-open intervals (\,\epsfig{file = 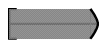, bbllx = 149pt, bblly = 389pt, bburx = 174pt, bbury = 395pt}\,)  of $\mathbb{R}$.
\begin{figure}[h]
\centering
\scalebox{0.85}{\epsfig{file = 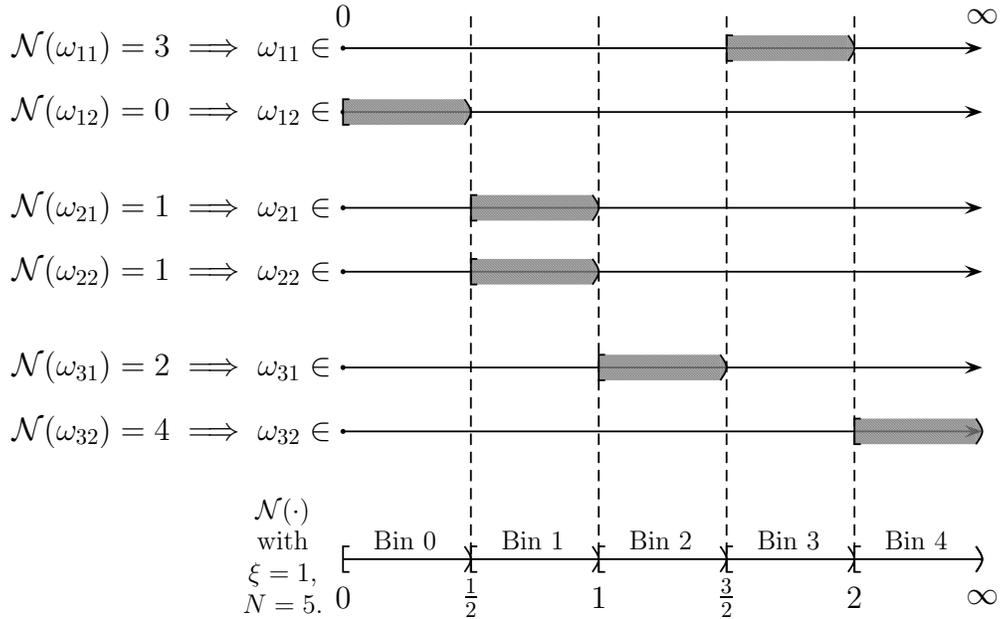, bbllx = 0pt, bblly = 414pt, bburx = 439pt, bbury = 686pt}}
\caption{Possible locations of the local NSNRs according to the LQ Decoder.}
\label{ldecex1}
\end{figure}

Since $\omega_1 = \min\{\omega_{11}, \omega_{12}\}$, and we know for sure that $\omega_{11}\in[\frac{3}{2}, 2)$ and $\omega_{12}\in[0, \frac{1}{2})$, we should have $\omega_1 \in [0, \frac{1}{2})$. Using the same arguments for all $r$, we can obtain $\omega_2 \in [\frac{1}{2},1)$, and $\omega_3 \in [1, \frac{3}{2})$. We have thus determined the possible locations of $\omega_r$, as shown in Fig. \ref{ldecex2}, by having access only to the quantized versions of $\omega_r$.
\begin{figure}[h]
\centering
\scalebox{0.85}{\epsfig{file = 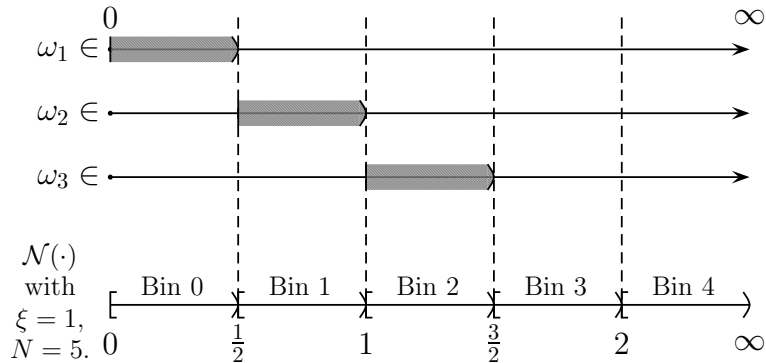, bbllx = 103pt, bblly = 527pt, bburx = 439pt, bbury = 686pt}}
\caption{Possible locations of the NSNRs according to the LQ Decoder.}
\label{ldecex2}
\end{figure}

The LQ decoder's main goal was to find $\arg\max_{r\in\{1,2,3\}} \omega_r$. Using the possible locations of $\omega_r$ that we have found, it is now clear that the third relay should provide the best NSNR. The LQ decoder's output will be $\mathbf{e}_3$. Note that this is the same output as the GQ output. Therefore, for this particular channel state, the LQ operates in the same manner as the GQ.

However, the LQ decoder will not be this lucky in general. As an example, another channel state might result in $\omega_1\in[0, \frac{1}{2})$ and $\omega_2, \omega_3\in[1, \frac{3}{2})$. In this case, the LQ decoder will know for sure that both the second relay and the third relay provides a larger NSNR than the first relay. On the other hand, it cannot determine which one of the second and the third relays provides the best NSNR. Therefore, it chooses one of them, and its decision may not be the optimal one that would instead be provided by the GQ. We shall quantify the effect of such suboptimal decisions later on.
%
%
\subsubsection{LQ Decoder}
 We now give the general and formal description of the LQ decoder.

 Let $\mathcal{R}_g \triangleq \{q:\min_{\ell} \omega_{q\ell} = \max_r\min_{\ell} \omega_{r\ell}\}$ denote the set of indices from which our GQ in (\ref{globqR}) produces its output.\footnote{$\mathcal{R}_g$ is not necessarily a singleton, but our definition of the $\arg\max$ guarantees that the GQ output is unique.} In other words, $\mathcal{R}_g$ is the set of indices of relays that provide the maximal NSNR. Also, let $\mathcal{R}_l \triangleq  \{q:\mathcal{N}(\min_{\ell} \omega_{q\ell}) = \mathcal{N}(\max_r\min_{\ell}\omega_{r\ell})\}$. Note that $\mathcal{R}_g \subset \mathcal{R}_l$. Moreover, due to the structure of $\mathcal{N}$, not only
\begin{align}
\mathcal{N}\bigl(\min_{\ell} \omega_{q\ell}\bigr)= \min_{\ell} \mathcal{N}(\omega_{q\ell}),
\end{align}
but also
\begin{align}
\mathcal{N}\bigl(\max_r\min_{\ell} \omega_{r\ell}\bigr) &  = \max_r \mathcal{N}\bigl(\min_{\ell} \omega_{r\ell}\bigr) \\ &  = \max_r \min_{\ell} \mathcal{N}(\omega_{r\ell}).
\end{align}
Therefore, $\mathcal{R}_l = \{q:\min_{\ell} \mathcal{N}(\omega_{q\ell}) = \max_r \min_{\ell} \mathcal{N}(\omega_{r\ell})\}$, and $\mathcal{R}_l$ can be easily calculated by the LQ decoder.

Since $\mathcal{R}_g \subset \mathcal{R}_l$, the LQ decoder can determine which relay selection vector(s) can possibly provide the maximal NSNR. In general, it can choose any one of the relay selection vectors that are indicated by $\mathcal{R}_{\ell}$. But, to be more precise, we define
\begin{align}
\textstyle \mathtt{LQ}_{\xi,N}^{(\mathtt{f}|\mathtt{v})}(\mathbf{h}) \triangleq \arg\max_{\mathbf{e}_r\in\mathcal{C}_\mathtt{S}} \min_{\ell} \mathcal{N}(\omega_{r\ell}).
\end{align}

\subsubsection{Localization Distortion}
Let us now study two possible cases of interest regarding the LQ output: If $\mathcal{R}_g = \mathcal{R}_l$, then the LQ output provides the same NSNR as the GQ output. Otherwise, the LQ might make a suboptimal decision. This results in what we call the \textit{localization distortion} (LD), given by
\begin{align}
\mathtt{LD}(\xi, N) \triangleq \mathtt{NER}(\mathtt{LQ}_{\xi,N}^{(\mathtt{f}|\mathtt{v})}) - \mathtt{NER}(\mathtt{GQ}_{\mathcal{C}_\mathtt{S}} ).
\end{align}

A useful upper bound on the LD can be calculated as:
\begin{align}
\mathtt{NER}(\mathtt{LQ}_{\xi,N}^{(\mathtt{f}|\mathtt{v})}) & \nonumber= \mathtt{E}_{\mathbf{h}}\bigl[\mathtt{CNER}(\mathtt{GQ}_{\mathcal{C}_\mathtt{S}}(\mathbf{h}), \mathbf{h}) \bigr| \mathcal{R}_{l} = \mathcal{R}_g\bigr] \mathtt{P}(\mathcal{R}_{l} = \mathcal{R}_g) + \\ &\qquad \qquad
\mathtt{E}_{\mathbf{h}}\bigl[\mathtt{CNER}(\mathtt{GQ}_{\mathcal{C}_\mathtt{S}}(\mathbf{h}), \mathbf{h}) \bigr| \mathcal{R}_{l} \neq \mathcal{R}_g\bigr] \mathtt{P}(\mathcal{R}_{l} \neq \mathcal{R}_g) \\
& \leq \mathtt{NER}(\mathtt{GQ}_{\mathcal{C}_\mathtt{S}} ) +
\mathtt{E}_{\mathbf{h}}\bigl[\mathtt{CNER}(\mathtt{GQ}_{\mathcal{C}_\mathtt{S}}(\mathbf{h}), \mathbf{h}) \bigr| \mathcal{R}_{l} \neq \mathcal{R}_g\bigr] \mathtt{P}(\mathcal{R}_{l} \neq \mathcal{R}_g) \\
& = \mathtt{NER}(\mathtt{GQ}_{\mathcal{C}_\mathtt{S}} ) +
\mathtt{E}_{\mathbf{h}}\bigl[\mathtt{CNER}(\mathtt{GQ}_{\mathcal{C}_\mathtt{S}}(\mathbf{h}), \mathbf{h}) \;\bigr|\;|\mathcal{R}_{l}| \geq |\mathcal{R}_g|\bigr] \mathtt{P}(|\mathcal{R}_{l}| \geq |\mathcal{R}_g|) \\
\label{handanhanim} & \leq \mathtt{NER}(\mathtt{GQ}_{\mathcal{C}_\mathtt{S}} ) +
\mathtt{LD}^{\mathtt{U}}(\xi,N),
\end{align}
where $\mathtt{LD}^{\mathtt{U}}(\xi,N)$ is the upper bound on the localization distortion, given by
\begin{align}
\label{uboundlocdist}
\mathtt{LD}^{\mathtt{U}}(\xi,N) \triangleq \mathtt{E}_{\mathbf{h}}\bigl[\mathtt{CNER}(\mathtt{GQ}_{\mathcal{C}_\mathtt{S}}(\mathbf{h}), \mathbf{h}) \bigr| |\mathcal{R}_{l}| \geq 2\bigr] \mathtt{P}(|\mathcal{R}_{l}| \geq 2).
\end{align}
\subsection{Maximal First-Order Diversity with an fLQ}
\label{seclocal}
Our main result concerning the fLQs is given by the following theorem:
\begin{theorem}
\label{theorem3}
Let $\xi_\mathtt{f} = \log^R P$, and $N_\mathtt{f} = 2$. Then, for $P$ sufficiently large, the NER with $\mathtt{LQ}_{\xi_\mathtt{f},N_\mathtt{f}}^{\mathtt{f}}$, which uses a fixed $R$ feedback bits per receiver per channel state, is upper bounded by
\begin{align}
\begin{array}{ll}
\displaystyle\vphantom{\sum_x} \mathtt{NER}(\mathtt{LQ}_{\xi_\mathtt{f},N_\mathtt{f}}^{\mathtt{f}}) \leq C_5 \frac{\log^{R}P}{P^R}, & K = 1, \\
\displaystyle\vphantom{\sum_x} \mathtt{NER}(\mathtt{LQ}_{\xi_\mathtt{f},N_\mathtt{f}}^{\mathtt{f}}) \leq C_6 \frac{\log^{2R}P}{P^R}, & K > 1.
\end{array}
\end{align}
where $0<C_5,C_6<\infty$ are constants that are independent of $P$.
\end{theorem}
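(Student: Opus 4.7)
The plan is to chain the split bound (\ref{handanhanim}), $\mathtt{NER}(\mathtt{LQ}_{\xi_\mathtt{f},N_\mathtt{f}}^{\mathtt{f}}) \leq \mathtt{NER}(\mathtt{GQ}_{\mathcal{C}_\mathtt{S}}) + \mathtt{LD}^{\mathtt{U}}(\xi_\mathtt{f},N_\mathtt{f})$, feed the first term into Theorem \ref{theorem2}, and devote the rest of the argument to bounding the localization distortion $\mathtt{LD}^{\mathtt{U}}(\log^R P, 2)$ at the same order as $\mathtt{NER}(\mathtt{GQ}_{\mathcal{C}_\mathtt{S}})$.

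First I would substitute the exponential CNER bound (\ref{combstep4}) into the definition (\ref{uboundlocdist}). Since the relay-selection GQ maximises $\gamma^{\mathtt{L}}$, we have $\gamma^{\mathtt{L}}(\mathtt{GQ}_{\mathcal{C}_\mathtt{S}}(\mathbf{h}),\mathbf{h}) = \Omega(\mathbf{h}) \triangleq \max_r\min_\ell \omega_{r\ell}$, reducing the task to bounding $C_0\,\mathtt{E}[e^{-\Omega}\mathbf{1}(|\mathcal{R}_l|\geq 2)]$. With $N_\mathtt{f}=2$ the scalar quantizer $\mathcal{N}$ has the single threshold $\xi_\mathtt{f}=\log^R P$, so $\{|\mathcal{R}_l|\geq 2\}$ decomposes disjointly into (A) $\{\Omega<\xi_\mathtt{f}\}$, in which case $\mathcal{R}_l=\{1,\dots,R\}$, and (B) $\{\text{at least two relays satisfy }\min_\ell\omega_{r\ell}\geq\xi_\mathtt{f}\}$, in which case $\Omega\geq\xi_\mathtt{f}$. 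On event (B) I would use $e^{-\Omega}\leq e^{-\xi_\mathtt{f}}=\exp(-\log^R P)$, which decays faster than any polynomial and is absorbed into any $P^{-R}$ envelope. On event (A) I would drop the indicator, $\mathtt{E}[e^{-\Omega}\mathbf{1}(\Omega<\xi_\mathtt{f})]\leq\mathtt{E}[e^{-\Omega}]$, and then observe that by (\ref{combstep4}) once more $C_0\,\mathtt{E}[e^{-\Omega}]$ is itself an upper bound on $\mathtt{NER}(\mathtt{GQ}_{\mathcal{C}_\mathtt{S}})$; the density/CDF estimates in Appendix \ref{dendededen} that already underlie Theorem \ref{theorem2} therefore give $\mathtt{E}[e^{-\Omega}] = O(P^{-R})$ for $K=1$ and $O(\log^R P \cdot P^{-R})$ for $K>1$.

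Summing the two contributions and adding the $\mathtt{NER}(\mathtt{GQ}_{\mathcal{C}_\mathtt{S}})$ bound from Theorem \ref{theorem2} yields the claim once $C_5,C_6$ are chosen large enough to absorb the universal constants and the super-polynomially small case-(B) residual; the argument in fact gives a full $\log^R P$ slack inside the stated envelopes $\log^R P\cdot P^{-R}$ and $\log^{2R}P\cdot P^{-R}$, which is helpful if one prefers a cruder replacement of the $\mathtt{E}[e^{-\Omega}]$ step by a product estimate $\mathtt{P}(\Omega<\xi_\mathtt{f})\leq\prod_r \mathtt{P}(\omega_r<\xi_\mathtt{f})$. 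The hard part will be the Appendix-\ref{dendededen} step that drives everything: near zero, $f_{\omega_r}(u)$ behaves like $u^{R-1}/P^R$ for $K=1$ but carries an extra $\log(P/u)$ factor for $K>1$. This factor arises because, for $K>1$, $T_r\triangleq\min_{\mathbf{s}\neq\hat{\mathbf{s}}}|\sum_k(s_k-\hat{s}_k)\sqrt{P_{S_k}}f_{kr}|^2$ can be driven arbitrarily close to zero by nearly orthogonal channel alignments---an obstruction absent when $K=1$, where the constellation gap lower-bounds $|s_1-\hat s_1|$---and it is precisely what forces the asymmetric $\log^R$ versus $\log^{2R}$ behaviour in the two cases of the theorem.
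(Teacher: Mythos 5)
There is a genuine gap in the core step, and it leads you to a bound that is too strong to be true. On event (A) you control the localization distortion by $\mathtt{E}[e^{-\Omega}\mathbf{1}(\Omega<\xi_\mathtt{f})]\leq\mathtt{E}[e^{-\Omega}]$, where $\Omega=\gamma^{\mathtt{L}}(\mathtt{GQ}_{\mathcal{C}_\mathtt{S}}(\mathbf{h}),\mathbf{h})=\max_r\Omega_r$ is the NSNR of the \emph{GQ's} choice. But the quantity the localization distortion must control is the CNER of the \emph{LQ's} choice, and when $\mathcal{R}_l\neq\mathcal{R}_g$ the LQ may choose a strictly worse relay than the GQ. Concretely, when all $\Omega_r<\xi_\mathtt{f}$ the LQ decoder sees $R$ identical quantized sub-messages and may pick the relay with the \emph{smallest} $\Omega_r$; the best available bound is then $\mathtt{CNER}(\mathtt{LQ}(\mathbf{h}),\mathbf{h})\leq C_0\,e^{-\min_r\Omega_r}\leq C_0$, not $C_0\,e^{-\Omega}$. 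This is exactly what the paper's proof does: it bounds the exponential by $1$ and multiplies by $\mathtt{P}(\Omega<\xi_\mathtt{f})=\prod_r F_{\Omega_r}(\xi_\mathtt{f})$, which is of order $(\xi_\mathtt{f}\,\psi(\xi_\mathtt{f}))^R\sim\log^{2R}P/P^R$ for $K>1$. (Incidentally, the threshold actually used in the appendix is $\xi_\mathtt{f}=R\log P$, not $(\log P)^R$; the latter is a typo in the theorem statement, as $(\log P)^R$ would give $\log^{R^2+R}P/P^R$ for $R\geq 2$.)

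You were led astray by the literal wording of (\ref{uboundlocdist}), which does write $\mathtt{CNER}(\mathtt{GQ}_{\mathcal{C}_\mathtt{S}}(\mathbf{h}),\mathbf{h})$ inside the expectation; however, with that reading the chain leading to (\ref{handanhanim}) is not valid, since $\mathtt{CNER}(\mathtt{LQ}(\mathbf{h}),\mathbf{h})\geq\mathtt{CNER}(\mathtt{GQ}_{\mathcal{C}_\mathtt{S}}(\mathbf{h}),\mathbf{h})$ is false in general, and on $\{\mathcal{R}_l\neq\mathcal{R}_g\}$ the needed direction does not hold either. The tell-tale sign that something is wrong is the conclusion you draw from your bound: you report a ``full $\log^R P$ slack'', i.e.\ $\mathtt{NER}(\mathtt{LQ}^\mathtt{f})=O(\log^R P/P^R)$ for $K>1$, which would give the fLQ the same second-order diversity $(R,-R)$ as the GQ. That contradicts the theorem's $(R,-2R)$ and the numerical results in Fig.~\ref{fig1a}, where the fLQ visibly underperforms the GQ by one $\log^R P$ factor. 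The ``cruder replacement'' you mention, $\mathtt{P}(\Omega<\xi_\mathtt{f})=\prod_r\mathtt{P}(\Omega_r<\xi_\mathtt{f})$, is in fact the correct mechanism (not a crude fallback), once you pair it with the CNER-of-the-LQ bound $\leq C_0$ on that event and use the CDF estimate of Proposition \ref{besselpdfcdfboundlemma}. A secondary inaccuracy: the marginal density $f_{\Omega_r}$ scales like $1/P$ (with an extra $\log P$ for $K>1$), not $u^{R-1}/P^R$; the $P^{-R}$ scaling emerges only from the product over $r$, and the asymmetry $\log^R$ vs.\ $\log^{2R}$ is driven jointly by the density factor $\psi^R$ and the threshold factor $\xi_\mathtt{f}^R$, not by the density alone.
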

\begin{proof}
Please see Appendix \ref{proofoftheorem3}.
\end{proof}
In other words, using a fixed $R$ feedback bits per receiver per channel state, we can achieve diversity $(R,-R)$ for $K=1$, and diversity $(R,-2R)$ for $K>1$. Since $(R,-R) < (R, 0)$ for the broadcast network, and $(R,-2R)<(R,-R)$ for the interference network, our fLQ has a second-order diversity loss compared to the optimal performance for both types of networks. Also, it is straightforward to show that, using $R_0$ bits, where $R_0 \in \{1,\ldots,R\}$, we can achieve diversity gains $(R_0, -R_0)$ and $(R_0, -2R_0)$ in relay-broadcast networks and relay-interference networks, respectively.

The scalar quantizer resolution for our fLQ is $\log_2 N_{\mathtt{f}} = 1$ bit per local NSNR. In what follows, we show that, by appropriately increasing the resolution with $P$, one can achieve maximal diversity, while the compressors make sure that the feedback rate remains bounded.
\subsection{Maximal Diversity with a vLQ}
For vLQs equipped with entropy coding, we have the following result:
\begin{theorem}
\label{theorem4}
Let $\epsilon > 0$ be a fixed constant that is independent of $P$. For any $\Lambda$ that satisfies $0 < \epsilon \leq \Lambda \leq P$, let $\xi_{\mathtt{v}} = \frac{1}{\Lambda}$, and
\begin{align}
\begin{array}{ll}
\displaystyle\vphantom{\sum_x}N_{\mathtt{v}} = \lceil \Lambda\log \Lambda + R \Lambda \log P +1 \rceil, & K = 1, \\
\displaystyle\vphantom{\sum_x}N_{\mathtt{v}} = \Bigl\lceil \Lambda\log \Lambda + R \Lambda \log\Bigl(\frac{P}{\log P}\Bigr)+1 \Bigr\rceil, & K > 1.
\end{array}
\end{align}
Then, for $P$ sufficiently large, we have
\begin{align}
\begin{array}{ll}
\displaystyle\vphantom{\sum_x} \mathtt{LD}^{\mathtt{U}}(\xi_{\mathtt{v}},N_{\mathtt{v}})  \leq C_7 \frac{1}{\Lambda P^R}, & K = 1, \\
 \displaystyle\vphantom{\sum_x} \mathtt{LD}^{\mathtt{U}}(\xi_{\mathtt{v}},N_{\mathtt{v}})  \leq C_8 \frac{\log^{R}P}{\Lambda P^R}, & K > 1,
 \end{array}
\end{align}
and, in addition, the feedback rate of the $\ell$th receiver satisfies
\begin{align}
\label{upboundrate}
\begin{array}{ll}
\displaystyle\vphantom{\sum_x}\mathtt{R}_{\ell}(\mathtt{LQ}_{\xi_\mathtt{v},N_\mathtt{v}}^{\mathtt{v}}) \leq C_{9} \frac{\log P}{P}, & K = 1,\\
\displaystyle\vphantom{\sum_x}\mathtt{R}_{\ell}(\mathtt{LQ}_{\xi_\mathtt{v},N_\mathtt{v}}^{\mathtt{v}}) \leq C_{10} \frac{\log^2 P}{P}, & K > 1,
\end{array}
\end{align}
where $0<C_7, C_8, C_9, C_{10}<\infty$ are constants that are independent of $\Lambda$ and $P$.
\end{theorem}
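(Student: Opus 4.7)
The plan is to establish the LD bound and the feedback-rate bound separately, in both cases by combining the tail estimates for $\omega_{r\ell}=\gamma_\ell^{\mathtt{L}}(\mathbf{e}_r,\mathbf{h})$ from Appendix \ref{dendededen} with the structural properties of the localization scheme. Throughout I will write $\omega_r \triangleq \min_\ell\omega_{r\ell}$, noting that distinct relays see disjoint fading coefficients so the $\omega_r$ are independent across $r$.

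For the LD, I would start from (\ref{uboundlocdist}) and partition the event $\{|\mathcal{R}_l|\geq 2\}$ into a \emph{saturation} event $A_1\triangleq\{\max_r\min_\ell\mathcal{N}(\omega_{r\ell})=N_\mathtt{v}\}$ and its complement within $\{|\mathcal{R}_l|\geq 2\}$, a \emph{collision} event $A_2$. On $A_1$, the GQ selects a relay with $\omega_r\geq(N_\mathtt{v}-1)\xi_\mathtt{v}$, so the Chernoff bound (\ref{combstep4}) yields $\mathtt{CNER}\leq C_0 e^{-(N_\mathtt{v}-1)\xi_\mathtt{v}}$; the construction of $N_\mathtt{v}$ is calibrated exactly so that $(N_\mathtt{v}-1)\xi_\mathtt{v}\geq \log\Lambda+R\log P$ for $K=1$ and $\geq\log\Lambda+R\log(P/\log P)$ for $K>1$, producing respectively the target magnitudes $C_0/(\Lambda P^R)$ and $C_0\log^R P/(\Lambda P^R)$. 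Since $\mathtt{P}(A_1)\leq 1$, this contributes the required order.

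On $A_2$, the maximum bin index is some $n\in\{0,\ldots,N_\mathtt{v}-2\}$ shared by at least two relays, where $\mathtt{CNER}\leq C_0 e^{-n\xi_\mathtt{v}}$ by the same Chernoff bound. Exploiting the independence of $\omega_r$ across $r$ together with the PDF/CDF bounds of Appendix \ref{dendededen}, I would estimate
\[
\mathtt{P}(A_2,\,\text{max bin}=n) \leq \binom{R}{2}\,\mathtt{P}\bigl(\omega_1\in[n\xi_\mathtt{v},(n+1)\xi_\mathtt{v})\bigr)^2\, F_{\omega_1}\bigl((n+1)\xi_\mathtt{v}\bigr)^{R-2}.
\]
For $K=1$, the appendix bounds give $\mathtt{P}(\omega_1\in[n\xi_\mathtt{v},(n+1)\xi_\mathtt{v}))\leq C\xi_\mathtt{v}/P$ and $F_{\omega_1}((n+1)\xi_\mathtt{v})\leq C(n+1)\xi_\mathtt{v}/P$; summing $C_0 e^{-n\xi_\mathtt{v}}$ times the resulting expression over $n$ invokes the geometric-polynomial series
\[
\sum_{n\geq 0} e^{-n\xi_\mathtt{v}}(n+1)^{R-2}\;\asymp\;\xi_\mathtt{v}^{-(R-1)}=\Lambda^{R-1},
\]
which combined with the prefactor $(\xi_\mathtt{v}/P)^R=1/(\Lambda P)^R$ yields the claimed $O(1/(\Lambda P^R))$. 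The $K>1$ case runs through the same template but with an additional $\log P$ riding on each PDF/CDF bound (reflecting the second-order diversity loss isolated in Theorem \ref{theorem1}), producing $O(\log^R P/(\Lambda P^R))$.

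For the feedback rate, the $\ell$th receiver emits a codeword of fixed length $\lceil\log_2(N_\mathtt{v}^R-1)\rceil$ precisely when $\min_r\omega_{r\ell}<(N_\mathtt{v}-1)\xi_\mathtt{v}$. Union bounding gives $\mathtt{P}(\text{non-empty})\leq R\,F_{\omega_{1\ell}}((N_\mathtt{v}-1)\xi_\mathtt{v})$, and the appendix CDF estimate then yields $O(\log P/P)$ for $K=1$ and an extra $\log P$ factor for $K>1$. Since $\Lambda\leq P$ forces $\log_2 N_\mathtt{v}=O(\log P)$, careful tracking of which logarithmic factors in the product $\lceil\log_2(N_\mathtt{v}^R-1)\rceil\cdot\mathtt{P}(\text{non-empty})$ are dominated by others produces (\ref{upboundrate}).

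The principal obstacle is the collision estimate in the second step. The bin width $\xi_\mathtt{v}=1/\Lambda$ can be adversarially small while $N_\mathtt{v}$ grows correspondingly, so the two in-bin probabilities must be squeezed against the PDF bound, the remaining $R-2$ relays controlled by a linear CDF estimate, and the geometric-polynomial sum evaluated sharply enough that its $\Lambda^{R-1}$ factor combines with $(\xi_\mathtt{v}/P)^R=1/(\Lambda P)^R$ to leave exactly the single surplus $1/\Lambda$ asserted by the theorem, rather than a worse $\Lambda$-dependence. The analogous accounting in the interference case must additionally interleave the extra $\log P$ factors from the appendix bounds into precisely the right places to recover the $\log^R P$ numerator.
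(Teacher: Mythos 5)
Your high-level plan matches the paper's: condition on the value of $\max_r\min_\ell\mathcal{N}(\omega_{r\ell})$, treat the saturation bin by the Chernoff calibration of $N_\mathtt{v}$, treat the lower bins by combining the appendix tail bounds with a geometric-polynomial sum, and bound the feedback rate by a union over $r$ applied to the probability of a non-empty codeword. The paper splits the non-saturation part further into ``all of $\Omega_r$ in bin $0$'' ($\mathtt{LD}_1^{\mathtt{U}}$) and ``max bin index in $\{1,\ldots,N-2\}$'' ($\mathtt{LD}_2^{\mathtt{U}}$), and for the latter it sums over the exact number $i\geq 2$ of relays in the top bin rather than taking the $\binom{R}{2}$ union you propose. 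Your $\binom{R}{2}$ bound is a legitimate (slightly looser) upper bound, so the decompositional difference is cosmetic.

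The genuine gap is in the line where you cite the appendix: for $K=1$ you assert $\mathtt{P}\bigl(\omega_1\in[n\xi_\mathtt{v},(n+1)\xi_\mathtt{v})\bigr)\leq C\xi_\mathtt{v}/P$ and $F_{\omega_1}\bigl((n+1)\xi_\mathtt{v}\bigr)\leq C(n+1)\xi_\mathtt{v}/P$, but Proposition~\ref{besselpdfcdfboundlemma} gives $F_{\Omega_r}(y+\omega)-F_{\Omega_r}(y)\leq \paa\,\omega\,\psi_0(\omega)$ with $\psi_0(\omega)=P^{-1}(1+\omega^{-1/2})$. Since $\xi_\mathtt{v}=1/\Lambda$ can be arbitrarily small, $\psi_0(\xi_\mathtt{v})$ is not uniformly $O(1/P)$; the in-bin probability is of order $\xi_\mathtt{v}^{1/2}/P$, not $\xi_\mathtt{v}/P$, and the CDF bound similarly carries an $\omega^{1/2}$ term. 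With the bounds the appendix actually supplies, the prefactor in your geometric-polynomial accounting is no longer $(\xi_\mathtt{v}/P)^R$, so the advertised cancellation against $\Lambda^{R-1}$ does not go through as written; one has to split the sum at $n\xi_\mathtt{v}\asymp 1$ and track the $i$-dependence (as the paper does via $\sum_{i\geq 2}\binom{R}{i}$ and Lemma~\ref{riemlemma}) to recover the single surplus factor $\Lambda^{-1}$. The analogous imprecision appears for $K>1$: the appendix's CDF-difference bound is $\lbb\bigl(\omega\psi(\omega)+\tfrac{\log P}{P^2}y^2\bigr)$, and the paper's argument keeps both terms, applying the uniform estimate $\psi(n\xi)\leq\tca\log P/P$ of~(\ref{pisipisi}) (valid since $n\xi\leq\epsilon^{-1}+(R+1)\log P$) and H\"older to the sum over $n$; your version suppresses the $y^2\log P/P^2$ contribution. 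The feedback-rate paragraph has the same issue in miniature: you acknowledge that the logarithmic factors in $\lceil\log_2(N_\mathtt{v}^R-1)\rceil\cdot\mathtt{P}(\text{non-empty})$ need careful tracking, but that tracking hinges exactly on the structure of the appendix bound at $\Xi=(N_\mathtt{v}-1)\xi_\mathtt{v}=O(\log P)$, which you do not carry out. In short, the skeleton is right, but the stated appendix inequalities are not the ones Proposition~\ref{besselpdfcdfboundlemma} gives, and the arithmetic you hang on them does not survive the correction without the paper's finer decomposition.
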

\begin{proof}
Please see Appendix \ref{proofoftheorem4}.
\end{proof}

We now describe several consequences of this theorem for $K>1$. The consequences for $K=1$ will be analogous.

Let us first recall from (\ref{handanhanim}) that $\mathtt{NER}(\mathtt{LQ}_{\xi,N}^{\mathtt{v}}) \leq \mathtt{NER}(\mathtt{GQ}_{\mathcal{C}_\mathtt{S}} ) +
\mathtt{LD}^{\mathtt{U}}(\xi_{\mathtt{v}},N_{\mathtt{v}})$. We have found an upper bound for $\mathtt{NER}(\mathtt{GQ}_{\mathcal{C}_\mathtt{S}} )$ in Theorem \ref{theorem2}. An upper bound for $\mathtt{LD}^{\mathtt{U}}(\xi_{\mathtt{v}},N_{\mathtt{v}})$ is given by Theorem \ref{theorem4}. Combining the two bounds, we have $\mathtt{NER}(\mathtt{LQ}_{\xi,N}^{\mathtt{v}}) \leq (C_4+C_8\Lambda^{-1})\frac{\log^R P}{P^R}$. In other words, our vLQ achieves maximal diversity.

Moreover, using the same arguments as in the previous paragraph, we have $\mathtt{NER}(\mathtt{LQ}_{\xi,N}^{\mathtt{v}}) \leq \mathtt{NER}(\mathtt{GQ}_{\mathcal{C}_\mathtt{S}}) +
\frac{C_4}{\Lambda}\frac{\log^R P}{P^R}$. Thus, by increasing $\Lambda$, the array gain performance of our vLQ can be made arbitrarily close to the one provided by the GQ, at any finite power level $P$.

What is more interesting is the behavior of the upper bound on the feedback rate given by (\ref{upboundrate}). As $P$ grows to infinity, the required feedback rate decays to zero. In other words, both the diversity and array gain benefits of $\mathtt{NER}(\mathtt{GQ}_{\mathcal{C}_\mathtt{S}})$ can be achieved using arbitrarily low feedback rates, when $P$ is sufficiently large.
\section{Simulation Results}
\label{secsims}
In this section, we present numerical evidence that verifies our analytical results. We assume that each receiver attempts to decode all the symbols from all the transmitters. In other words, $\mathcal{D}_{\ell} = \{1,\ldots,K\},\,\forall\ell$. In the graphs, ``GQ'' represents $\mathtt{GQ}_{\mathcal{C}_\mathtt{S}}$ in (\ref{globqR}), ``fLQ'' denotes $\mathtt{LQ}_{\xi_\mathtt{f},N_\mathtt{f}}^{\mathtt{f}}$ with $\xi_f$ and $N_f$ as defined in the statement of Theorem \ref{theorem3}. Also, ``vLQ-$\Lambda$'' represents $\mathtt{LQ}_{\xi_\mathtt{v},N_\mathtt{v}}^{\mathtt{v}}$ that is uniquely determined by the parameter $\Lambda$ as in the statement of Theorem \ref{theorem4}.

\subsection{Networks With Equal Parameters}
In Fig. \ref{fig1}, we show the performance results for a network with $K=R=L=2$, $\sigma_{f_{rk}}^2 = \sigma_{g_{r\ell}}^2 = p_{R_r} = p_{S_k} = 1,\,\forall r,k,\ell$, and $\mathcal{S}_1 = \mathcal{S}_2 = \{+1,-1\}$. For this network, the NERs with the GQ, fLQ, and vLQs for $\Lambda = 2^{-15},2^{-12},\ldots,2^{12}, 2^{15}$ is presented in Fig \ref{fig1a}. The horizontal and the vertical axes represent $P$ in decibels (dBs), and the NER, respectively.

We can observe that both our GQ and vLQs achieve the maximal diversity $(2,-2)$, while the fLQ achieves diversity $(2,-4)$. Moreover, as we increase $\Lambda$, the array gain performance of our vLQs can be made arbitrarily close to that of the GQ.
\begin{figure}
\centering
\subfloat[NERs.] {
\scalebox{1}{\epsfig{file = 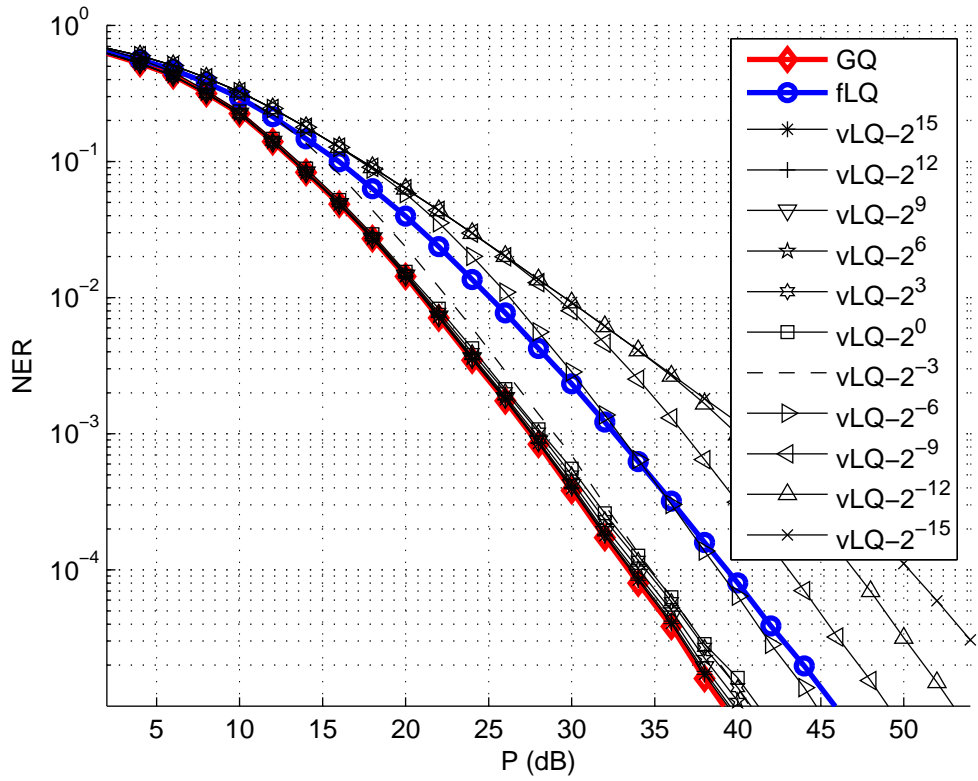, bbllx = 113pt, bblly = 245pt, bburx = 480pt, bbury = 560pt}}
\label{fig1a}
}
\vspace{-5pt}
\subfloat[SERs.] {
\scalebox{1}{\epsfig{file = 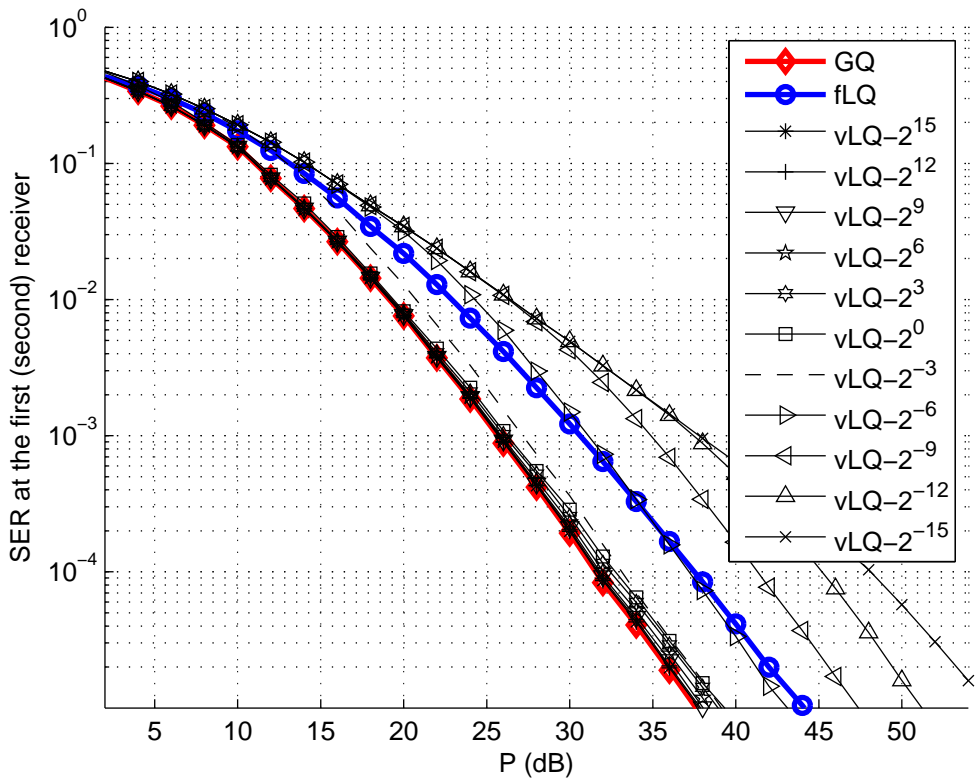, bbllx = 113pt, bblly = 245pt, bburx = 476pt, bbury = 560pt}}
\label{fig1b}
}
\caption{Performance results for a network with $K=R=L=2$.}
\end{figure}
\begin{figure}
\ContinuedFloat
\centering
\subfloat[Feedback rates.]{
\scalebox{1}{\epsfig{file = 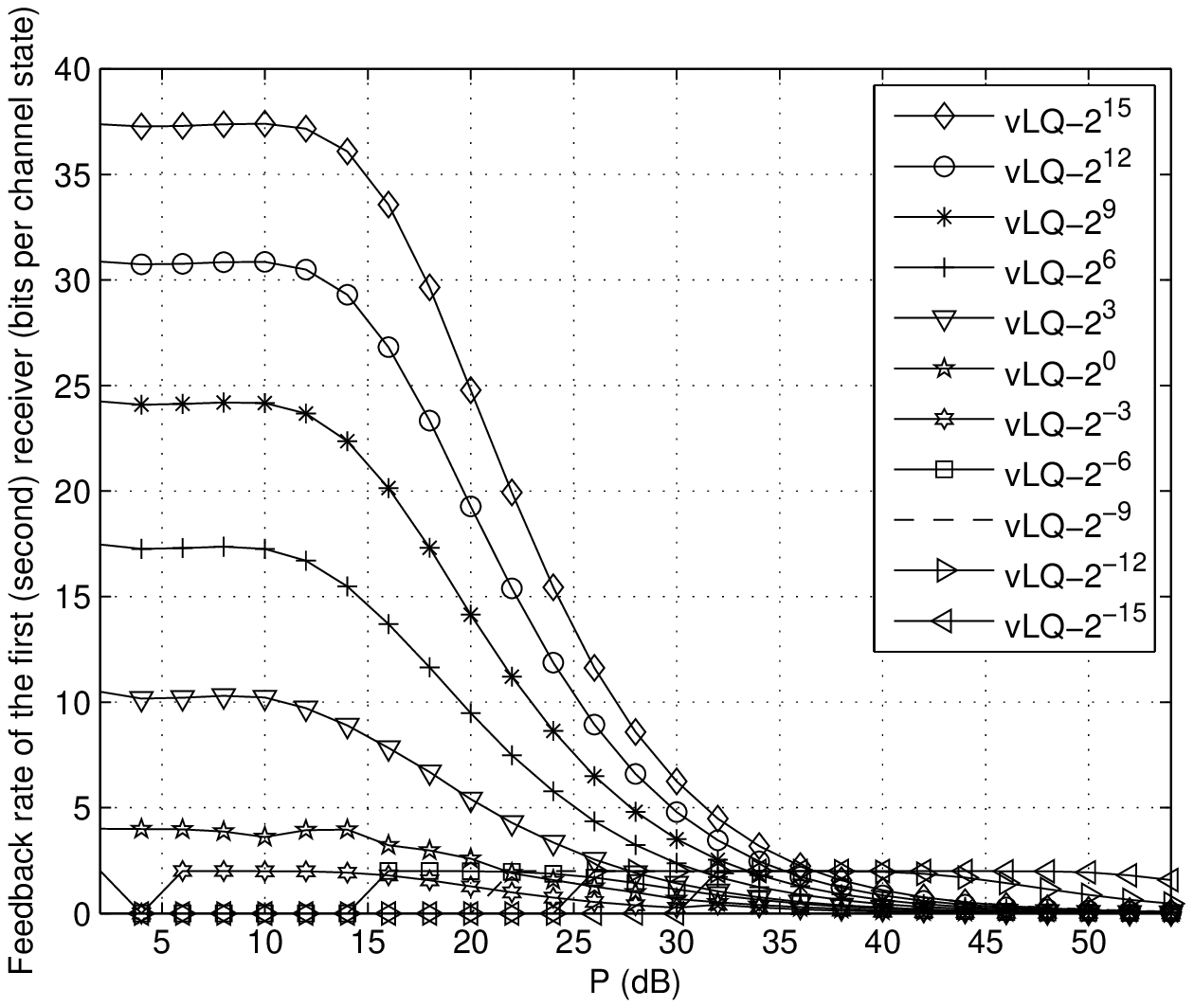, bbllx = 113pt, bblly = 245pt, bburx = 476pt, bbury = 560pt}}
\label{fig1c}
}
\caption{Performance results for a network with $K=R=L=2$ (continued).}
\label{fig1}
\end{figure}
%
%

In Fig. \ref{fig1b}, we show the SERs at the first receiver for the same network. The horizontal axis represents $P$ in decibels, while the vertical axis represents the SER at the first (second) receiver. As a result of our choice of network parameters, the SERs of each receiver is the same. Also, a particular quantizer achieves the same diversity as in Fig. \ref{fig1a}. On the other hand, since the SER is upper bounded by the NER, any quantizer in Fig. \ref{fig1b} provides more array gain than it does in Fig. \ref{fig1a}. Indeed, due to the symmetry of the network parameters, the SER performance is around $1.6$dB better than the NER performance for all quantizers.

The corresponding feedback rates of our vLQs are shown in Fig. \ref{fig1c}. The horizontal axis represents $P$ in decibels, while the vertical axis represents the feedback rate of the first (second) receiver in bits per channel state. Similarly, due to our choice of the network parameters, the feedback rates of each receiver will be the same. We can observe the validity of Theorem \ref{theorem4}, as for any $\Lambda$, the required feedback rate decays to zero at high $P$. Also, by increasing $\Lambda$, the performance of the LQs can be made arbitrarily close to the one provided by the GQ, while still using very low feedback rates. As an example, at an NER of $10^{-5}$, vLQ-$2^{15}$ needs $1.25$ bits per channel state per receiver on average and performs only $0.25$dB worse than the GQ. At a SER of $10^{-5}$, vLQ-$2^6$ uses $0.65$ bits, and GQ performs only $0.8$dB better.
%
\subsection{Networks With Unequal Parameters}
Our results also hold for networks with unequal power constraints and/or channel variances. To demonstrate that, we consider a network with $K=R=3$ and $L=4$. The parameters of the network are assumed to be $p_{S_1} = 1$, $p_{S_2} = 1.3$, $p_{S_3} = 0.7$, $p_{R_1} = 0.6$, $p_{R_2} = 2$, $p_{R_3} = 0.7$, $\mathcal{S}_1 = \mathcal{S}_3 = \{+1,-1\}$, and $\mathcal{S}_2 = \{e^{j\frac{\pi}{4}\theta}:\theta \in\{1,\ldots,4\}\}$. Also, we assume that $\sigma_{f_{kr}}^2 = F_{kr},\,\sigma_{g_{r\ell}}^2 = G_{r\ell},\,k=1,\ldots,K,\,r=1,\ldots,R,\,\ell = 1,\ldots,L$, where
\begin{align}
\mathbf{F} & = \left[\begin{array}{ccc} 2 & 1 & 0.7 \\ 1.5 & 0.9 & 3 \\ 1 & 4 & 0.5\end{array}\right],
 \end{align}
 and
 \begin{align}
\mathbf{G} & = \left[\begin{array}{cccc} 7 & 1.2 & 2.5 & 0.9 \\ 0.4 & 1.3 & 3 & 2 \\ 1.3 & 0.9 & 1.6 & 5 \end{array}\right].
\end{align}

In Fig. \ref{fig2a}, we show the NERs with the GQ, fLQ, and vLQs for $\Lambda = 2^{-15},2^{-12},\ldots,2^{0}, 2^{3}$. The results are analogous to what we have observed in Fig. \ref{fig1a}. Both the GQ and the vLQs achieve the maximal diversity $(3,-3)$, while the fLQ achieves diversity $(3,-6)$. Moreover, as we increase $\Lambda$, the array gain performance of our vLQs can be made arbitrarily close to that of the GQ.

The SERs at the first and the third receiver are shown in Fig. \ref{fig2b} and Fig. \ref{fig2c}. We can observe that, unlike the previous network with equal parameters, the SERs at each receiver is different for this network with unequal parameters. In particular, Fig. \ref{fig2b} reveals rather counterintuitive results: The fLQ outperforms the GQ at low $P$, and some of the vLQs provide a higher array gain than the GQ. The reason of these behaviors is that the GQ is optimized with respect to the NER, which takes into account the SERs of \textit{all} the receivers. Therefore, as far as the SER at a receiver is concerned, one cannot claim the optimality of the GQ. For the NER, the GQ outperforms all the other quantizers, as shown in Fig. \ref{fig2a}.

For the vLQs, the feedback rates of the first and the third receivers are shown in Fig. \ref{fig2d} and Fig. \ref{fig2e}. For both figures, the feedback rates decay to zero as $P$ grows to infinity, verifying Theorem \ref{theorem4}.
\begin{figure}
\centering
\subfloat[NERs.] {
\scalebox{0.6}{\epsfig{file = 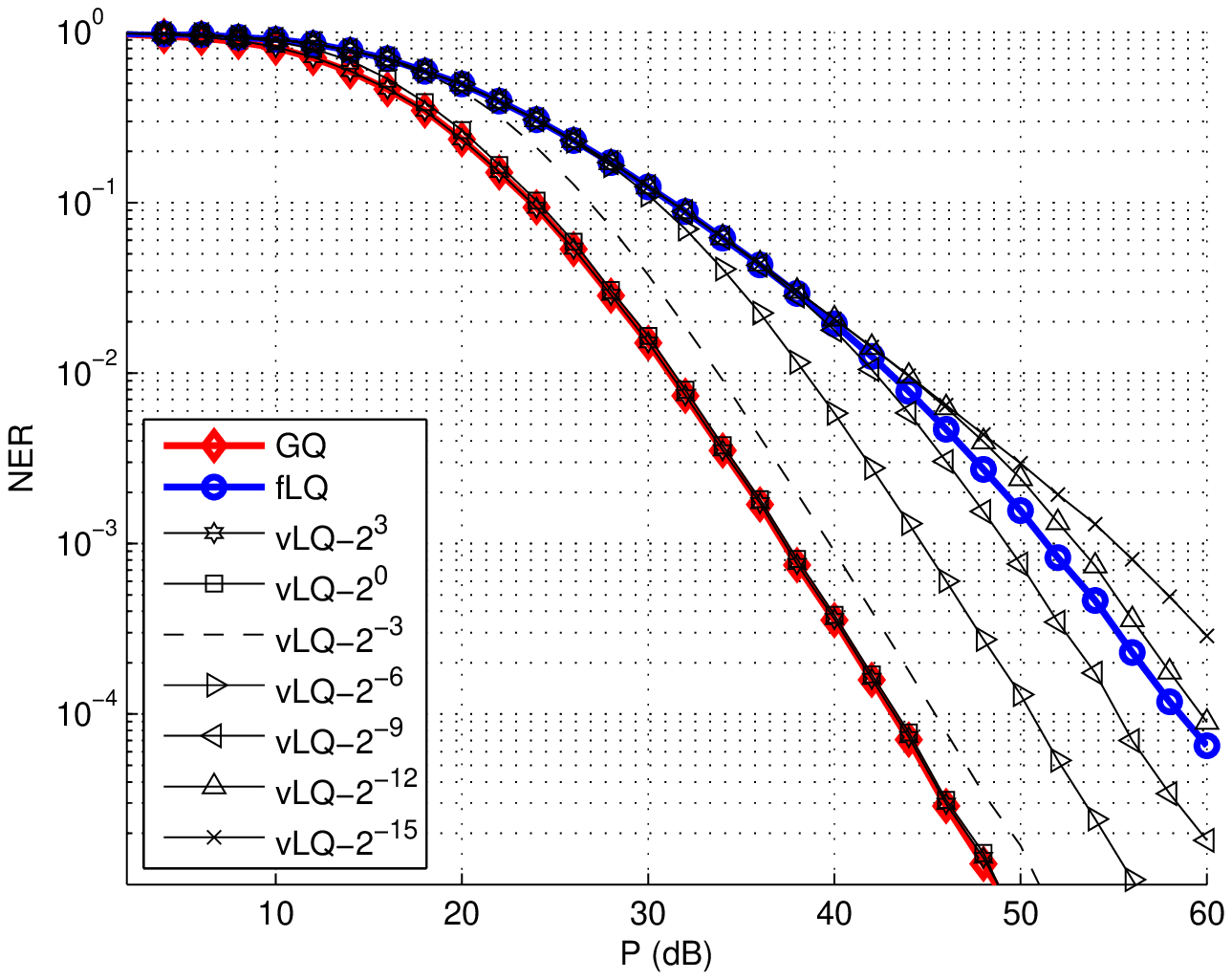, bbllx = 113pt, bblly = 245pt, bburx = 480pt, bbury = 560pt}}
\label{fig2a}
}

\subfloat[SERs at the first receiver.] {
\scalebox{0.6}{\epsfig{file = 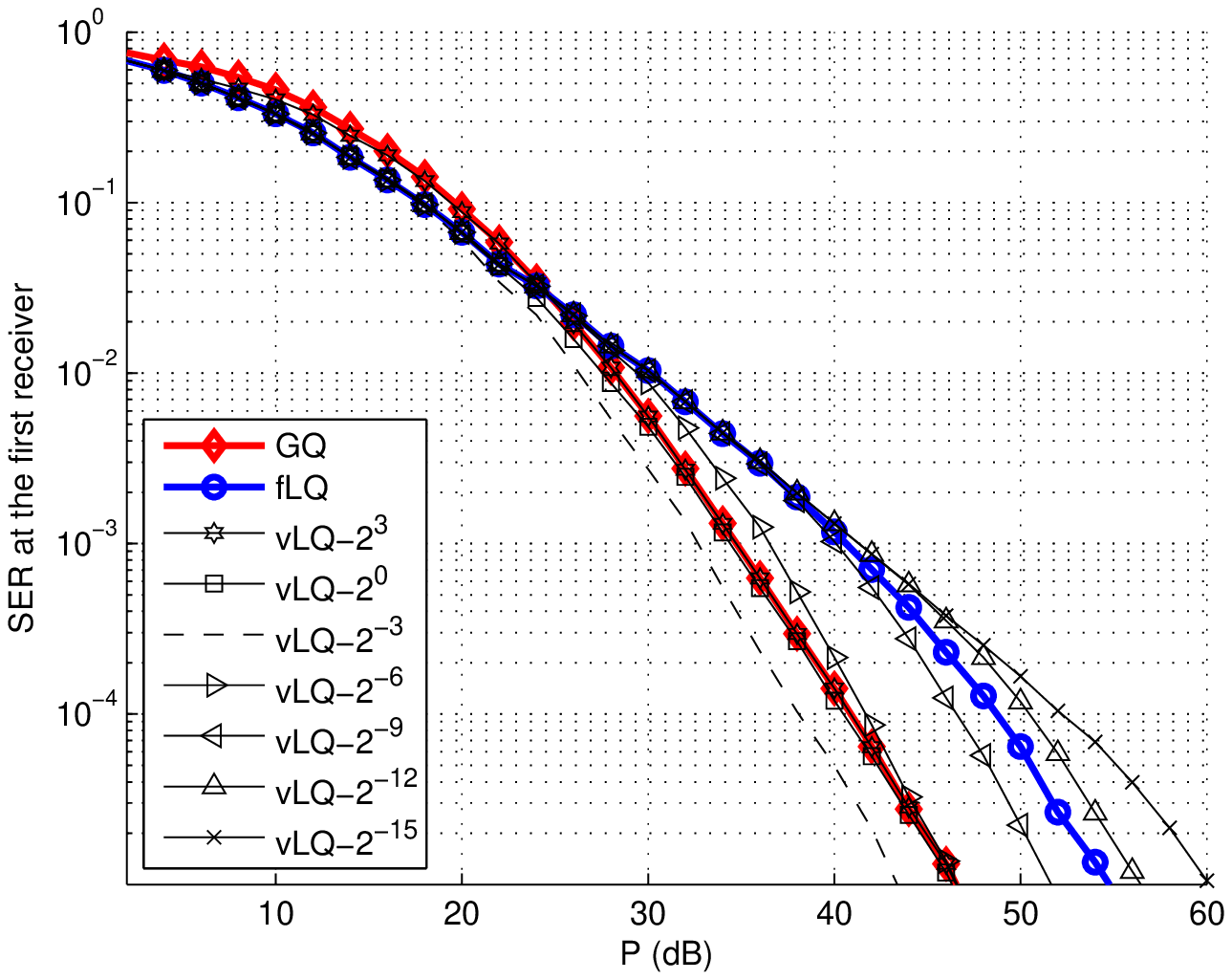, bbllx = 113pt, bblly = 245pt, bburx = 490pt, bbury = 560pt}}
\label{fig2b}
}
\subfloat[SERs at the third receiver.] {
\scalebox{0.6}{\epsfig{file = 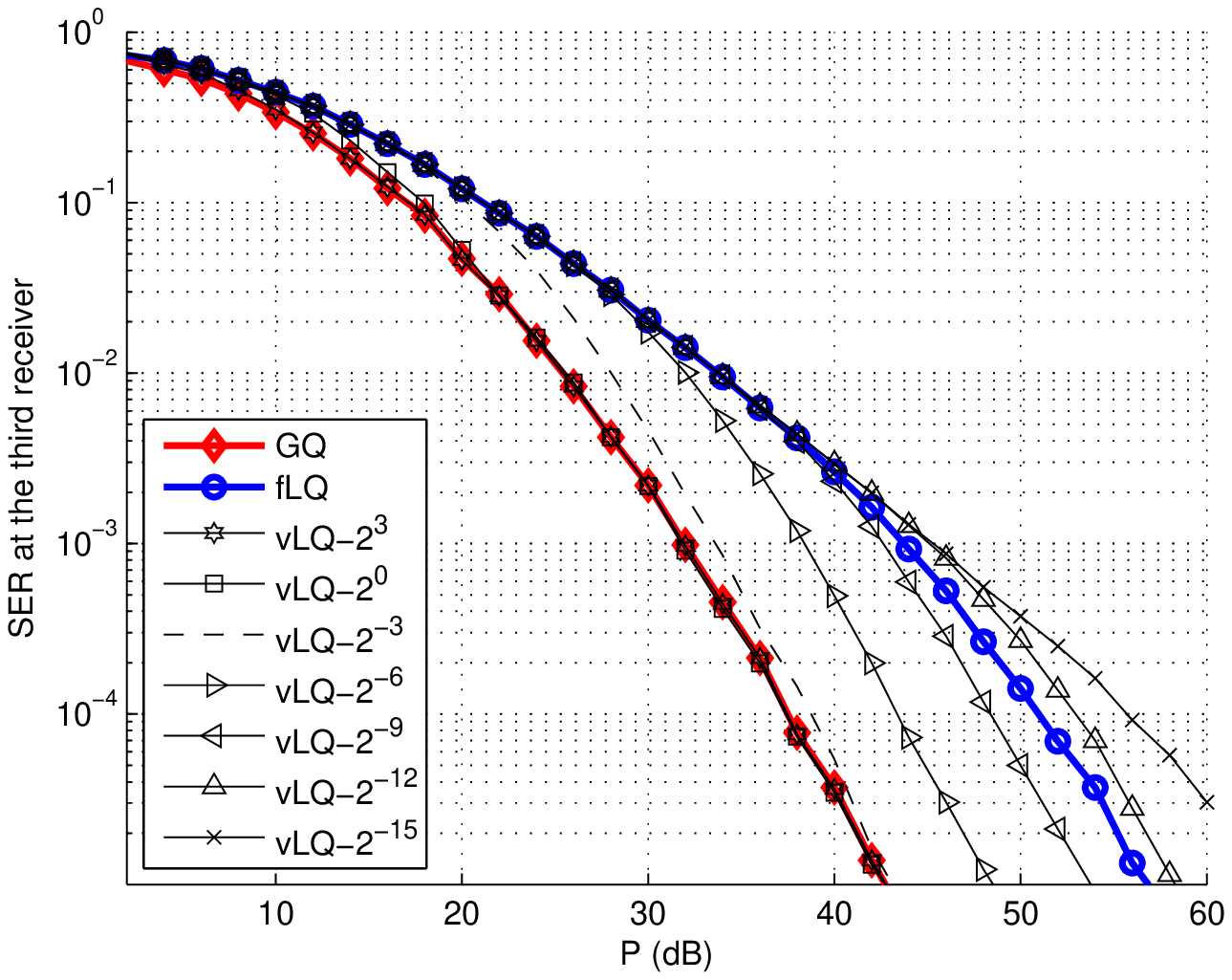, bbllx = 113pt, bblly = 245pt, bburx = 476pt, bbury = 560pt}}
\label{fig2c}
}

\subfloat[Feedback rates at the first receiver.] {
\scalebox{0.6}{\epsfig{file = 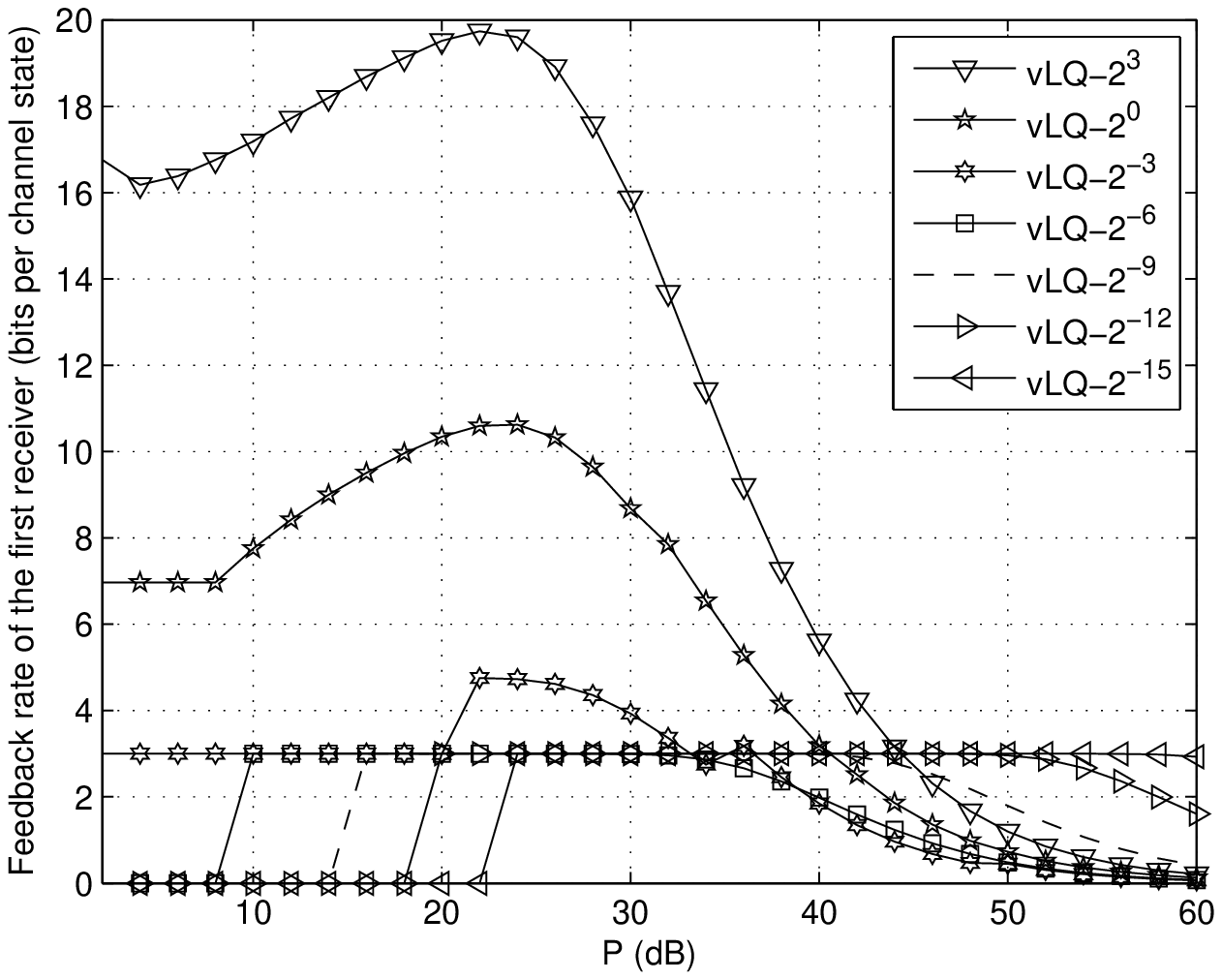, bbllx = 113pt, bblly = 245pt, bburx = 490pt, bbury = 560pt}}
\label{fig2d}
}
\subfloat[Feedback rates at the third receiver.] {
\scalebox{0.6}{\epsfig{file = 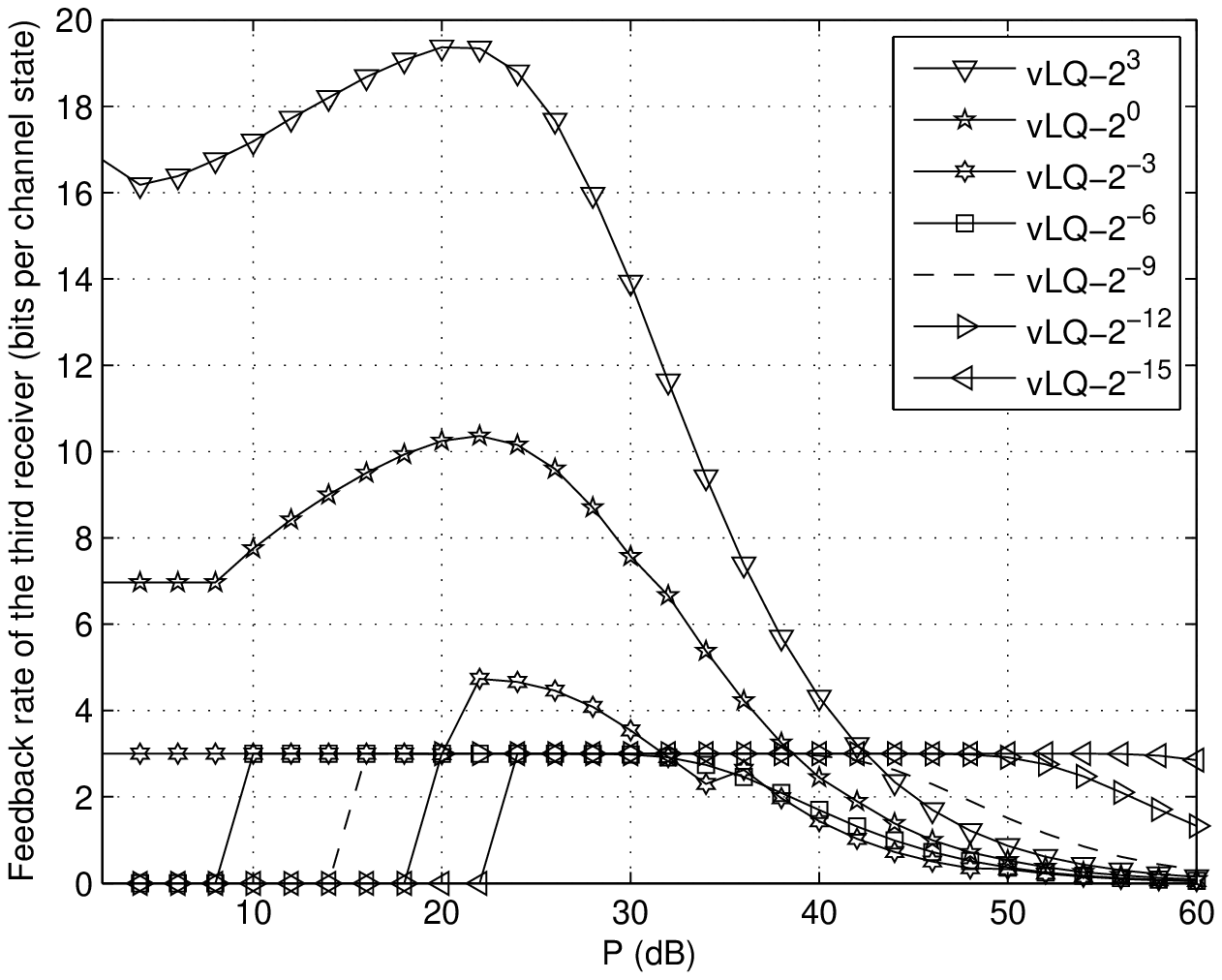, bbllx = 113pt, bblly = 245pt, bburx = 476pt, bbury = 560pt}}
\label{fig2e}
}
\caption{Performance results for a network with $K=R=3,\,L=4$.}
\label{fig2}
\end{figure}
\section{Conclusions and Discussions}
\label{secconcs}
We have studied quantized beamforming in wireless relay-interference networks with any number of transmitters, receivers and amplify-and-forward (AF) relays. Our goal has been to minimize the probability that at least one user incorrectly decodes its desired symbol(s).

We have introduced a generalized diversity measure in order to have a more precise description of the asymptotic performance of the network. It has encapsulated the conventional measure as the \textit{first-order} diversity. Additionally, it has taken into account the \textit{second-order} diversity, which is concerned with the transmitter power dependent logarithmic terms that appear in the error rate expression.

First, we have shown that, regardless of the quantizer and the amount of feedback that is used, interference results in a second-order diversity loss in our network model. Care should be taken though when making a general statement, as in this work, we have focused on AF networks with a short-term power constraint. For other forwarding methods, such as decode-and-forward, the diversity results may be different. Even under the restriction of using AF relays, one can use a long-term power constraint and achieve higher diversity. Also, the side information at the relays may be exploited for a better performance, though we believe this will not improve the diversity.

Second, we have designed a relay-selection based global quantizer (GQ) that can achieve maximal diversity. Then, using our GQ and the localization method, we have synthesized fixed-length and variable-length local quantizers (fLQs and vLQs). Our fLQ has achieved maximal first-order diversity. Our vLQ has provided not only maximal diversity gain, but also an array gain performance that can be made arbitrarily close to the one provided by the GQ. Moreover, it has achieved all of its promised gains using arbitrarily low feedback rates, when the transmitter powers are sufficiently large.

Regarding the LQs, there are many open problems that we have not addressed in this paper. One important problem is to determine whether there exists an fLQ that can achieve maximal diversity. Another goal might be to generalize our relay-selection based localization result to show that \textit{any} GQ can be localized to synthesize an LQ that can achieve the same array gain as the GQ. Due to the complicated nature of our distortion functions, the latter goal seems difficult to accomplish, even though we have observed its validity by simulations.
\appendices
\section{Upper Bounds on the PDF and CDF of $\Omega_r$}
\label{dendededen}
 First, let us present some useful lemmas.
\begin{lemma}
\label{lemmaunnamed}
Let $\widetilde{\mathbf{f}}_c$ and $\widetilde{\mathbf{f}}_s$ be zero-mean real Gaussian $K \times 1$ random vectors, with equal diagonal covariance matrices $\mathtt{E}[\widetilde{\mathbf{f}}_c \widetilde{\mathbf{f}}_c^T] = \mathtt{E}[\widetilde{\mathbf{f}}_s \widetilde{\mathbf{f}}_s^T] = \mathbf{K}$, $K_{ii}>0,\,\forall i$, $K_{ij} = 0,\,\forall i\neq j$, and zero cross-covariance $\mathtt{E}[\widetilde{\mathbf{f}}_c \widetilde{\mathbf{f}}_s^T] = \mathbf{0}$. Let $\widetilde{\mathbf{f}} \triangleq \widetilde{\mathbf{f}}_c + j\widetilde{\mathbf{f}}_s$ denote the complex Gaussian random vector with real and imaginary parts given by $\widetilde{\mathbf{f}}_c$ and $\widetilde{\mathbf{f}}_s$. Also, let $X = |\langle \mathbf{s},\widetilde{\mathbf{f}} \rangle |^2$, where $\mathbf{s}\in\mathbb{C}^K-\{\mathbf{0}\}$ is a fixed vector, and $W = \|\widetilde{\mathbf{f}}\|^2 - X$. Then, there is a constant $0 < \lambda_0 < \infty$, such that for all $x\geq 0$ and $w \geq 0$, we have
\begin{align}
 f_{X, W}(x,w) \leq \frac{\lambda_0^K}{\Gamma(K-1)} w^{K-2}\exp\bigr[-\lambda_0(x+w)\bigl].
\end{align}
\end{lemma}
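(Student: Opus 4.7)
I plan to bound the joint density by reducing to an isotropic Gaussian setting, where explicit calculation is feasible.

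Set $K_{\max} = \max_i K_{ii}$. The pointwise inequality $1/(2K_{ii}) \geq 1/(2K_{\max})$ gives $f_{\widetilde{\mathbf{f}}}(\mathbf{f}) \leq C_0 \cdot f^{\mathrm{iso}}_{\widetilde{\mathbf{f}}}(\mathbf{f})$ for every $\mathbf{f}$, where $f^{\mathrm{iso}}$ is the density of an isotropic $\mathcal{CN}(0, 2K_{\max}\mathbf{I})$ vector and $C_0 = \prod_i K_{\max}/K_{ii}$ is a finite constant depending only on $\mathbf{K}$. Because pointwise domination of the underlying density transfers to the density of any derived functional (a direct consequence of integrating the pointwise inequality against any Borel set), one obtains $f_{X, W}(x, w) \leq C_0 \cdot f^{\mathrm{iso}}_{X, W}(x, w)$, and the problem reduces to computing $f^{\mathrm{iso}}_{X, W}$ in the isotropic case.

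In the isotropic case I would apply a unitary rotation $U$ with $U\mathbf{s} = \|\mathbf{s}\|\mathbf{e}_1$. Rotational invariance preserves the distribution of $\widetilde{\mathbf{f}}$, so in the rotated coordinates the scalar quantities $U_0 = |\widetilde{f}'_1|^2$ and $V_0 = \sum_{i\geq 2}|\widetilde{f}'_i|^2$ are independent, with $U_0 \sim \Gamma(1, 1/\lambda)$ (exponential with rate $\lambda$) and $V_0 \sim \Gamma(K-1, 1/\lambda)$, where $\lambda = 1/(2K_{\max})$. Writing $X = \|\mathbf{s}\|^2 U_0$ and $W = (1 - \|\mathbf{s}\|^2) U_0 + V_0$, a linear change of variables with constant Jacobian $\|\mathbf{s}\|^2$, together with the algebraic identity $U_0 + V_0 = X + W$, yields
\begin{equation*}
f^{\mathrm{iso}}_{X, W}(x, w) = \frac{\lambda^K}{\|\mathbf{s}\|^2\,\Gamma(K-1)}\,v_0^{K-2}\,e^{-\lambda(x+w)}\,\mathbf{1}\{v_0 \geq 0\},\quad v_0 = w - (1-\|\mathbf{s}\|^2)x/\|\mathbf{s}\|^2.
\end{equation*}

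The final step is to bound $v_0^{K-2}$ by a scalar multiple of $w^{K-2}$ and then absorb the accumulated prefactors into the form $\lambda_0^K/\Gamma(K-1)$ required by the lemma. When $\|\mathbf{s}\|^2 \leq 1$ one has $v_0 \leq w$ and hence $v_0^{K-2} \leq w^{K-2}$; choosing $\lambda_0 \leq \lambda$ small enough that $\lambda_0^K$ dominates $C_0\lambda^K/\|\mathbf{s}\|^2$ then gives the claim. When $\|\mathbf{s}\|^2 > 1$, the bound $v_0 \leq w + x$ together with the binomial inequality $(w+x)^{K-2} \leq 2^{K-3}(w^{K-2}+x^{K-2})$ introduces a residual $x^{K-2} e^{-\lambda(x+w)}$ term, which I would control by splitting off a portion of the exponential decay and using that $x^{K-2} e^{-(\lambda - \lambda_0)x}$ is uniformly bounded on $x \geq 0$ for any $\lambda_0 < \lambda$. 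The main obstacle is precisely this balancing act: the chosen $\lambda_0$ must be large enough for $\lambda_0^K$ to dominate the density at the origin yet small enough for $e^{-\lambda_0(x+w)}$ to dominate the tail at infinity, and the freedom of the lemma to let $\lambda_0$ depend on $\mathbf{K}$ and $\mathbf{s}$ is exactly what allows both constraints to be met simultaneously.
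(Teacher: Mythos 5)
Your route is genuinely different from the paper's, and cleaner. The paper works directly in the non-isotropic setting: after the unitary rotation sending $\mathbf{s}\mapsto\mathbf{e}_1$, the rotated components $|f_i|^2$ are correlated, so the paper invokes a closed-form joint density of correlated squared magnitudes (the formula from Mallik), minimizes over the phase variables, and then bounds the resulting quadratic form by the eigenvalues of the inverse complex covariance $\mathbf{D}$, finally integrating over a simplex. You sidestep all of this by dominating the underlying Gaussian density pointwise with an isotropic one; after the rotation the coordinates become independent exponentials and the joint law of $(X,W)$ is obtained by an elementary linear change of variables from a product of two Gamma densities. That avoids the special-function machinery entirely.

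However, the final ``balancing act'' does not go through as written. After domination and rotation you have (taking $\|\mathbf{s}\|=1$, so $v_0=w$)
\begin{equation*}
f_{X,W}(x,w)\;\le\;\frac{C_0\,\lambda^K}{\Gamma(K-1)}\,w^{K-2}\,e^{-\lambda(x+w)},\qquad C_0=\prod_i K_{\max}/K_{ii}\ge 1.
\end{equation*}
You need a single $\lambda_0$ with $C_0\lambda^K e^{-\lambda t}\le\lambda_0^K e^{-\lambda_0 t}$ for all $t\ge 0$. If $\lambda_0<\lambda$ the constraint is tightest at $t=0$ and forces $\lambda_0^K\ge C_0\lambda^K\ge\lambda^K$, contradicting $\lambda_0<\lambda$; if $\lambda_0\ge\lambda$ the exponential comparison fails as $t\to\infty$. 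Your statement ``choosing $\lambda_0\le\lambda$ small enough that $\lambda_0^K$ dominates $C_0\lambda^K/\|\mathbf{s}\|^2$'' has the inequality reversed: decreasing $\lambda_0$ shrinks $\lambda_0^K$. The root cause is that the lemma's right-hand side integrates to exactly one, so it can dominate a genuine density only by being equal to it a.e.\ --- which holds only in the isotropic case. Interestingly this is not a defect peculiar to your argument: the paper's own proof commits the same error, asserting $\det(\mathbf{D})=\prod_i\lambda_i\le\lambda^K$ with $\lambda=\min_i\lambda_i$, which runs the wrong way. The lemma should carry a separate multiplicative constant, e.g.\ $f_{X,W}(x,w)\le C\,\frac{\lambda_0^K}{\Gamma(K-1)}w^{K-2}e^{-\lambda_0(x+w)}$ --- a harmless change for every use of the lemma in the appendix --- and with that correction your argument is complete (and simpler than the paper's). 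Two small side remarks: $\|\mathbf{s}\|=1$ is implicitly required for $W\ge 0$, so your $\|\mathbf{s}\|^2>1$ branch is unnecessary; and the bound $v_0^{K-2}\le w^{K-2}$ uses $K\ge 2$, which is also implicit since $\Gamma(K-1)$ must be finite.
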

\begin{proof}
 Let $\mathbf{f} \triangleq \mathbf{U}\widetilde{\mathbf{f}}$ for a unitary matrix $\mathbf{U}$ that satisfies $\mathbf{e}_1 = \mathbf{U}\mathbf{s}$. Also, let
 $X_i = |f_i|^2$, and $\mathbf{X} = [X_1 \cdots X_K]^T$. Note that
 \begin{align}
 X = X_1 = |\langle \mathbf{s}, \widetilde{\mathbf{f}}\rangle|^2 = |\langle \mathbf{e}_1, \mathbf{f}\rangle|^2 ,
 \end{align}
 and since $\mathbf{U}$ is norm-preserving,
 \begin{align}
 \textstyle W = \|\widetilde{\mathbf{f}}\|^2 - X = \|\mathbf{f}\|^2 - X_1 = \sum_{i=2}^K X_i. \end{align}
 Now, let $Y_1 = X_1$, $Y_2 = W = \sum_{i=2}^K X_i$, and $Y_k = X_k,\,k=3,\ldots,K$. Using such a transformation of RVs\cite{aofrvs}, we have
\begin{align}
\label{generalfxexp}
 f_{Y_1, Y_2}(y_1,y_2) = \int_0^{y_2} \int_0^{y_2-y_3} \cdots \int_0^{y_2-\sum_{k=3}^{K-1}y_k} f_{\mathbf{X}}(y_1, y_2-\textstyle \sum_{k=3}^Ky_k, y_3,\ldots,y_K)\mathrm{d}y_3\cdots \mathrm{d}y_K.
\end{align}
In the following, we find an upper bound for $f_{\mathbf{X}}(\mathbf{x})$ for any $\mathbf{x} = [x_1 \cdots x_K]^T$ with $x_i \geq 0,\,\forall i$.

Let $\mathbf{U}_c \triangleq \Re(\mathbf{U})$, and $\mathbf{U}_s \triangleq \Im(\mathbf{U})$. The real and imaginary parts of $\mathbf{f}$ can be calculated to be $\mathbf{f}_c \triangleq \Re(\mathbf{f}) = \mathbf{U}_c\widetilde{\mathbf{f}}_c - \mathbf{U}_s\widetilde{\mathbf{f}}_s$, and $\mathbf{f}_s \triangleq \Im(\mathbf{f}) = \mathbf{U}_c\widetilde{\mathbf{f}}_s + \mathbf{U}_s\widetilde{\mathbf{f}}_c$. Then, it is straightforward to show that
\begin{align}
\mathbf{K}_{cc} & \triangleq \mathtt{E}[\mathbf{f}_c\mathbf{f}_c^T] = \mathbf{U}_c \mathbf{K} \mathbf{U}_c^T + \mathbf{U}_s \mathbf{K} \mathbf{U}_s^T,  \\
\mathbf{K}_{ss} & \triangleq \mathtt{E}[\mathbf{f}_s\mathbf{f}_s^T] =
 \mathbf{K}_{cc},
\end{align}
and
\begin{align}
\mathbf{K}_{cs} \triangleq \mathtt{E}[\mathbf{f}_c\mathbf{f}_s^T] = \mathbf{U}_c \mathbf{K} \mathbf{U}_s^T - \mathbf{U}_s \mathbf{K} \mathbf{U}_c^T.
\end{align}
Therefore, $\mathbf{K}_{cc}$ and $\mathbf{K}_{ss}$ are symmetric matrices, and $\mathbf{K}_{cs} = -\mathbf{K}_{cs}^T$. The latter implies that for any $\mathbf{x}\in\mathbb{R}^K$, $\mathbf{x}^T\mathbf{K}_{cs}\mathbf{x} = 0$. Using these facts, we now show that $\mathbf{K}_{cc} + j\mathbf{K}_{cs}$ is positive definite.

For any $\mathbf{x}\in\mathbb{C}^K$, we have
\begin{align}
\mathbf{x}^H(\mathbf{K}_{cc} + j\mathbf{K}_{cs})\mathbf{x} & = (\mathbf{x}_c^T - j\mathbf{x}_s^T)(\mathbf{K}_{cc} + j\mathbf{K}_{cs})(\mathbf{x}_c + j\mathbf{x}_s) \\
\nonumber & = \mathbf{x}_c^T\mathbf{K}_{cc}\mathbf{x}_c - \mathbf{x}_c^T\mathbf{K}_{cs}\mathbf{x}_s +\mathbf{x}_s^T\mathbf{K}_{cs}\mathbf{x}_c + \mathbf{x}_s^T \mathbf{K}_{cc}\mathbf{x}_s + \\ & \qquad  j(\mathbf{x}_c^T\mathbf{K}_{cs}\mathbf{x}_c +\mathbf{x}_c^T\mathbf{K}_{cc}\mathbf{x}_s  - \mathbf{x}_s^T\mathbf{K}_{cc}\mathbf{x}_c + \mathbf{x}_s^T \mathbf{K}_{cs}\mathbf{x}_s) \\
& = \mathbf{x}_c^T\mathbf{K}_{cc}\mathbf{x}_c +2\mathbf{x}_s^T\mathbf{K}_{cs}\mathbf{x}_c + \mathbf{x}_s^T \mathbf{K}_{cc}\mathbf{x}_s \\
& = \mathbf{x}_c^T\mathtt{E}[\mathbf{f}_s\mathbf{f}_s^T]\mathbf{x}_c +2\mathbf{x}_s^T\mathtt{E}[\mathbf{f}_c\mathbf{f}_s^T]\mathbf{x}_c + \mathbf{x}_s^T \mathtt{E}[\mathbf{f}_c\mathbf{f}_c^T]\mathbf{x}_s \\
& = \mathtt{E}[(\mathbf{x}_c^T\mathbf{f}_s + \mathbf{x}_s^T\mathbf{f}_c)^2] \\
& = \mathtt{E}[((\mathbf{x}_c^T\mathbf{U}_s+ \mathbf{x}_s^T\mathbf{U}_c)\widetilde{\mathbf{f}}_c +
(\mathbf{x}_c^T\mathbf{U}_c - \mathbf{x}_s^T\mathbf{U}_s)\widetilde{\mathbf{f}}_s)^2] \\
& = \mathbf{x}_1^T \mathbf{K} \mathbf{x}_1 + \mathbf{x}_2^T \mathbf{K} \mathbf{x}_2,
\end{align}
where $\mathbf{x}_1 = \mathbf{U}_s^T\mathbf{x}_c+\mathbf{U}_c^T \mathbf{x}_s$, and $\mathbf{x}_2 = \mathbf{U}_c^T\mathbf{x}_c-\mathbf{U}_s^T \mathbf{x}_s$. But,
\begin{align}
\mathbf{U}\mathbf{x} = (\mathbf{U}_c^T + j\mathbf{U}_s^T)(\mathbf{x}_c + j\mathbf{x}_s) = \mathbf{x}_2 + j\mathbf{x}_1,
 \end{align}
and since $\mathbf{U}\mathbf{x}\neq 0$, either $\mathbf{x}_1\neq 0$ or $\mathbf{x}_2 \neq 0$. Also, since $\mathbf{K}$ is positive definite, either $\mathbf{x}_1^T \mathbf{K} \mathbf{x}_1 > 0$, or $\mathbf{x}_2^T \mathbf{K} \mathbf{x}_2 > 0$. Thus, $\mathbf{x}^H(\mathbf{K}_{cc} + j\mathbf{K}_{cs})\mathbf{x} > 0,\,\forall \mathbf{x}\in\mathbb{C}^K - \{\mathbf{0}\}$, and $\mathbf{K}_{cc} + j\mathbf{K}_{cs}$ is positive definite.

Let  $\mathbf{A} + j\mathbf{B} = (\mathbf{K}_{cc}+j\mathbf{K}_{cs})^{-1}$, and $\theta_{ij} = \tan^{-1}(B_{ij}/A_{ij})$. According to \cite[Eq. (24)]{mallik1}, the joint PDF of $\mathbf{X}_i,\,i=1,\ldots,K$ can be expressed as \begin{align}
 f_{\mathbf{X}}(\mathbf{x}) = (4\pi)^{-K}\mathrm{det}(\mathbf{A} + j\mathbf{B})\int_{-\pi}^{\pi}\cdots \int_{-\pi}^{\pi}\exp\Bigr[-\frac{1}{2}f(\boldsymbol{\phi}) \Bigl]\mathrm{d}\boldsymbol{\phi},
\end{align}
where
\begin{align}
\label{fksdhkfshdkfhskdhfksdf}
 f(\boldsymbol{\phi}) & =  \sum_{i=1}^K A_{ii}x_i + 2\sum_{\substack{i,j=1 \\ i<j}}^K(A_{ij}^2 + B_{ij}^2)^{\frac{1}{2}}\sqrt{x_ix_j}\cos(\phi_i - \phi_j + \theta_{ij}),
\end{align}
and $\mathbf{D} \triangleq \mathbf{A} + j\mathbf{B}$ is a Hermitian matrix\cite[Eq. (21)]{mallik1}.

 Since $f(\boldsymbol{\phi})$ is continuous, and the range of integration $[-\pi, \pi]^K$ is a compact subspace of $\mathbb{R}^K$, there exists $\boldsymbol{\phi^{\star}}\in\mathbb{R}^K$ with $\boldsymbol{\phi^{\star}} = [\phi_1^{\star} \cdots \phi_K^{\star}]$, such that $f(\boldsymbol{\phi}^{\star}) \leq f(\boldsymbol{\phi}),\,\forall \phi\in[-\pi, \pi]^K$. As a result,
 \begin{align}
 \label{ajdlskjdalksdlkasldkas}
f_{\mathbf{X}}(\mathbf{x}) \leq 2^{-K} \mathrm{det}(\mathbf{D})\exp\Bigr[-\frac{1}{2}f(\boldsymbol{\phi}^{\star}) \Bigl].
\end{align}
Now, let
\begin{align}
\mathbf{x}_{\mathtt{cos}} & = [\sqrt{x_1}\cos(\phi_1^{\star}) \cdots \sqrt{x_K}\cos(\phi_K^{\star})]^T, \\  \mathbf{x}_{\mathtt{sin}} & = [\sqrt{x_1}\sin(\phi_1^{\star}) \cdots \sqrt{x_K}\sin(\phi_K^{\star})]^T.
\end{align}
Then, using (\ref{fksdhkfshdkfhskdhfksdf}), $f(\boldsymbol{\phi}^{\star})$ can be expressed as
\begin{align}
\label{fboldymsoblksdlf}
f(\boldsymbol{\phi}^{\star}) = \Re(\mathbf{x}_{\mathtt{cos}}^T \mathbf{D} \mathbf{x}_{\mathtt{cos}} +  \mathbf{x}_{\mathtt{sin}}^T \mathbf{D} \mathbf{x}_{\mathtt{sin}} ).
\end{align}

We have shown that $K_{cc}+jK_{cs}$ is positive definite. It follows that $\mathbf{D} = (K_{cc}+jK_{cs})^{-1}$ is also positive definite, and thus has eigenvalues $\lambda_i > 0,\,\forall i$. Also, since $\mathbf{D}$ is a Hermitian matrix, it admits a spectral decomposition $\mathbf{D} = \sum_{i=1}^K \lambda_i \mathbf{u}_i \mathbf{u}_i^H$, where $\mathbf{u}_i,\,i=1,\ldots,K$ form an orthonormal basis for $\mathbb{C}^K$. It follows that
\begin{align}
 \mathbf{x}_{\mathtt{cos}}^T \mathbf{D} \mathbf{x}_{\mathtt{cos}} = \mathbf{x}_{\mathtt{cos}}^H \mathbf{D} \mathbf{x}_{\mathtt{cos}} & \textstyle = \sum_{i=1}^K \lambda_i (\mathbf{u}_i^H \mathbf{x}_{\mathtt{\cos}})^2 \\ \label{lamlamlam1} & > \lambda \|\mathbf{x}_{\mathtt{cos}}\|^2,
\end{align}
where $\lambda =  \min_i \lambda_i$. Similarly, we have
\begin{align}
\label{lamlamlam2} \mathbf{x}_{\mathtt{sin}}^T \mathbf{D} \mathbf{x}_{\mathtt{sin}} > \lambda \|\mathbf{x}_{\mathtt{sin}}\|^2.
\end{align}
Using (\ref{lamlamlam1}), (\ref{lamlamlam2}) and (\ref{fboldymsoblksdlf}), a lower bound on $f(\boldsymbol{\phi}^{\star})$ is given by
\begin{align}
f(\boldsymbol{\phi}^{\star})&  > \Re(\lambda \|\mathbf{x}_{\mathtt{cos}}\|^2 + \lambda \|\mathbf{x}_{\mathtt{sin}}\|^2)\\ & \label{fkdjfhskjdhfs}  = \lambda \textstyle \sum_{i=1}^K x_i,
\end{align}
Then, using (\ref{fkdjfhskjdhfs}) and (\ref{ajdlskjdalksdlkasldkas}), we can find an upper bound on  $f_{\mathbf{X}}(\mathbf{x})$ as
\begin{align}
 f_{\mathbf{X}}(\mathbf{x}) & \leq 2^{-K} \mathrm{det}(\mathbf{D})\textstyle\prod_{i=1}^K \exp(-\frac{\lambda}{2} x_i) \\ \label{kdsjhfksdjhfdskjhf} & \leq \lambda_0^K \textstyle\prod_{i=1}^K \exp(-\lambda_0 x_i),
\end{align}
where $\lambda_0 = \frac{\lambda}{2}$. For the last inequality, we have used the fact that $\det(\mathbf{D}) = \prod_{i=1}^K \lambda_i \leq \lambda^K$.

The lemma follows by substituting (\ref{kdsjhfksdjhfdskjhf}) to (\ref{generalfxexp}) and performing the integration.
\end{proof}
\begin{lemma}
\label{pdfcdfbound}
Let $X_1,\ldots,X_n$ be $n$ non-negative possibly dependent RVs, and $Z = \min_nX_n$. Then,
\begin{align}
\label{kolakutusu1}
f_Z(z) \leq \sum_{i=1}^n f_{X_i}(z),
\end{align}
and
\begin{align}
\label{kolakutusu2}
F_Z(z) \leq  \sum_{i=1}^n F_{X_i}(z).
\end{align}
\end{lemma}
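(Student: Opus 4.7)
The plan is to handle the CDF bound first (it is essentially a one-line union bound), and then leverage it to obtain the PDF bound by differentiation of the CDF increments.

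For the CDF part, I would observe that $\{Z \leq z\} = \{\min_i X_i \leq z\} = \bigcup_{i=1}^n \{X_i \leq z\}$. Applying the standard union bound immediately gives
\begin{align}
F_Z(z) = \mathtt{P}\Bigl(\bigcup_{i=1}^n \{X_i \leq z\}\Bigr) \leq \sum_{i=1}^n \mathtt{P}(X_i \leq z) = \sum_{i=1}^n F_{X_i}(z),
\end{align}
which is (\ref{kolakutusu2}). No independence assumption is needed because the union bound is valid for arbitrary (possibly dependent) events.

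For the PDF part, the idea is to bound the increment of $F_Z$ over a small interval and then divide by its length. The key set-theoretic observation is
\begin{align}
\{z < Z \leq z+\Delta\} \subseteq \bigcup_{i=1}^n \{z < X_i \leq z+\Delta\},
\end{align}
because if the minimum lies in $(z, z+\Delta]$, then at least one of the $X_i$ must lie in that interval as well. Taking probabilities and applying the union bound yields
\begin{align}
F_Z(z+\Delta) - F_Z(z) \leq \sum_{i=1}^n \bigl(F_{X_i}(z+\Delta) - F_{X_i}(z)\bigr).
\end{align}
Dividing by $\Delta>0$ and letting $\Delta \downarrow 0$, the right-hand side converges to $\sum_{i=1}^n f_{X_i}(z)$ wherever each $f_{X_i}$ exists as the derivative of $F_{X_i}$. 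The same argument with $\Delta<0$ shows that $F_Z$ has bounded variation locally dominated by $\sum_i F_{X_i}$, so $F_Z$ is absolutely continuous and thus differentiable almost everywhere, with the derivative satisfying (\ref{kolakutusu1}).

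The only subtle point is the existence of $f_Z$: since $\sum_i F_{X_i}$ is absolutely continuous and dominates the increments of $F_Z$, it follows that $F_Z$ is absolutely continuous, which guarantees that $f_Z$ exists almost everywhere and the pointwise inequality derived above is meaningful. I do not anticipate any genuine obstacle here; the entire proof is a two-line union bound followed by a standard passage to the derivative, and the result holds in full generality for dependent non-negative random variables.
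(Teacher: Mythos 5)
Your proof is correct, but it takes a genuinely different route from the paper's. The paper proves the PDF bound first by induction on $k$ for $Z_k=\min\{X_1,\ldots,X_k\}$: it writes $F_{Z_{k+1}}(z)=F_{Z_k}(z)+F_{X_{k+1}}(z)-F_{Z_k,X_{k+1}}(z,z)$, differentiates, and invokes Leibniz's integral rule to show that $\frac{\partial}{\partial z}F_{Z_k,X_{k+1}}(z,z)\ge 0$, which forces the existence of a joint density $f_{Z_k,X_{k+1}}$; the CDF bound is then obtained by integrating the PDF bound. You go the other direction: the CDF bound is an immediate union bound via $\{Z\le z\}=\bigcup_i\{X_i\le z\}$, and the PDF bound follows from the refined inclusion $\{z<Z\le z+\Delta\}\subseteq\bigcup_i\{z<X_i\le z+\Delta\}$, a union bound on the increments, and passage to the limit $\Delta\downarrow 0$. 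Your argument is more elementary (no induction, no Leibniz rule) and somewhat more general, since it requires only the marginal densities $f_{X_i}$ to exist rather than the joint densities $f_{Z_k,X_{k+1}}$, and your observation that $F_Z$ inherits absolute continuity from $\sum_i F_{X_i}$ cleanly handles the existence of $f_Z$. The only minor omission is that you do not spell out that the limit of the difference quotient of $\sum_i F_{X_i}$ exists and equals $\sum_i f_{X_i}$ only almost everywhere, but since a density is defined only up to null sets this is immaterial; the conclusion stands.
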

\begin{proof}
Let us recall Leibniz's integral rule: For functions of a single variable $a(z)$, $b(z)$, and of two variables $f(x,z)$, we have
\begin{align}
 \frac{\partial}{\partial z}\int_{a(z)}^{b(z)}f(x,z)\mathrm{d}x=  \int_{a(z)}^{b(z)}\frac{\partial f}{\partial z}\mathrm{d}x+f(b(z),z)\frac{\partial b}{\partial z} -f(a(z),z)\frac{\partial a}{\partial z}.
\end{align}

Note that (\ref{kolakutusu2}) easily follows from (\ref{kolakutusu1}). We thus first prove (\ref{kolakutusu1}). Let $Z_k = \min\{X_1,\ldots,X_k\}$. We will show that $f_{Z_k}(z) \leq \sum_{i=1}^k f_{X_i}(z)$, for any $1\leq k\leq n$ by induction. For $k=1$, it is obvious. Suppose it is true for $n>k>1$. We have $f_{Z_k}(z) \leq \sum_{i=1}^k f_{X_i}(z)$. Noting that $Z_{k+1} = \min(Z_k, X_{k+1})$,
\begin{align}
  f_{Z_{k+1}}(z) & = f_{Z_k}(z) + f_{X_{k+1}}(z) - \frac{\partial}{\partial z} F_{Z_k,X_{k+1}}(z,z) \\
&  \leq \sum_{i=1}^{k+1} f_{X_i}(z) - \frac{\partial}{\partial z} \int_0^z \int_0^z f_{Z_k,X_{k+1}}(u,v) \mathrm{d}u \mathrm{d}v \\
   \label{leibuse1} &   = \sum_{i=1}^{k+1} f_{X_i}(z) - \int_0^z \left\{\frac{\partial}{\partial z} \int_0^z f_{Z_k,X_{k+1}}(u,v) \mathrm{d}v\right\} \mathrm{d}u - \int_0^z f_{Z_k,X_{k+1}}(z,v) \mathrm{d}v \\
 \label{leibuse2} &  =  \sum_{i=1}^{k+1} f_{X_i}(z) - \int_0^z f_{Z_k,X_{k+1}}(u,z) \mathrm{d}u - \int_0^z f_{Z_k,X_{k+1}}(z,v) \mathrm{d}v \\
 &   \leq \sum_{i=1}^{k+1} f_{X_i}(z),
\end{align}
where for both (\ref{leibuse1}) and (\ref{leibuse2}), we have used Leibniz's integral rule. This proves (\ref{kolakutusu1}). Integrating both sides of (\ref{kolakutusu1}) from $0$ to $z$ proves (\ref{kolakutusu2}).
\end{proof}
We can now find the desired upper bounds on the PDF and CDF of $\Omega_r$.
\begin{proposition}
\label{besselpdfcdfboundlemma}
For all $\omega>0,\,y\geq 0$ and $P$ sufficiently large,
\begin{enumerate}
\item If $K=1$,
\begin{align}
\label{upsilonrpdf}
f_{\Omega_r}(\omega) \leq \paa\exp\Bigl(-\pab\frac{\omega}{P}\Bigr)\psi_0(\omega),
\end{align}
and
\begin{align}
\label{upsilonrcdfdiff}
F_{\Omega_r}(y + \omega) - F_{\Omega_r}(y) \leq \paa \omega \psi_0(\omega),
\end{align}
where
\begin{align}
\psi_0(\omega) \triangleq \frac{1}{P}\left(1 + \omega^{-\frac{1}{2}}\right),
\end{align}
and $0< \paa, \pab < \infty$ are constants. Otherwise,
\item If $K > 1$,
\begin{align}
\label{besselpdfbound} f_{\Omega_r}(\omega) & \leq \lbb\exp\left(-\lbc \frac{z}{P}\right) \psi(\omega) , \\
\label{besseldiffcdfbound} F_{\Omega_r}(y + \omega)-F_{\Omega_r}(y) &  \leq \lbb\left(\omega\psi(\omega)+ \frac{\log P}{P^2}y^2\right),
 \end{align}
  and in particular, for $y=0$,
  \begin{align}
F_{\Omega_r}(\omega) \leq \lbb \omega\psi(\omega),
\end{align}
 where
 \begin{align}
\psi(\omega) = \frac{\log P}{P} \left(1 + \omega^{-\frac{1}{2\log P}} + \omega^{1-\frac{1}{\log P}}\frac{1}{P} \right),
 \end{align}
 and $0<\lbb,\lbc<\infty$ are constants.
\end{enumerate}
 \end{proposition}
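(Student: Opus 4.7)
The plan is to reduce the claimed bounds to the same bounds for the individual error-pattern contributions, and then derive those by combining Lemma \ref{lemmaunnamed} with a careful integration. Using (\ref{relayselectionfunc}), the random variable of interest $\Omega_r\triangleq\gamma^{\mathtt{L}}(\mathbf{e}_r,\mathbf{h})$ is the minimum, over the $P$-independent finite index set $\{(\mathbf{s},\hat{\mathbf{s}}):\mathbf{s}\neq\hat{\mathbf{s}}\}\times\{1,\ldots,L\}$, of the elementary variables
$$\Omega_{r,\mathbf{s},\hat{\mathbf{s}},\ell}\triangleq \frac{1}{4}\cdot\frac{|\sum_{k=1}^K(s_k-\hat{s}_k)\sqrt{P_{S_k}}f_{kr}|^2\,|g_{r\ell}|^2 P_{R_r}}{1+\sum_{k=1}^K|f_{kr}|^2 P_{S_k}+|g_{r\ell}|^2 P_{R_r}}.$$
By Lemma \ref{pdfcdfbound}, it therefore suffices to establish bounds of the claimed form for the PDF and CDF of each $\Omega_{r,\mathbf{s},\hat{\mathbf{s}},\ell}$ individually; the (finite, $P$-independent) multiplicity of the index set will be absorbed into the generic constants $\paa,\lbb$.

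Fix a triple $(\mathbf{s},\hat{\mathbf{s}},\ell)$ and set $\widetilde{f}_k\triangleq\sqrt{P_{S_k}}f_{kr}$, $\mathbf{a}\triangleq\overline{\mathbf{s}-\hat{\mathbf{s}}}/\|\mathbf{s}-\hat{\mathbf{s}}\|$, $X\triangleq|\langle\mathbf{a},\widetilde{\mathbf{f}}\rangle|^2$, $W\triangleq\|\widetilde{\mathbf{f}}\|^2-X$, and $V\triangleq P_{R_r}|g_{r\ell}|^2$, so that
$$\Omega_{r,\mathbf{s},\hat{\mathbf{s}},\ell}=\frac{\|\mathbf{s}-\hat{\mathbf{s}}\|^2}{4}\cdot\frac{XV}{1+X+W+V}.$$
Since $\widetilde{\mathbf{f}}$ is a circularly symmetric complex Gaussian whose diagonal covariance entries scale linearly in $P$, applying Lemma \ref{lemmaunnamed} to $\widetilde{\mathbf{f}}/\sqrt{P}$ along the unit direction $\mathbf{a}$ and rescaling yields, for $K>1$, a joint-density bound of the shape $f_{X,W}(x,w)\leq(\lbg^K/[\Gamma(K-1)P^K])\,w^{K-2}\exp[-\lbg(x+w)/P]$; for $K=1$ the orthogonal complement $W$ is absent and we retain only the simpler exponential bound $f_X(x)\leq(\lbg/P)\exp(-\lbg x/P)$. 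The variable $V$ is independent of $(X,W)$ and exponentially distributed with mean of order $P$, giving $f_V(v)\leq(C/P)\exp(-\lbg v/P)$.

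The desired bounds are then obtained by integrating the joint density over the sublevel set $\{\Omega_{r,\mathbf{s},\hat{\mathbf{s}},\ell}\leq\omega\}$. Solving the defining inequality for $X$ shows that this event is, up to a constant $c$ depending on $\|\mathbf{s}-\hat{\mathbf{s}}\|$, the disjunction $\{V\leq c\omega\}\cup\{X\leq X^\star(\omega,W,V)\}$ with threshold $X^\star(\omega,W,V)\triangleq c\omega(1+W+V)/(V-c\omega)$. Differentiating in $\omega$ for the PDF and taking the difference of shifted arguments for the CDF increment then reduces the task to integrals of the schematic form $\int\!\!\int w^{K-2}\exp[-\lbg(X^\star+w+v)/P]\,(1+w+v)/(v-c\omega)^2\,\mathrm{d}w\,\mathrm{d}v$, which can be estimated regime by regime. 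The factor $\log P/P$ in $\psi(\omega)$ originates from integrating $w^{K-2}e^{-\lbg w/P}$ after truncating $W$ at a threshold of order $P\log P$; the singular term $\omega^{-1/(2\log P)}$ (respectively $\omega^{-1/2}$ when $K=1$) comes from the $V$-integral near the pole $V=c\omega$; and the term $\omega^{1-1/\log P}/P$ comes from its large-$V$ tail.

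The main obstacle will be the bookkeeping of these three competing regimes so as to extract the precise powers of $P$ and the exact form of $\psi,\psi_0$. The key analytic device is to introduce the small exponent $\epsilon=1/\log P$, which renders $x^{-\epsilon}=\exp(-\epsilon\log x)$ bounded on any polynomial-in-$P$ range of $x$ and thereby converts divergent $\log P$ factors from the $V$-integral near its singularity into bounded power-type prefactors. One then partitions the $(W,V)$-plane according to which exponential or polynomial factor dominates, treats small- and large-$\omega$ regimes separately, and ensures that the resulting constants are uniform in both $\omega$ and $y$. The extra term $(\log P/P^2)\,y^2$ appearing in the CDF increment for $K>1$ will be traced to integrating the PDF bound across $[y,y+\omega]$ in the large-$y$ regime, where the $y$-dependent factor can no longer be bounded by a constant but instead yields the stated quadratic-in-$y$ correction.
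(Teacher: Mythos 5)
Your high-level framing agrees with the paper: both decompose $\Omega_r$ as a minimum over the finite index set $\{(\ell,\mathbf{s},\hat{\mathbf{s}})\}$ and invoke Lemma~\ref{pdfcdfbound} to reduce to a single elementary variable, and both rely on Lemma~\ref{lemmaunnamed} for the joint density $f_{X,W}$. Where you diverge is in the core two-dimensional integral, and there your sketch has a real gap. The paper changes variables in $Y$ (not $X$), substitutes the exponential density of $Y$ together with the Lemma~\ref{lemmaunnamed} bound, and evaluates the remaining $x$-integral \emph{exactly} via $\int_0^{\infty}x^{\nu-1}e^{-\beta/x-\gamma x}\,\mathrm{d}x = 2(\beta/\gamma)^{\nu/2}K_{\nu}(2\sqrt{\beta\gamma})$, producing Bessel terms $\kappa K_1(\kappa)$ and $K_0(\kappa)$ with $\kappa\propto\sqrt{z(1+wP+z)}/P$. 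Every one of the factors you are trying to produce --- the $\log P$ prefactor, the correction $\omega^{-1/(2\log P)}$, and the term $\omega^{1-1/\log P}P^{-1}$ --- is generated in one stroke by the bound $K_0(\kappa)\leq 2^{\nu-1}\nu^{-1}\kappa^{-\nu}$ applied with $\nu=1/\log P$: the $\nu^{-1}$ yields $\log P$, and $\kappa^{-\nu}\propto z^{-\nu/2}(1+wP+z)^{-\nu/2}$ supplies the $z$- and $P$-powers after the $w$-integral. This is the genuine analytic device of the proof, and your proposal does not reproduce it.

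Your attributed sources of these factors do not hold up. Integrating $w^{K-2}e^{-\lbf w}$ (or its $P$-rescaled version) up to a threshold of order $P\log P$ produces essentially the full $\Gamma(K-1)$ factor, not a $\log P$; truncation there is not where the logarithm comes from. Your schematic integrand $(1+w+v)(v-c\omega)^{-2}$ has a non-integrable singularity at $v=c\omega$ that is only controlled by the exponential suppression of $X^{\star}$ near the pole, and extracting the precise $\omega^{-1/(2\log P)}$ behavior from that competition is exactly the delicate estimate the Bessel route handles for free; you assert the outcome without showing the balance works. You also do not explain how integrating the PDF bound over $[y,y+\omega]$ produces the $y^{2}\log P/P^{2}$ correction in (\ref{besseldiffcdfbound}) (in the paper it falls out of the $\omega^{1-1/\log P}$ term together with the sub-additivity $(y+\omega)^{\alpha}\leq y^{\alpha}+\omega^{\alpha}$), and for $K=1$ you merely note that $W$ is absent, whereas the paper in fact sidesteps the derivation entirely by citing the closed-form PDF from \cite[Eqs.~22, 28]{koyuncu1}. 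The route you propose is not unreasonable in outline, but the concrete claims you make about the origin of the dominant factors are incorrect, and filling the gap would in effect force you back onto the Bessel-function argument.
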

\begin{proof}
First we prove the case for $K > 1$. Let $\Omega_{r,\ell,\mathbf{s},\hat{\mathbf{s}}} \triangleq \gamma_{\ell,\mathbf{s},\hat{\mathbf{s}}}^\mathtt{L}(\mathbf{e}_r, \mathbf{h})$. Note that $\Omega_r = \min_{\ell, \mathbf{s}\neq\mathbf{s}} \Omega_{r,\ell,,\mathbf{s},\hat{\mathbf{s}}}$. First, let us first find an upper bound on the PDF and CDF of $\Omega_{r,\ell,\mathbf{s},\hat{\mathbf{s}}}$.

Consider a fixed $r$, $\ell$, and $\mathbf{s}\neq\hat{\mathbf{s}}$. For notational convenience, let us define $Z \triangleq \Omega_{r,\ell,\mathbf{s},\hat{\mathbf{s}}}$. From (\ref{relayselectionfunc}), we have
\begin{align}
\label{alskdjalskdjlaksddddd}
Z = \frac{1}{4}\frac{\left| \sum_{k=1}^K(s_k - \hat{s}_k) \sqrt{P_{S_k}}f_{k r} \right|^2|g_{r\ell}|^2P_{R_r}}{1 + \sum_{k=1}^K |f_{k r}|^2 P_{S_k} + |g_{r\ell}|^2 P_{R_r} }.
\end{align}
Now, let us rewrite (\ref{alskdjalskdjlaksddddd}) in a more compact form. First, we define
\begin{align}
\mathbf{f}' & \triangleq [\sqrt{p_{S_1}}f_{1r} \cdots \sqrt{p_{S_1}}f_{Kr}]^T, \\
\boldsymbol{\delta} & \triangleq [s_1 - \hat{s}_1 \cdots s_K - \hat{s}_K]^H, \\
X & \triangleq |\langle \widetilde{\mathbf{f}'}, \boldsymbol{\delta} \rangle|^2, \\ Y & \triangleq |g_{r\ell}|^2p_{R_r},\\
W & \triangleq \|\mathbf{f}'\|^2 - X,
\end{align}
where $\widetilde{\mathbf{f}'} \triangleq \mathbf{f}'/\|\mathbf{f}'\|$. Then, we have
 \begin{align}
\label{dlaksdjlkas2} Z = \frac{\lbd XY P^2}{1 + XP+WP + YP },
\end{align}
where $\lbd = \frac{1}{4}\|\boldsymbol{\delta}\|^2$. Using a transformation of RVs\cite{aofrvs}, the PDF of $Z$ can be expressed as
\begin{align}
f_{Z}(z) & = \int_0^{\infty} \int_0^{\infty}f_{Y}\left(\frac{z(1+xP+wP)}{\lbd xP^2-zP}\right) \frac{\lbd xP^2(1+xP+wP)}{(\lbd xP^2-zP)^2}f_{X,W}(x,w)\mathrm{d}x\mathrm{d}w.
\end{align}
Now, let $\lbe = (\sigma_{g_{r\ell}}^2 p_{R_r})^{-1}$. Substituting the PDF of $Y$, and using Lemma \ref{lemmaunnamed}, we have
\begin{multline}
\label{dlaksdjlkas}f_{Z}(z) \leq \frac{\lbe \lbf^K}{\Gamma(K-1)}  \int_0^{\infty} \int_{\frac{z}{\lbd P}}^{\infty}\exp\Bigl(-\frac{\lbe z(1+xP+wP)}{\lbd xP^2-zP}\Bigr) \times \\ \frac{\lbd xP^2(1+xP+wP)}{(\lbd xP^2-zP)^2} e^{-\lbf x}\mathrm{d}x w^{K-2} e^{-\lbf w}\mathrm{d}w,
\end{multline}
where $\infty > \lbf > 0$ is a constant that is independent of $w$, $x$, and $P$. The inner integral can be evaluated first by a change of variables $u=\lbd xP-z$ and then using the facts that
$\int_0^{\infty} x^{\nu-1}e^{-\beta/x-\gamma x}\mathrm{d}x= 2(\beta/\gamma)^{\frac{\nu}{2}}K_{\nu}(2\sqrt{\beta\gamma})
,\beta,\gamma>0$\cite[Eq. 3.417.9]{toi}, and $K_{-1}(x) = K_1(x),\,\forall x\in\mathbb{R}$\cite[Eq. 9.6.6]{hmf}, respectively. Then, after some straightforward manipulations, we can rewrite (\ref{dlaksdjlkas}) as
\begin{align}
f_{Z}(z) & \leq \frac{\lbf^{K-1}\exp(-\lbg\frac{z}{P})}{\Gamma(K-1)}  \int_0^{\infty} \Bigl(\lbg P^{-1}\kappa K_1(\kappa) +\frac{2\lbf \lbe(\lbd+2z+\lbd wP)}{\lbd^2P^2}K_0(\kappa)  \Bigr) \frac{w^{K-2}}{e^{\lbf w}}\mathrm{d}w,
\end{align}
where $\kappa = \sqrt{4\lbf\lbe z(\lbd+z+\lbd wP)/(\lbd^2P^2)}$, and $\lbg = (\lbf+\lbe)/\lbd$. It follows that
\begin{align}
\label{kasjhdkjashd}
f_{Z}(z) & \leq \frac{\lbf^{K-1} \exp(-\lbg\frac{z}{P})}{\Gamma(K-1)} \int_0^{\infty} \Bigl(\lbg P^{-1} \kappa K_1(\kappa) +z^{-1}\kappa^2 K_0(\kappa)  \Bigr) w^{K-2}e^{-\lbf w}\mathrm{d}w.
\end{align}
Now, let us find an upper bound for $K_0(\kappa)$ in (\ref{kasjhdkjashd}). According to \cite[Eq. 9.6.24]{hmf}, we have $K_{\nu}(z) = \int_0^{\infty} e^{-z\cosh t} \cosh(\nu t)\mathrm{d}t,\,t,\nu\in\mathbb{R}$. Moreover, since $\cosh(\nu t)$ is an increasing function of $\nu$, $K_{\nu}(z)$ is also an increasing function of $\nu$. It follows that
\begin{align}
\label{k0icinbunlar1}
K_0(\kappa) \leq K_{\nu}(\kappa),\,\nu \geq 0. 
\end{align}
Also, from \cite[Eq. 25]{koyuncu1}, we have 
\begin{align}
\label{k0icinbunlar2}
K_\nu(\kappa) \leq 2^{\nu-1}\Gamma(\nu){\kappa}^{-\nu}, \, \nu>0.
\end{align}
Now let us set $0<\nu<1$. In this case,
\begin{align}
\label{k0icinbunlar3}
\Gamma(\nu) = \nu^{-1}\Gamma(\nu+1) = \nu^{-1}\smallint_0^\infty e^{-t^{1/\nu}}\mathrm{d}t \leq \nu^{-1}\smallint_0^\infty e^{-t}\mathrm{d}t = \nu^{-1}.
\end{align}
Combining (\ref{k0icinbunlar1}), (\ref{k0icinbunlar2}) and (\ref{k0icinbunlar3}) gives us the desired upper bound 
\begin{align}
\label{desiredboundfork0}
K_0(\kappa) \leq 2^{\nu-1}\nu^{-1}\kappa^{-\nu}. 
\end{align}
Using (\ref{desiredboundfork0}) and the fact that $\kappa K_1(\kappa) \leq 1$\cite[Eq. 25]{koyuncu1}, (\ref{kasjhdkjashd}) can be further bounded as
\begin{align}
f_{Z}(z) & \leq \frac{\lbf^{K-1} \exp(-\lbg \frac{z}{P})}{\Gamma(K-1)} \int_0^{\infty} \left(\lbg P^{-1} +2^{\nu-1}\nu^{-1}z^{-1}\kappa^{2-\nu}\right) w^{K-2}e^{-\lbf w}\mathrm{d}w \\
&  \leq \lbh \nu^{-1}e^{-\lbg\frac{z}{P}}\Biggl[P^{-1} + P^{-2+\nu} z^{-\frac{\nu}{2}}\int_0^\infty (1+wP+z)^{1-\frac{\nu}{2}}w^{K-2}e^{-\lbf w}\mathrm{d}w\Biggr],
\end{align}
where $\lbh = \lbf^{K-1}[\Gamma(K-1)]^{-1}\max\{\lbg, 2\lbf\lbe\max\{\frac{1}{\lbd},\frac{1}{\lbd^2}\},2(\lbf\lbe\max\{\frac{1}{\lbd},\frac{1}{\lbd^2}\})^{\frac{1}{2}}\}$.
Also, since $(x+y)^{\nu} \leq x^{\nu} + y^{\nu},\,\forall x,y\in\mathbb{R},\,0<\nu<1$, we have
\begin{align}
f_{Z}(z) & \leq  \lbh\nu^{-1}e^{-\lbg\frac{z}{P}}\Biggl[P^{-1} +  P^{-2+\nu} z^{-\frac{\nu}{2}}\int_0^\infty (1+w^{1-\frac{\nu}{2}}P^{1-\frac{\nu}{2}}+z^{1-\frac{\nu}{2}})w^{K-2}e^{-\lbf w}\mathrm{d}w\Biggr] \\
\nonumber &  = \frac{\lbh e^{-\lbg\frac{z}{P}}}{vP}\biggl[1 + z^{-\frac{\nu}{2}}P^{-1+\nu} (1+z^{1-\frac{\nu}{2}})\Gamma(K-1)\lbf^{-K+1} + \\ & \hspace{250pt} z^{-\frac{\nu}{2}}\Gamma\left(K - \frac{\nu}{2}\right)\lbf^{-K+\frac{\nu}{2}}P^{\frac{\nu}{2}}\biggr] \\
&  \leq 2\lbh\Gamma(K) \max\{1,\lbf^{-K+1},\lbf^{-K}\}e^{-\lbg\frac{z}{P}}\nu^{-1}P^{-1+\nu}\left(1 + z^{-\frac{\nu}{2}}+z^{1-\nu}P^{-1}\right) \\
\label{dangalakyus} &  = \lbi\frac{\log P}{P}e^{-\lbg\frac{z}{P}}\left(1 +z^{-\frac{1}{2\log P}}+z^{1-\frac{1}{\log P}}P^{-1}\right)
\end{align}
where $\lbi = 2e\lbh\Gamma(K) \max\{1,\lbf^{-K+1},\lbf^{-K}\}$, and we have substituted $\nu = \frac{1}{\log P}$ to obtain (\ref{dangalakyus}).

In general, the constants $\lbi$ and $\lbg $ in (\ref{dangalakyus}) depend on $r$, $\ell$, $\mathbf{s}$, and $\hat{\mathbf{s}}$. Let $\lbimod$ and $\lbg_{r,\ell,\mathbf{s},\hat{\mathbf{s}}}$ denote the dependent versions of $\lbi$ and $\lbg $, respectively. Using Lemma \ref{pdfcdfbound}, we have
\begin{align}
\label{zibidi}
 f_{\Omega_r}(\omega)  &  \leq \sum_{\ell}\sum_{\mathbf{s}\neq\mathbf{s}} \Omega_{r,\ell,\mathbf{s},\hat{\mathbf{s}}}(\omega) \\ &   \leq  \frac{\lbb}{4}\frac{\log P}{P}\exp\left(-\lbc\frac{z}{P}\right)\left(1 +z^{-\frac{1}{2\log P}}+z^{1-\frac{1}{\log P}}P^{-1}\right) ,
\end{align}
 where $\lbb = 2L\mathscr{S}|(|\mathscr{S}|-1)\max_{r,\ell,\mathbf{s}\neq\hat{\mathbf{s}}} \lbimod$, and $\lbc = \min_{r, \ell, \mathbf{s}\neq\mathbf{s}} \lbg_{r,\ell,\mathbf{s},\hat{\mathbf{s}}}$. This implies the upper bound on the PDF of $\Omega_r$ in the statement of the lemma. Finally, using (\ref{zibidi}),
\begin{align}
\label{asjdhilqwhyoieqoiw}   F_{\Omega_r}& (y+\omega) - F_{\Omega_r}(y)\\  &   = \int_y^{y+\omega} f_{\Omega_r}(\omega)\mathrm{d}\omega \\
&   \leq \frac{\lbb}{4}\frac{\log P}{P}  \int_y^{y+\omega}\left(1+x^{\frac{-1}{2\log P}} +  x^{1-\frac{1}{\log P}}\frac{1}{P} \right)\mathrm{d}x \\
&   \leq \frac{\lbb}{2}\frac{\log P}{P} \left(\omega+ (y+\omega)^{1-\frac{1}{2\log P}}-y^{1-\frac{1}{2\log P}} +   \left[(y+\omega)^{2-\frac{1}{\log P}} - y^{2-\frac{1}{\log P}}\right]\frac{1}{P} \right) \\
& \label{conconc}   \leq \frac{\lbb}{2}\frac{\log P}{P} \left(\omega + \omega^{1-\frac{1}{2\log P}} +  \left[y^{2-\frac{1}{\log P}} + 2\omega^{2-\frac{1}{\log P}}\right]\frac{1}{P}\right)\\
&   \leq \lbb\left(\omega\psi(\omega)+ \frac{\log P}{P^2}y^{2-\frac{1}{\log P}}\right),
\end{align}
where we have used H\"{o}lder's inequality and the fact that  $(y+\omega)^{\alpha} \leq y^{\alpha} + \omega^{\alpha},\,y,z>0,\,0\leq\alpha \leq 1$ for (\ref{conconc}). This concludes the proof for $K>1$.

For $K=1$, let $\bar{X}_r = |f_{1 r}|^2 p_{S_1}$, $\bar{Y}_r = p_{R_r} \min_{\ell} |g_{r\ell}|^2$, and $\Upsilon_r = \frac{ \bar{X}_r\bar{Y}_rP^2}{1 + \bar{X}_rP+ \bar{Y}_rP}$. From (\ref{relayselectionfunc}), we have $\Omega_r = \pac \Upsilon_r$, where $\pac = \frac{1}{4}\min_{s_1 \neq \hat{s}_1} (s_1 - \hat{s}_1)$.

Now, note that $\bar{X}_r \sim \Gamma(1, p_{S_1}\sigma_{f_{1r}}^2)$, and $\bar{Y}_r \sim \Gamma(1, p_{R_r}(\sum_{\ell} \sigma_{g_{r\ell}}^{-2})^{-1})$. Therefore, $\bar{X}_r,\,\bar{Y}_r,\,r=1,\ldots,R$ are independent exponential RVs with finite variances. The PDF of $\Upsilon_r$ with such $\bar{X}_r$ and $\bar{Y}_r$ is given by \cite[Eq. 22]{koyuncu1}. Using \cite[Eq. 28]{koyuncu1} without omitting the exponential function, and noting that $f_{\Omega_r} (\omega)= \frac{1}{\pac}f_{\Upsilon_r} (\frac{\omega}{\pac})$, we can show that (\ref{upsilonrpdf}) holds.

Finally, (\ref{upsilonrcdfdiff}) follows (up to a constant multiplier) from (\ref{upsilonrpdf}) and (\ref{asjdhilqwhyoieqoiw}). This concludes the proof.
\end{proof}
\section{Proof of Theorem \ref{theorem1}}
\label{proofoftheorem1}
We start with a lower bound on the CNER. By definition, we have $\mathtt{CNER}(\mathbf{x},\mathbf{h}) \geq \mathtt{SER}_{\ell}^{\mathtt{IML}}(\mathbf{x}, \mathbf{h}),\,\forall\ell$. Suppose that, for some $k\in\mathcal{D}_{\ell}$, a genie reveals all the transmitted symbols but $\mathbf{s}_k$ to the $\ell$th receiver. The error rate of this genie-aided scheme provides a lower bound on the CNER. Without loss of generality assume that $1\in\mathcal{D}_1$, and let us fix some $\acute{s}_1,\,\acute{s}_2\in\mathcal{S}_1$ with $\acute{s}_1\neq \acute{s}_2$. We have
$\mathtt{CNER}(\mathbf{x}, \mathbf{h}) \geq \frac{1}{|\mathcal{S}_1|} \mathrm{Q}(\sqrt{2\gamma^{\mathtt{U}}(\mathbf{x},\mathbf{h})})$, where
\begin{align}
 \gamma^{\mathtt{U}}(\mathbf{x},\mathbf{h}) \triangleq   \frac{|\sum_{r=1}^R f_{1 r}\sqrt{\rho_r'}g_{r1}x_r |^2|\acute{s}_1- \acute{s}_2|^2 P_{S_1}}{ 4(1 + \sum_{r=1}^R \rho_r' |g_{r1}|^2 |x_r|^2)}.
\end{align}
Let us find an upper bound on $\gamma^{\mathtt{U}}(\mathbf{x},\mathbf{h})$ for any $\mathbf{h}$, and $\mathbf{x}\in\mathcal{X}$. We have
\begin{align}
\gamma^{\mathtt{U}}(\mathbf{x}, \mathbf{h}) 
 & \leq \frac{| (\acute{s}_1- \acute{s}_2) \sqrt{P_{S_1}}\sum_{r=1}^R f_{1 r}\sqrt{\rho_r'}g_{r1}x_r |^2}{4(1 + \sum_{r=1}^R \rho_r' |g_{r1}|^2 |x_r|^2)}\\
& \leq \frac{|\acute{s}_1- \acute{s}_2|^2 P_{S_1} }{4}  \frac{|\sum_{r=1}^R f_{1 r}\sqrt{\rho_r'}\sqrt{R}g_{r1}\widetilde{x}_r|^2 }{R + \sum_{r=1}^R \rho_r' R |g_{r1}|^2 |\widetilde{x}_r|^2} \\
& \label{eq1111} =  \frac{|\acute{s}_1- \acute{s}_2|^2 P_{S_1} }{4}  \frac{\Bigl|\sum_{r=1}^R \frac{ f_{1 r}\sqrt{\rho_r'}\sqrt{R}g_{r1}}{\sqrt{1+ \rho_r' R |g_{r1}|^2}}\sqrt{1+ \rho_r' R |g_{r1}|^2}\widetilde{x}_r\Bigr|^2 }{ \sum_{r=1}^R (1+ \rho_r' R |g_{r1}|^2) |\widetilde{x}_r|^2},
 \end{align}
where $\widetilde{\mathbf{x}} \triangleq \frac{R}{\|\mathbf{x}\|}\mathbf{x}$ is the projection of the beamforming vector onto the hypersphere with norm $R$. Applying the Cauchy-Schwarz inequality to (\ref{eq1111}), and then using the fact that $\rho_r' \leq \rho_r$, we have
 \begin{align}
 \label{dalldalallalalaa}
  \gamma^{\mathtt{U}}(\mathbf{x}, \mathbf{h})
 & \leq \frac{|\acute{s}_1- \acute{s}_2|^2 P_{S_1} }{4}  \sum_{r=1}^R \frac{ |f_{1 r}|^2\rho_rR|g_{r1}|^2}{1+ \rho_r R |g_{r1}|^2}
 \end{align}
 If $K=1$, we use the following upper bound that follows from (\ref{dalldalallalalaa}).
  \begin{align}
  \gamma^{\mathtt{U}}(\mathbf{x}, \mathbf{h})
 & \leq \frac{|\acute{s}_1- \acute{s}_2|^2 P_{S_1} }{4}  \sum_{r=1}^R |f_{1 r}|^2.
 \end{align}
 This upper bound is, up to a constant multiplier, the same as the SNR of a maximal ratio combining system with $R$ branches. The error rate of such systems is known to be lower bounded by a constant times $P^{-R}$, as stated in the theorem. This concludes the proof for $K=1$.

 For $K>1$, we use (\ref{dalldalallalalaa}) to further bound $\gamma^{\mathtt{U}}(\mathbf{x}, \mathbf{h})$ as
 \begin{align}
 \gamma^{\mathtt{U}}(\mathbf{x}, \mathbf{h}) & \leq \frac{R^2 |\acute{s}_1- \acute{s}_2|^2 }{4}  \max_r \frac{ |f_{1 r}|^2|g_{r1}|^2P_{S_1}P_{R_r}}{1+ \sum_{k} |f_{kr}|^2 P_{S_k} +  R |g_{r1}|^2 P_{R_r}} \\
 & \leq \frac{R^2 |\acute{s}_1- \acute{s}_2|^2  }{4}  \max_r \frac{ |f_{1 r}|^2|g_{r1}|^2P_{S_1}P_{R_r}}{1+ |f_{1r}|^2 P_{S_1} +  |f_{2r}|^2 P_{S_2} + R |g_{r1}|^2 P_{R_r}} \\
 \label{okuzovichz} & \leq \frac{R^2 |\acute{s}_1- \acute{s}_2|^2\max_r\{\sigma_{f_{1r}}^2\sigma_{g_{r1}}^2p_{S_1} p_{R_r}\}  }{4\min\{1, \sigma_{f_{1r}}^2p_{S_1},\sigma_{f_{2r}}^2p_{S_2},\, \sigma_{g_{r1}}^2p_{R_r}\}} \max_r \frac{X_rY_rP^2}{1+ X_rP +W_rP + Y_rP},
 \end{align}
where $X_r = \sigma_{f_{1r}}^{-2} f_{1r}$, $Y_r = \sigma_{g_{r1}}^{-2} g_{r1}$, and $W_r = \sigma_{f_{2r}}^{-2} f_{2r}$. Note that $X_r,\,Y_r,\,W_r\sim\Gamma(1,1)$ and they are independent. Let $\taa$ denote the constant multiplier in (\ref{okuzovichz}), and $Z_r^{\mathtt{U}} \triangleq (X_rY_rP^2)/(1+ X_rP +W_rP + Y_rP)$. Thus, we can rewrite (\ref{okuzovichz}) as $\gamma^{\mathtt{U}}(\mathbf{x}, \mathbf{h}) \leq C_8 \max_r Z_r^{\mathtt{U}}$.
Now, let
\begin{align}
\textstyle Z^{\mathtt{U}} & \triangleq \textstyle\max_r Z_r^{\mathtt{U}}, \\
\textstyle \mathtt{NER}^{\mathtt{L}}(\mathtt{Q}) &\textstyle \triangleq \frac{1}{|\mathcal{S}_1|}\mathrm{E}[\mathrm{Q}(\sqrt{2\taa Z^{\mathtt{U}}})].
\end{align}

Since ${\mathtt{NER}}(\mathtt{Q}) \geq \mathtt{NER}^{\mathtt{L}}(\mathtt{Q}),\,\forall\mathtt{Q}$, it is sufficient to find a lower bound on $\mathtt{NER}^{\mathtt{L}}(\mathtt{Q})$. Using the fact that $\mathrm{Q}(x) \geq \frac{1}{\sqrt{2\pi}}\frac{x}{1+x^2}e^{-\frac{x^2}{2}}$, we have
 \begin{align}
\label{errlowboundxx}  \mathtt{NER}^{\mathtt{L}}(\mathtt{Q}) &  \geq \frac{1}{ |\mathcal{S}_1|\sqrt{\pi}} \int_0^{\infty} \frac{\sqrt{z}}{1+2z} \exp(-z\taa) f_{Z^{\mathtt{U}}}(z) \mathrm{d}z.
 \end{align}
We thus need to find a lower bound for the PDF of $Z^\mathtt{U}$. Using order statistics, we have
\begin{align}
\label{ordstatforsomeokuzrv}  f_{Z^{\mathtt{U}}}(z) =  \sum_{r=1}^R f_{Z_r^{\mathtt{U}}}(z) \prod_{\substack{q = 1\\ q\neq r}} ^R F_{Z_q^{\mathtt{U}}}(z).
\end{align}
 In the following, we find a lower bound on the PDF and CDF of $Z_r^\mathtt{U}$, for any $r$. We first evaluate the PDF of $Z_r^\mathtt{U}$. Using a transformation of RVs\cite{aofrvs}, it can be expressed as
  \begin{align}
f_{Z_r^{\mathtt{U}}}(z) & = \int_0^{\infty} \int_{0}^{\infty}f_{Y_r}\Bigl(\frac{z(1+xP+wP)}{ xP^2-zP}\Bigr) \frac{ xP^2(1+xP+wP)}{( xP^2-zP)^2} e^{-x}dx e^{-w}\mathrm{d}w.
\end{align}
This PDF is in the same form as (\ref{dlaksdjlkas2}) in Proposition \ref{besselpdfcdfboundlemma}, and can be evaluated using the same methods discussed therein. We have
  \begin{align}
 f_{Z_r^{\mathtt{U}}}(z) & = \exp\Bigr(-\frac{2z}{P}\Bigl) \int_0^{\infty} \biggl[ \frac{1}{P}\sqrt{\frac{4z(1+wP+z)}{P^2}} K_1\biggl(\sqrt{\frac{4z(1+wP+z)}{P^2}}\biggr) + \\ & \qquad \qquad \qquad \qquad  \frac{2(1+2z+ wP)}{P^2}K_0\biggl(\sqrt{\frac{4z(1+wP+z)}{P^2}}\biggr)  \biggr] e^{-w}\mathrm{d}w \\
& \geq \exp\Bigr(-\frac{2z}{P}\Bigl) \frac{1}{P^2} \int_0^\infty  2(1+wP+z) K_0\biggl(\sqrt{\frac{4z(1+wP+z)}{P^2}}\biggr) f_W(w)\mathrm{d}w.
\end{align}
Using the fact that for any $z>0$, $K_0(z) = -\log(\tfrac{z}{2})-\gamma_{e} + (1-\gamma_{e})\frac{1}{4z^2} + (1+1/2-\gamma_{e})\frac{z^4}{32} + \cdots$\cite{toi}, we have $K_0(z)  \geq -\log(\tfrac{z}{2})-\gamma_{e}$, and thus
\begin{align}
\textstyle f_{Z_r^{\mathtt{U}}}(z) &  \geq   \exp\Bigr(-\frac{2z}{P}\Bigl)\frac{1}{P^2} \int_0^\infty  (1+z+wP) \left[-\log\left(\frac{z(1+z+wP)}{P^2}\right) -2\gamma_e\right] e^{-w}\mathrm{d}w \\
&  = \nonumber \exp\Bigr(-\frac{2z}{P}\Bigl)  \frac{1}{P^2} \Biggl\{ (-2\gamma_e + 2\log P-\log z) \int_0^\infty (1+z+wP) e^{-w}\mathrm{d}w - \\ & \qquad \qquad \qquad \int_0^\infty (1+z+wP) \log(1+z+wP) e^{-w}\mathrm{d}w\Biggr\} \\
& = \nonumber \exp\Bigr(-\frac{2z}{P}\Bigl) \frac{1}{P^2} \biggl\{ (-2\gamma_e + 2\log P-\log z) (1+z+P) - \\ &  \qquad \Bigr[(1+z)\log(1+z) + P\log(1+z) + P + Pe^{\frac{1+z}{P}}\mathrm{E}_1\Bigr(\frac{1+z}{P}\Bigr)\Bigl] \biggr\}.
\end{align}
Using the facts that $\log z \leq \log (1+z) \leq z$, and
\begin{multline}
 e^{\frac{1+z}{P}}\mathrm{E}_1\left(\frac{1+z}{P}\right) \leq \log\left(1+\frac{P}{1+z}\right) \leq \log(1+z+P) \leq \log(1+z+P+zP) \\\textstyle  = \log(1+z)+\log(1+P) \leq z + \log(2P) \leq z+1+\log P,
\end{multline}
we can show that
\begin{align}
\textstyle f_{Z_r^{\mathtt{U}}}(z)  &  \geq  \exp\Bigr(-\frac{2z}{P}\Bigl)  \frac{1}{P^2} \Bigl\{ P\log P + 2\log P(1+z) - \\ & \qquad \qquad \bigl[(2\gamma_e + z)(1+z+P) +z(1+z)+2(z+1)P\bigr] \Bigr\} \\
&  \geq \exp\Bigr(-\frac{2z}{P}\Bigl)  \frac{1}{P^2} \Bigl\{ P\log P - P\bigl[(2 + z)(2+z)+ z(1+z) + 2(z+1)\bigr] \Bigr\} \\
\label{garagaragar} &  = \exp\Bigr(-\frac{2z}{P}\Bigl)  \frac{1}{P} \Bigl[\log P - (2z^2 + 7z + 6) \Bigr]
\end{align}
After some straightforward manipulations, (\ref{garagaragar}) leads to a more compact lower bound
\begin{align}
f_{Z_r^{\mathtt{U}}}(z)  \geq \phi(z),
 \end{align}
where
\begin{align}
 \phi(z) \triangleq e^{-\frac{2z}{P}}  \frac{1}{P} \bigl[ \log P - 14(1+z^2) \bigr].
\end{align}

For the CDF of $Z_r^\mathtt{U}$, we have
\begin{align}
 F_{Z_r^\mathtt{U}}(z) &   \geq \int_0^z \exp\Bigr(-\frac{2x}{P}\Bigl)  \frac{1}{P} \left[ \log P - 14(1+x^2)\right]\mathrm{d}x \\ &  \geq \exp\Bigr(-\frac{2z}{P}\Bigl)\frac{z}{P}\left[\log P - 14(1+z^2)\right] \\ & \textstyle  \geq z\phi(z).
\end{align}

We can now find a lower bound for the PDF of $Z^\mathtt{U}$. Suppose that $P \geq e^{14}$, and let $z_0 \triangleq(\tfrac{\log P}{14} - 1)^{\frac{1}{2}}$. Then, $\phi(z) \geq 0$ for $z \leq z_0$, and $\phi(z) < 0$, otherwise.
Using (\ref{ordstatforsomeokuzrv}), for $z \leq z_0$, it follows that
\begin{align}
\textstyle f_{Z^\mathtt{U}}(z) &  \geq Rz^{R-1}\phi^R(z) \\
& = Rz^{R-1} \exp\Bigl(-\frac{2Rz}{P}\Bigr)  \frac{1}{P^R} \left[\log P - 14(1+z^2)\right]^R\\
& \geq Rz^{R-1} \exp(-2Rz)  \frac{1}{P^R} \left[\log P - 14(1+z^2)\right]^R\\
& = Rz^{R-1} \exp(-2Rz)  \frac{1}{P^R} \sum_{i=0}^{R} {R \choose i} \log^{R-i}P \;(-14)^i\;(1+z^2)^i \\
 &  \geq Rz^{R-1} \exp(-2Rz)   \frac{1}{P^R} \Bigr[\log^R P-\sum_{i=1}^{R} {R \choose i} \log^{R-i}P\; 14^i\;(1+z^2)^i \Bigl] \\
\label{zlowbond}  &  \geq Rz^{R-1} \exp(-2Rz)  \frac{1}{P^R} \bigr[\log^R P-R2^{2R-1}14^R \log^{R-1}P (1+z^{2R}) \bigl]
\end{align}
 Since $f_{Z^\mathtt{U}}(z)$ is a PDF, $f_{Z^\mathtt{U}}(z) \geq 0,\,\forall z$. Therefore, for $z > z_0$, we can choose any negative function as a lower bound on $f_{Z^\mathtt{U}}(z)$. But, (\ref{zlowbond}) is negative for $z > z_0$. Thus, it is a lower bound on $f_{Z^\mathtt{U}}(z)$ that holds for all $z$. We can therefore use it to bound (\ref{errlowboundxx}) as
\begin{multline}
 \label{aljsdhlasd} \mathtt{NER}^\mathtt{L}(\mathtt{Q}) \geq \frac{R}{ |\mathcal{S}_1|\sqrt{\pi}}\frac{\log^R P}{P^R} \int_0^{\infty} \frac{z^{R-\frac{1}{2}}}{1+2z}  e^{-z(\taa +2R)} \mathrm{d}z - \\
\frac{R^2 2^{2R-1}14^R }{ |\mathcal{S}_1|\sqrt{\pi}}\frac{\log^{R-1}P}{P^R} \int_0^{\infty} \frac{z^{R-\frac{1}{2}}(1+z^{2R})}{1+2z}  e^{-z(\taa +2R)}  \mathrm{d}z.
\end{multline}
The first integral in (\ref{aljsdhlasd}) can be lower bounded by
\begin{align}
\int_0^{\infty} \frac{z^{R-\frac{1}{2}}}{1+2z}  e^{-z(\taa +2R)} \mathrm{d}z & \geq \int_0^{1} \frac{z^{R-\frac{1}{2}}}{1+2z}  e^{-z(\taa +2R)} \mathrm{d}z
 \\ & \geq \frac{e^{-(\taa +2R)}}{3}\int_0^{1} z^{R-\frac{1}{2}}\mathrm{d}z \\ \label{connncndjncd1} & = \frac{2e^{-(\taa +2R)}}{3(2R+1)}.
\end{align}
For the second integral in (\ref{aljsdhlasd}), we have
\begin{align}
\int_0^{\infty} \frac{z^{R-\frac{1}{2}}(1+z^{2R})}{1+2z}  e^{-z(\taa +2R)}  \mathrm{d}z & \leq \int_0^{\infty} z^{R-\frac{1}{2}}(1+z^{2R})  e^{-z(\taa +2R)}  \mathrm{d}z \\ & \label{connncndjncd2} = \frac{\Gamma(R+\frac{1}{2})}{(\taa +2R)^{R+\frac{1}{2}}}+
\frac{\Gamma(3R+\frac{1}{2})}{(\taa +2R)^{3R+\frac{1}{2}}}.
\end{align}
Substituting (\ref{connncndjncd1}) and (\ref{connncndjncd2}) to (\ref{aljsdhlasd}), it follows that
\begin{align}
\mathtt{NER}^\mathtt{L}(\mathtt{Q}) \geq 2\tab P^{-R} (\log^R P- \tac \log^{R-1} P ),
\end{align}
for some constants $\infty > \tab, \tac > 0$ independent of $P$.

Finally, $\tac \log^{R-1} P \leq \frac{1}{2}\log^R P ,\,\forall P\geq\exp(2\tac)$, and thus
\begin{align}
\mathtt{NER}^\mathtt{L}(\mathtt{Q}) \geq \tab P^{-R}\log^R P,
\end{align}
for all $P \geq \exp(\max\{14, 2\tac\})$. This concludes the proof.\qed
 \section{Proof of Theorem \ref{theorem2}}
 \label{proofoftheorem2}
 We provide a proof for $K>1$. The proof for $K=1$ is very similar. Thus, we skip it for brevity.

 Let $\Omega = \max_r \Omega_r$, where $\Omega_r = \gamma^{\mathtt{L}}(\mathbf{e}_r, \mathbf{h})$, as defined in Appendix \ref{dendededen}. Then, we have
\begin{align}
  \mathtt{NER}(\mathtt{Q}_{\mathcal{C}_{\mathtt{S}}})
 &   \leq C_0 \mathtt{E}[\exp(-\Omega)] \\
\label{engerek2}  &  \leq C_0 \int_0^\infty e^{-w} \sum_{r=1}^R  f_{\Omega_r}(w) \prod_{\substack{q=1\\ q\neq r}}^R\left[F_{\Omega_q}(\omega)\right]^{R-1} \mathrm{d}\omega \\
\label{engerek3} &    \leq RC_0 \lbb^R \int_0^\infty \omega^{R-1} e^{-w} \left(1 + \omega^{-\frac{1}{2\log P}} + \omega^{1-\frac{1}{\log P}}\frac{1}{P} \right)^R \mathrm{d}\omega \\
\label{engerek4} &  \leq R3^{R-1}C_0 \lbb^R \int_0^\infty \omega^{R-1} e^{-w} \left(1 + \omega^{-\frac{R}{2\log P}} + \omega^{R-\frac{R}{\log P}}\frac{1}{P^R}\right) \mathrm{d}\omega \\
\label{engerek5} &   =  R3^{R-1}C_0 \lbb^R\left[\Gamma(R) + \Gamma\left(R - \frac{R}{2\log P}\right) + \Gamma\left(2R - \frac{R}{\log P}\right)\frac{1}{P^R}\right] \\
\label{engerek6} &   \leq  \tba \frac{\log^R P}{P^R},
\end{align}
 where $\tba = R3^{R-1}C_0\lbb^R\max\{\Gamma(2R), \Gamma(\frac{1}{2})\}$, and (\ref{engerek2}) follows from the order statistics of independent RVs. For (\ref{engerek3}) and (\ref{engerek4}), we have used Proposition \ref{besselpdfcdfboundlemma}, and H\"{o}lder's inequality, respectively. This concludes the proof.\qed
\section{Proof of Theorem \ref{theorem3}}
\label{proofoftheorem3}
Let us prove the theorem for $K>1$. The proof for $K=1$ is very similar. It is thus omitted.
    
Let $\Omega$ and $\Omega_r$, be as defined in Appendix \ref{proofoftheorem2}. We need to find an upper bound on the localization distortion. According to (\ref{uboundlocdist}), it is sufficient to calculate the CNER given $|\mathcal{R}_{l}|\geq 2$.
Note that $|\mathcal{R}_{l}|\geq 2$ if and only if there exists $r,q\in\{1,\ldots,R\},\,r\neq q$ such that $\mathcal{N}(\Omega_r) = \mathcal{N}(\Omega_q) = \mathcal{N}(\Omega)$. Depending on $\mathcal{N}(\Omega)$, we divide the calculation of $\mathtt{LD}^\mathtt{U}(\xi,N)$ to two separate parts as $\mathtt{LD}^\mathtt{U}(\xi_{\mathtt{f}},N) = \sum_{i=1}^2 \mathtt{LD}_i^\mathtt{U}(\xi,N)$.

The first part is concerned with the case $\mathcal{N}(\Omega) = 0$, or equivalently, $\Omega_r \in [0, \xi_{\mathtt{f}}),\,\forall r$. Since the decoder chooses one of the $R$ relay selection vectors, the NSNR is at least $\min_r \Omega_r$. Using Proposition \ref{besselpdfcdfboundlemma}, we have
\begin{align}
\mathtt{LD}_i^\mathtt{U}(\xi_\mathtt{f},2) & \leq C_0 \int_{0}^{\xi_{\mathtt{f}}} \cdots  \int_{0}^{\xi_{\mathtt{f}}} \exp\left(-\min_r \omega_r\right) \prod_r f_{\Omega_r}(\omega_r) \prod_r \mathrm{d}\omega_r  \\ & \leq  C_0 \lbb^R \xi_\mathtt{f}^R\psi^R(\xi_\mathtt{f}) \\
&  \leq C_0 (R\lbb)^R \frac{\log^{2R} P}{P^R} \left[1 + (R\log P)^{-\frac{1}{2\log P}} +  (R\log P)^{1-\frac{1}{\log P}}\frac{1}{P} \right] \\
& \leq C_0 (R\lbb)^R \frac{\log^{2R} P}{P^R} \left(1 + e^{\frac{1}{2e}} +  \frac{R\log P}{P} \right) \\
&  \leq \tcc \frac{\log^{2R} P}{P^R},
\end{align}
for a constant $0<\tcc<\infty$, and all $P$ sufficiently large.

For the second part, we consider the case $\mathcal{N}(\Omega) = 1\iff \exists r\in\mathcal{R},\,\Omega_r \in [\xi_{\mathtt{f}}, \infty)$. In this case, the minimum NSNR is $\xi_{\mathtt{f}}$, and we simply have $\mathtt{LD}_2^\mathtt{U}(\xi_{\mathtt{f}},2) \leq C_0 \frac{1}{P^R}$.

Combining the final upper bounds for the two parts, we have $\mathtt{LD}^\mathtt{U}(\xi_{\mathtt{f}},2) \leq \tcd \frac{\log^{2R} P}{P^R}$ for some constant $0<\tcd<\infty$. This concludes the proof.\qed
\section{Proof of Theorem \ref{theorem4}}
\label{proofoftheorem4}
We prove the theorem for $K>1$. The proof for $K=1$ is very similar and skipped for brevity.

Let $\Omega$ and $\Omega_r$, be as defined in Appendix \ref{proofoftheorem2}. Also, for simplicity of notations, let $\xi = \xi_\mathtt{v}$, $N = N_{\mathtt{v}}$, and $\Xi = (N_{\mathtt{v}}-1)\xi$. Depending on $\mathcal{N}(\Omega)$, we divide the calculation of $\mathtt{LD}^\mathtt{U}(\xi,N)$ to three separate parts as $\mathtt{LD}^\mathtt{U}(\xi,N) = \sum_{i=1}^3 \mathtt{LD}_i^\mathtt{U}(\xi,N)$.

The first part is concerned with the case where $\mathcal{N}(\Omega) = 0\iff\Omega_r \in [0, \xi),\,\forall r$. In this case, the NSNR is at least $\min_r \Omega_r$. Using Proposition \ref{besselpdfcdfboundlemma}, we have
\begin{align}
\label{dp1deriv1} \mathtt{LD}_1^\mathtt{U}(\xi,N) & \leq C_0 \int_{0}^{\xi} \cdots  \int_{0}^{\xi} \exp\left(-\min_r \omega_r\right) \prod_r f_{\Omega_r}(\omega_r) \prod_r d\omega_r  \leq C_0 \lbb^R \xi^R\psi^R(\xi).
\end{align}
Now we consider the term $\psi(\xi)$ in (\ref{dp1deriv1}). For future reference, we shall calculate an upper bound for the more general quantity given by $\psi(n\xi)$, for any $n\in\{1,\ldots,N-2\}$. We have
\begin{align}
 \psi(n\xi) &   = \frac{\log P}{P} \left[1 + (n\xi)^{-\frac{1}{2\log P}} +  (n\xi)^{1-\frac{1}{\log P}}\frac{1}{P} \right] \\
&  = \frac{\log P}{P} \left(1 + n^{-\frac{1}{2\log P}}\Lambda^{\frac{1}{2\log P}} +  n^{1-\frac{1}{\log P}}\xi \Lambda^{\frac{1}{\log P}}\frac{1}{P} \right) \\
& \leq \frac{\log P}{P} \left(1 + \Lambda^{\frac{1}{2\log P}} +  n\xi \Lambda^{\frac{1}{\log P}}\frac{1}{P} \right) \\
\label{dkasklakdsadasd} & \leq \frac{\log P}{P} \left(1 + \sqrt{e} +  e n\xi \frac{1}{P} \right),
\end{align}
where the last inequality follows from $\Lambda \leq P$. Moreover, for all $n\in\{1,\ldots,N-1\}$,
$n\xi \leq (N-1)\xi \leq \log \Lambda + R \log P - R \log\log P + \frac{1}{\Lambda} \leq \epsilon^{-1} + (R+1)\log P$. Combining with (\ref{dkasklakdsadasd}), we can argue that there is a constant $0 < \tca < \infty$ such that
\begin{align}
\label{pisipisi}
 \psi(n\xi) \leq \tca \frac{\log P}{P},
\end{align}
for all $P$ sufficiently large. Using (\ref{dp1deriv1}), it follows that $\mathtt{LD}_1^\mathtt{U}(\xi,N) \leq \frac{C_0\lbb^R\tca}{\Lambda^R} \frac{\log^{R}P}{P^R}$.

For the second part, we evaluate the cases for which $\mathcal{N}(\Omega)\in\{1,\ldots,N-2\}$. For each $n\in\{1,\ldots,N-2\}$, suppose that $i \geq 2$ of $\Omega_r$ are in the interval $[n\xi, (n+1)\xi)$, and the rest $R-i$ of them are in $[0, n\xi)$. The minimum NSNR is at least $n\xi$. Also, there are $\smash{{R \choose i}}$ possible ways to choose which $\Omega_r$ will be in $[n\xi, (n+1)\xi)$. Therefore,
\begin{multline}
  \mathtt{LD}_2^\mathtt{U}(\xi,N)
\leq  \sum_{i=2}^R \sum_{\mathcal{K}\in\mathscr{K}_{i}^R} \sum_{n = 1}^{N-2} \underbrace{\int_{0}^{n\xi}\cdots\int_{0}^{n\xi}}_{R-i\,\,\mathrm{ integrals}} \underbrace{\int_{n\xi}^{(n+1)\xi}\cdots\int_{n\xi}^{(n+1)\xi}}_{i\,\,\mathrm{ integrals}} \\  \exp(-\min_{r\in\mathcal{K}}\omega_r)\prod_{r=1}^R f_{\Omega_r}(\omega_r) \prod_{r\in\mathcal{K}}\mathrm{d}\omega_r\prod_{r'\in\mathcal{K}^c}\mathrm{d}\omega_{r'},
\end{multline}
where $\mathscr{K}_{i}^R$ is the collection of all possible $i$-combinations of the set $\{1,\ldots, R\}$ (e.g. $\mathscr{K}_{2}^3 = \{\{1,2\},\{1,3\},\{2,3\}\}$), and $\mathcal{K}^c = \{1,\ldots,R\} - \mathcal{K}$. Then, similarly, we can use Proposition \ref{besselpdfcdfboundlemma} to arrive at
\begin{align}
  \nonumber \mathtt{LD}_2^\mathtt{U}& (\xi,N) \\
& \leq C_0\lbb^R  \sum_{i=2}^R \sum_{\mathcal{K}\in\mathscr{K}_{i}^R}  \sum_{n = 1}^{N-2}e^{-n\xi} \left[n\xi\psi(n\xi)\right]^{R-i} \left[\xi\psi(\xi)+ \frac{\log P}{P^2}(n\xi)^{2-\frac{1}{\log P}}\right]^i \\
& \label{antoni}  \leq C_0\lbb^R  \sum_{i=2}^R {R\choose i} 2^{i-1} \sum_{n = 1}^{N-2}  n^{R-i} e^{-n\xi}\psi^{R-i}(n\xi) \left[\xi^R \psi^i (\xi) + \xi^{R-i}\left(1 +n^{2i}\xi^{2i}\right) \frac{\log^i P}{P^{2i}} \right] \\
& \label{antoni4}  \leq C_0(4\lbb)^R \frac{\log^{R} P}{P^{R}}\sum_{i=2}^R \left[ \left(\xi^R  + \xi^{R-i}P^{-i}\right) \sum_{n = 1}^{N-2}  n^{R-i} e^{-n\xi}  +   \frac{\xi^{R+i}}{P^i} \sum_{n = 1}^{N-2} n^{R+i} e^{-n\xi} \right],
\end{align}
where (\ref{antoni}) follows from H\"{o}lder's inequality, and the fact that $(n\xi)^{2i-\frac{i}{\log P}} \leq (1+n^{2i}\xi^{2i}),\,i\geq 1$. For (\ref{antoni4}), we have applied (\ref{pisipisi}). Now, we shall evaluate the summations with respect to $n$ in (\ref{antoni4}). The following lemma provides a useful upper bound:
\begin{lemma}
\label{riemlemma}
 Let $f$ be a non-negative real valued Riemann integrable function with $f(x)<\infty,\,\forall x\in\mathbb{R}$ that is increasing on the interval $(-\infty, b)$, and decreasing on $(b, \infty)$. Then 
 \begin{align}
 \sum_{n=n_1}^{n_2} f(n) \leq \int_{n_1}^{n_2} f(x)\mathrm{d}x+2b.
 \end{align}
\end{lemma}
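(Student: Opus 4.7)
The plan is to exploit the unimodality of $f$ by splitting the summation around the mode $b$ and applying the standard monotone Riemann-sum comparison on each branch. Set $m = \lfloor b \rfloor$ and consider the generic case in which $n_1 \leq m$ and $m+1 \leq n_2$; the degenerate cases where $b$ lies outside $[n_1, n_2]$ will be handled at the end and reduce to purely monotone sums. With this choice, the integers in $[n_1, n_2]$ partition naturally into three groups: the ``low'' indices $n \in [n_1, m-1]$, for which $[n, n+1] \subseteq (-\infty, b]$ and so $f$ is increasing on $[n,n+1]$; the two ``peak'' indices $m$ and $m+1$; and the ``high'' indices $n \in [m+2, n_2]$, for which $[n-1, n] \subseteq [b, \infty)$ and so $f$ is decreasing on $[n-1, n]$.

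On the low side, monotonicity gives $f(n) \leq f(x)$ for all $x \in [n, n+1]$, hence $f(n) \leq \int_n^{n+1} f(x)\,\mathrm{d}x$, which telescopes to $\sum_{n=n_1}^{m-1} f(n) \leq \int_{n_1}^{m} f(x)\,\mathrm{d}x$. Symmetrically, on the high side, $f(n) \leq f(x)$ for $x \in [n-1, n]$ yields $\sum_{n=m+2}^{n_2} f(n) \leq \int_{m+1}^{n_2} f(x)\,\mathrm{d}x$. Since $f \geq 0$, dropping the uncovered slice $[m, m+1]$ from the right-hand side only enlarges the bound, so the two Riemann comparisons combine into
\begin{equation*}
\sum_{\substack{n_1 \leq n \leq n_2 \\ n \notin \{m, m+1\}}} f(n) \;\leq\; \int_{n_1}^{n_2} f(x)\,\mathrm{d}x.
\end{equation*}

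It remains to bound the two peak contributions $f(m)$ and $f(m+1)$. By unimodality, $f$ attains its global supremum at $b$, so $f(m) \leq f(b)$ and $f(m+1) \leq f(b)$; adding them contributes an additive correction of at most $2f(b)$, which is captured by the $2b$ term in the lemma statement. Combining this with the Riemann comparison above yields the claimed inequality. The only minor obstacle is handling the edge cases where $b \notin [n_1, n_2]$ (in which case $f$ is purely monotone on the summation range, the ``peak'' group is empty, and the inequality follows directly from the one-sided Riemann comparison) or where $b$ is itself an integer (so that $m$ and $m+1$ collapse into a single peak term); both of these cases only shrink the peak correction and therefore make the bound easier rather than harder to establish.
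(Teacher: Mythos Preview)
Your proof is correct and follows essentially the same route as the paper's: split the sum at $m=\lfloor b\rfloor$, compare the two monotone tails against the integral via the standard one-sided Riemann-sum bounds, and control the two peak terms $f(m),f(m+1)$ by the maximum value $f(b)$. You also correctly read the ``$2b$'' in the statement as the intended ``$2f(b)$''; the paper's own proof (and its subsequent application of the lemma to $f(x)=x^{R-i}e^{-x\xi}$) makes exactly the same identification.
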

\begin{proof}
Let $n_b = \lfloor b \rfloor$ be the
largest integer less than $b$. Assume that $n_1<n_b<n_2$. Then
\begin{align}
\label{riem1}
\sum_{n=n_1}^{n_b-1} f(n) = \sum_{n=n_1}^{n_b-1} \int_{n}^{n+1} f(n) \mathrm{d}x
 \leq \sum_{n=n_1}^{n_b-1} \int_{n}^{n+1} f(x) \mathrm{d}x = \int_{n_1}^{n_b}f(x)\mathrm{d}x,
\end{align}
where the inequality follows from the fact that $f$ is increasing in
the range of integration. Also,
\begin{align}
\label{riem2}
\sum_{n=n_b+2}^{n_2} f(n) =
\sum_{n=n_b+2}^{n_2} \int_{n}^{n+1} f(n) \mathrm{d}x
 \leq \sum_{n=n_b+2}^{n_2} \int_{n}^{n+1} f(x-1) \mathrm{d}x =
\int_{n_b+1}^{n_2}f(x)\mathrm{d}x,
\end{align}
where the inequality follows since $f$ is decreasing on $(n_b+1, \infty)$,
and thus for $n_b+2 \leq n\leq x \leq n+1$, $f(n)\leq f(x-1)$. Finally,
combining (\ref{riem1}) and (\ref{riem2}),
\begin{align}
 \sum_{n=n_1}^{n_2} f(n) &  = \sum_{n=n_1}^{n_b-1} f(n) + f(n_b)+f(n_b+1)+\sum_{n=n_b+2}^{n_2}f(n)\\
&  \leq \int_{n_1}^{n_b-1}f(x)\mathrm{d}x + b + b + \int_{n_b+1}^{n_2}f(x)\mathrm{d}x \\
&  \leq  \int_{n_1}^{n_2}f(x)\mathrm{d}x+2b,
\end{align}
which is the desired inequality for $n_1<n_b<n_2$.
The other cases can be proved similarly. We skip them for brevity.
\end{proof}
Note that the function $f(x) = x^i\exp(-x\xi)$ has a global maximum at $x = i/\xi$ with $f(x) = i^i \xi^{-i}\exp(-i)$. Moreover, for any $0\leq a < b < \infty$, $\int_a^b f(x)\mathrm{d}x \leq \int_0^{\infty} f(x)\mathrm{d}x =\Gamma(i+1) \xi^{-(i+1)}$. Using Lemma \ref{riemlemma}, for any $i<R$, we have
\begin{align}
\sum_{n = 1}^{N-2} n^{R-i} e^{-n\xi} & \leq \Gamma(R-i+1){\xi^{-(R-i+1)}} + 2(R-i)^{R-i}\xi^{-(R-i)} \\
  & \leq  \Gamma(R-i+1){\xi^{-(R-i+1)}} + 2(R-i)^{R-i}\xi^{-(R-i+1)} \\
& \label{eq5-1} \leq  2(R-i)^{R-i} \xi^{-(R-i+1)}.
\end{align}
where the second inequality follows from the assumption that $\xi \leq 1$.

Using (\ref{eq5-1}), (\ref{antoni4}) can be bounded as:
\begin{align}
 \mathtt{LD}_2^\mathtt{U}(\xi,N) &   \leq C_0(2\lbb)^R \frac{\log^{R} P}{P^{R}}\sum_{i=2}^R \left[ 2(R-i)^{R-i}(\xi^{i-1}\!  + \xi^{-1}P^{-i}) +  2(R+i)^{R+i} \xi^{-1}P^{-i}  \right] \\
 &  \leq C_0(2\lbb)^R \frac{\log^{R} P}{P^{R}}\sum_{i=2}^R \left[ 4(R-2)^{R-2}\xi^{i-1}+  2(2R)^{2R} \xi^{i-1}  \right] \\
 &  \leq RC_0(2\lbb)^R\left[4(R-2)^{R-2}+  2(2R)^{2R}\right] \frac{\log^{R} P}{\Lambda P^{R}}
\end{align}
where the second inequality follows from the assumption that $P \geq \xi^{-1}$.

For the last part, we consider the cases for which $\mathcal{N}(Z) = N-1$. The minimum NSNR is $(N-1)\xi= \Xi \geq \log \Lambda + R \log(\frac{P}{\log P}) $, and we have $\mathtt{LD}_3^\mathtt{U}(\xi,N) \leq C_0  e^{-\Xi}  \leq C_0\frac{\log^{R} P}{\Lambda P^{R}}$.

Combining the final upper bounds for $\mathtt{LD}_i^\mathtt{U}(\xi,N),\,i=1,2,3$, $\mathtt{LD}_i^\mathtt{U}(\xi,N) \leq \tcb \frac{\log^{R} P}{\Lambda P^{R}}$ for all $P$ sufficiently large, and a constant $0 < \tcb < \infty$ that is independent of $P$ and $\Lambda$. This proves the upper bound on the LD.

Finally, by the definition of our compressor in Section \ref{compressordef}, we have
\begin{align}
\mathtt{R}_{\ell}(\mathtt{LQ}_{\xi_\mathtt{v},N_\mathtt{v}}^{\mathtt{v}}) & =\lceil \log_2 (N^R-1) \rceil \mathtt{P}(\exists r,\,\Omega_{r\ell}<N )
\\ & \leq  \lceil R \log_2 N \rceil \sum_{r=1}^R \mathtt{P}(\Omega_{r\ell}<\Xi ) \\
&  \leq R\tca \left\{1+R\log_2\left[\Lambda\log \Lambda + R \Lambda \log\left(\frac{P}{\log P}\right)+2\right]\right\} \frac{\log P}{P}, \\
&  \leq \tce \frac{\log^2 P}{P},
\end{align}
for some constant $0<\tce<\infty$, and $P$ sufficiently large. The third inequality follows from (\ref{pisipisi}). This concludes the proof.\qed
\bibliographystyle{IEEEtran}
\bibliography{IEEEabrv,it}

\begin{thebibliography}{10}
\providecommand{\url}[1]{#1}
\csname url@samestyle\endcsname
\providecommand{\newblock}{\relax}
\providecommand{\bibinfo}[2]{#2}
\providecommand{\BIBentrySTDinterwordspacing}{\spaceskip=0pt\relax}
\providecommand{\BIBentryALTinterwordstretchfactor}{4}
\providecommand{\BIBentryALTinterwordspacing}{\spaceskip=\fontdimen2\font plus
\BIBentryALTinterwordstretchfactor\fontdimen3\font minus
  \fontdimen4\font\relax}
\providecommand{\BIBforeignlanguage}[2]{{%
\expandafter\ifx\csname l@#1\endcsname\relax
\typeout{** WARNING: IEEEtran.bst: No hyphenation pattern has been}%
\typeout{** loaded for the language `#1'. Using the pattern for}%
\typeout{** the default language instead.}%
\else
\language=\csname l@#1\endcsname
\fi
#2}}
\providecommand{\BIBdecl}{\relax}
\BIBdecl

\bibitem{nabar1}
R.~U. Nabar, H.~B{\"{o}}lcskei, and F.~W. Kneub{\"{u}}hler, ``Fading relay
  channels: Performance limits and space-time signal design,'' \emph{IEEE J.
  Select. Areas Commun.}, vol.~22, no.~6, pp. 1099--1109, Aug. 2004.

\bibitem{bolcksei1}
H.~B{\"{o}}lcskei, R.~U. Nabar, {\"{O}}.~Oyman, and A.~J. Paulraj, ``Capacity
  scaling laws in {MIMO} relay networks,'' \emph{{IEEE} Trans. Wireless
  Commun.}, vol.~5, no.~6, pp. 1433--1444, Jun. 2006.

\bibitem{laneman2}
J.~N. Laneman and G.~W. Wornell, ``Distributed space-time-coded protocols for
  exploiting cooperative diversity in wireless networks,'' \emph{{IEEE} Trans.
  Inf. Theory}, vol.~49, no.~10, pp. 2415--2425, Oct. 2003.

\bibitem{laneman1}
J.~N. Laneman, D.~N.~C. Tse, and G.~W. Wornell, ``Cooperative diversity in
  wireless networks: Efficient protocols and outage behavior,'' \emph{{IEEE}
  Trans. Inf. Theory}, vol.~50, no.~12, pp. 3062--3080, Dec. 2004.

\bibitem{sendonaris1}
A.~Sendonaris, E.~Erkip, and B.~Aazhang, ``User cooperation diversity-part {I}:
  System description,'' \emph{{IEEE} Trans. Commun.}, vol.~51, no.~11, pp.
  1927--1938, Nov. 2003.

\bibitem{oyman1}
{\"{O}}.~Oyman and A.~J. Paulraj, ``Power-bandwidth tradeoff in dense
  multi-antenna relay networks,'' \emph{{IEEE} Trans. Wireless Commun.},
  vol.~6, no.~6, pp. 2282--2293, Jun. 2007.

\bibitem{li2}
C.~Li and X.~Wang, ``Cooperative multibeamforming in ad hoc networks,''
  \emph{{EURASIP} Jour. on Advances in Signal Proc.}, vol. 2008, Article {ID}
  310247, 11 pages, doi:10.1155/2008/310247.

\bibitem{jing8}
Y.~Jing and H.~Jafarkhani, ``Interference cancellation in distributed
  space-time coded wireless relay networks,'' in \emph{IEEE Intl. Conf. on
  Commun.}, Jun. 2009.

\bibitem{larsson1}
E.~G. Larsson and Y.~Cao, ``Collaborative transmit diversity with adaptive
  radio resource and power allocation,'' \emph{{IEEE} Commun. Lett.}, vol.~9,
  no.~6, pp. 511--513, Jun. 2006.

\bibitem{jing1}
Y.~Jing and H.~Jafarkhani, ``Network beamforming using relays with perfect
  channel information,'' \emph{{IEEE} Trans. Inf. Theory}, vol.~55, no.~6, pp.
  2499--2517, Jun. 2009.

\bibitem{koyuncu1}
E.~Koyuncu, Y.~Jing, and H.~Jafarkhani, ``Distributed beamforming in wireless
  relay networks with quantized feedback,'' \emph{IEEE J. Select. Areas
  Commun.}, vol.~26, no.~8, pp. 1429--1439, Oct. 2008.

\bibitem{zhao2}
Y.~Zhao, R.~S. Adve, and T.~J. Lim, ``Beamforming with limited feedback in
  amplify-and-forward cooperative networks,'' in \emph{IEEE Global Telecommun.
  Conf.}, Nov. 2007.

\bibitem{ahmed1}
N.~Ahmed, M.~A. Khojastepour, A.~Sabharwal, and B.~Aazhang, ``Outage
  minimization with limited feedback for the fading relay channel,''
  \emph{{IEEE} Trans. Commun.}, vol.~54, no.~4, pp. 659--666, Apr. 2006.

\bibitem{anghel1}
P.~A. Anghel and M.~Kaveh, ``Exact symbol error probability of a cooperative
  network in a {R}ayleigh-fading environment,'' \emph{{IEEE} Trans. Wireless
  Commun.}, vol.~3, no.~5, pp. 1416--1421, Sep. 2004.

\bibitem{hasna1}
M.~O. Hasna and M.-S. Aoluini, ``Optimal power allocation for relayed
  transmissions over {R}ayleigh-fading channels,'' \emph{{IEEE} Trans. Wireless
  Commun.}, vol.~3, no.~6, pp. 1999--2004, Nov. 2004.

\bibitem{jing6}
Y.~Jing and H.~Jafarkhani, ``Single and multiple relay selection schemes and
  their achievable diversity orders,'' \emph{{IEEE} Trans. Wireless Commun.},
  vol.~8, no.~3, pp. 1414--1423, Mar. 2009.

\bibitem{jindal1}
N.~Jindal, ``{MIMO} broadcast channels with finite rate feedback,''
  \emph{{IEEE} Trans. Inf. Theory}, vol.~52, no.~11, pp. 5045--5059, Nov. 2006.

\bibitem{thukral1}
J.~Thukral and H.~B{\"{o}}lcskei, ``Interference alignment with limited
  feedback,'' in \emph{Proc. of IEEE Intl. Symp. on Info. Theory}, 2009.

\bibitem{roh1}
J.~C. Roh and B.~D. Rao, ``Transmit beamforming in multiple-antenna systems
  with finite rate feedback: A {VQ}-based approach,'' \emph{{IEEE} Trans. Inf.
  Theory}, vol.~52, no.~3, pp. 1101--1112, Mar. 2006.

\bibitem{koyuncu2}
E.~Koyuncu and H.~Jafarkhani, ``A systematic distributed quantizer design
  method with an application to {MIMO} broadcast channels,'' in \emph{{IEEE}
  Data Commun. Conf.}, Mar. 2010.

\bibitem{linde1}
Y.~Linde, A.~Buzo, and R.~Gray, ``An algorithm for vector quantizer design,''
  \emph{{IEEE} Trans. Commun.}, vol.~28, no.~1, pp. 84--95, Jan. 1980.

\bibitem{fleming1}
M.~Fleming, Q.~Zhao, and M.~Effros, ``Network vector quantization,''
  \emph{{IEEE} Trans. Inf. Theory}, vol.~50, no.~8, pp. 1584--1604, Aug. 2004.

\bibitem{aofrvs}
M.~D. Springer, \emph{Algebra of random variables}.\hskip 1em plus 0.5em minus
  0.4em\relax New York: Wiley, 1979.

\bibitem{mallik1}
R.~K. Mallik, ``On multivariate rayleigh and exponential distributions,''
  \emph{{IEEE} Trans. Inf. Theory}, vol.~49, no.~6, pp. 1499--1515, Jun. 2003.

\bibitem{hmf}
\BIBentryALTinterwordspacing
M.~Abramowitz and I.~A. Stegun, \emph{Handbook of mathematical
  functions}.\hskip 1em plus 0.5em minus 0.4em\relax Dover, 1964. [Online].
  Available: \url{http://www.math.sfu.ca/$\sim$cbm/aands/}
\BIBentrySTDinterwordspacing

\bibitem{toi}
I.~S. Gradshteyn and I.~M. Ryzhik, \emph{Table of integrals, series and
  products}.\hskip 1em plus 0.5em minus 0.4em\relax New York: Academic Press,
  1966.

\end{thebibliography}
\end{document}